\newtheorem{theorem}{Theorem}[section]
\newtheorem{lemma}[theorem]{Lemma}
\newenvironment{proof}[1][Proof]{\begin{trivlist}
\item[\hskip \labelsep {\bfseries #1}]}{\qed \end{trivlist}}
\newenvironment{definition}[1][Definition]{\begin{trivlist}
\item[\hskip \labelsep {\bfseries #1}]}{\end{trivlist}}
\newcommand\qed{$\blacksquare$}
\DeclareMathOperator{\rank}{\operatorname{rank}}
\DeclareMathOperator{\trace}{\mathrm{tr} \, }
\DeclareMathOperator{\diag}{\mathrm{diag} \, }
\DeclareMathOperator{\diff}{ \, \mathrm{d} \!}
\newsavebox\myboxA
\newsavebox\myboxB
\newlength\mylenA
\newcommand{\oline}[1]{\mkern 1.5mu\overline{\mkern-1.5mu#1\mkern-1.5mu}\mkern 1.5mu}
\def\mA{\mbox{$\mathbf{A}$}}
\def\mB{\mbox{$\mathbf{B}$}}
\def\mBc{\mbox{$\oline{\mathbf{B}}$}}
\def\mD{\mbox{$\mathbf{D}$}}
\def\mDc{\mbox{$\oline{\mathbf{D}}$}}
\def\mG{\mbox{$\mathbf{G}$}}
\def\mI{\mbox{$\mathbf{I}$}}
\def\mL{\mbox{$\mathbf{L}$}}
\def\mQ{\mbox{$\mathbf{Q}$}}
\def\mT{\mbox{$\mathbf{T}$}}
\def\mU{\mbox{$\mathbf{U}$}}
\def\mY{\mbox{$\mathbf{Y}$}}
\def\mSigma{\mbox{$\mathbf{\Sigma} \kern .08em$}}
\def\mLambda{\mbox{$\mathbf{\Lambda} \kern .08em$}}
\newcommand{\A}{{\cal A}}
\newcommand{\E}{{\cal E}}
\newcommand{\G}{{\cal G}}
\newcommand{\I}{{\cal I}}
\newcommand{\V}{{\cal V}}
\newcommand{\abs}[1]{\left\vert#1\right\vert}
\def\N{\text{\mbox{${\cal N}$}}}
\def\S{\text{\mbox{${\cal S}$}}}
\def\F{\text{\mbox{${\cal F}$}}}
\def\H{\text{\mbox{${\cal H}$}}}
\def\V{\text{\mbox{${\cal V}$}}}
\def\E{\text{\mbox{${\cal E}$}}}
\def\G{\text{\mbox{${\cal G}$}}}
\def\B{\text{\mbox{${\cal B}$}}}
\def\D{\text{\mbox{${\cal D}$}}}
\def\b0{\text{\mbox{\boldmath $0$}}}
\def\ba{\text{\mbox{\boldmath $a$}}}
\def\bb{\text{\mbox{\boldmath $b$}}}
\def\bff{\text{\mbox{\boldmath $f$}}}
\def\bg{\text{\mbox{\boldmath $g$}}}
\def\bn{\text{\mbox{\boldmath $n$}}}
\def\br{\text{\mbox{\boldmath $r$}}}
\def\bs{\text{\mbox{\boldmath $s$}}}
\def\bu{\text{\mbox{\boldmath $u$}}}
\def\bx{\text{\mbox{\boldmath $x$}}}
\def\bxs{\text{\mbox{$\boldsymbol{x}_{\cal S}$}}}
\def\by{\text{\mbox{\boldmath $y$}}}
\def\buno{\text{\mbox{\boldmath $1$}}}
\def\bdelta{\text{\mbox{\boldmath $\delta$}}}
\def\bpsi{\text{\mbox{\boldmath $\psi$}}}
\let\svthefootnote\thefootnote
\begin{document}

\title{Signals on Graphs: \\ Uncertainty Principle and Sampling}

\author{Mikhail~Tsitsvero,
        Sergio~Barbarossa,~\IEEEmembership{Fellow,~IEEE,}
        and~Paolo~Di~Lorenzo,~\IEEEmembership{Member,~IEEE}% <-this % stops a space
\thanks{M. Tsitsvero, S. Barbarossa are with the Department of
Information Eng., Electronics and Telecommunications, Sapienza University of Rome, Rome, Italy,
e-mail: tsitsvero@gmail.com, sergio.barbarossa@uniroma1.it

P. Di Lorenzo is with the Department of Engineering, University of Perugia, Perugia, Italy,
e-mail: paolo.dilorenzo@unipg.it}}

\maketitle

\begin{abstract}
In many applications, the observations can be represented as a signal defined over the vertices of a graph. The analysis of such signals requires the extension of standard signal processing tools. In this work, first, we provide a class of graph signals that are maximally concentrated on the graph domain and on its dual. Then, building on this framework, we derive an uncertainty principle for graph signals and illustrate the conditions for the recovery of band-limited signals from a subset of samples. We show an interesting link between uncertainty principle and sampling and propose alternative signal recovery algorithms, including a generalization to frame-based reconstruction methods. After showing that the performance of signal recovery algorithms is significantly affected by the location of samples, we suggest and compare a few alternative sampling strategies. Finally, we provide the conditions for perfect recovery of a useful signal corrupted by sparse noise, showing that this problem is also intrinsically related to vertex-frequency localization properties.
\end{abstract}

\begin{IEEEkeywords}
signals on graphs, uncertainty principle, sampling, sparse noise, frames
\end{IEEEkeywords}

% For peer review papers, you can put extra information on the cover
% page as needed:
% \ifCLASSOPTIONpeerreview
% \begin{center} \bfseries EDICS Category: 3-BBND \end{center}
% \fi
%
% For peerreview papers, this IEEEtran command inserts a page break and
% creates the second title. It will be ignored for other modes.
\IEEEpeerreviewmaketitle

\section{Introduction}
\label{sec:intro}
\IEEEPARstart{I}{n} many applications, from sensor to social networks, transportation systems, gene regulatory networks or big data, the signals of interest are defined over the vertices of a graph \cite{shuman2013emerging}, \cite{sandryhaila2014big}. 
Over the last few years, a series of papers produced a significant advancement in the development of tools for the analysis of signals defined over a graph, or graph signals for short \cite{shuman2013emerging}, \cite{pesenson2010sampling}, \cite{sandryhaila2013discrete}. One of the unique features in graph signal processing is that the analysis tools come to depend on the graph topology. This paves the way to a plethora of methods, each emphasizing different aspects of the problem. A central role is played by spectral analysis of graph signals, which passes through the introduction of the Graph Fourier Transform (GFT). 
Alternative definitions of GFT have been proposed, see, e.g., \cite{shuman2013emerging,sandryhaila2014big, pesenson2010sampling, pesenson2008sampling, sandryhaila2013discrete, zhu2012approximating, chen2015discrete}, looking at the problem from  different perspectives: \cite{shuman2013emerging}, \cite{pesenson2008sampling}, \cite{zhu2012approximating} apply to {\it undirected} graphs and build on the spectral clustering properties of the Laplacian eigenvectors and the minimization of the $\ell_2$-norm graph total variation; \cite{sandryhaila2013discrete}, \cite{chen2015discrete} define a GFT for {\it directed} graphs, building on  the interpretation of the adjacency operator as the graph shift operator, which lies at the heart of all linear shift-invariant filtering methods for graph signals \cite{Puschel1}, \cite{Puschel2}.\let\thefootnote\relax\footnote{
Part of this work was presented at the 23-rd European Signal Processing Conf. (EUSIPCO), Sep. 2015 \cite{TsitsveroEusipco15} and at the 49-th Asilomar Conf. on Signals, Systems and Computers, Nov. 2015 \cite{Tsitsvero2015}.}
\addtocounter{footnote}{-1}\let\thefootnote\svthefootnote Building on  \cite{chung2005laplacians}, one could also introduce a GFT for directed graphs based on the eigendecomposition of the {\it modified} Laplacian for directed graphs introduced in \cite{chung2005laplacians}.

After the introduction of the GFT,  an uncertainty principle for graph signals was derived in \cite{agaskar2013spectral} and, more recently, in \cite{pasdeloup2015toward}, \cite{benedettograph}, \cite{koprowski2015finite}. The aim of these works was to establish a link between the spread of a signal on the vertices of the graph and the spread of its spectrum, as defined by the GFT, on the dual domain. The further fundamental contribution was the formulation of a sampling theory aimed at finding the conditions for recovering a graph signal from a subset of samples: A seminal contribution was given in \cite{pesenson2008sampling}, later extended in \cite{narang2013signal}, \cite{anis2014towards} and, very recently, in  \cite{chen2015discrete}, \cite{wang2014local}, \cite{marquez2015}.
In the following, after introducing the notation, we briefly recall the background of graph signal processing and then we highlight the specific contributions of this paper.

\subsection{Notation and Background}
We consider a graph $\G = (\V, \E)$ consisting of a set of $N$ nodes $\V = \{1,2,..., N\}$, along with a set of weighted edges $\E=\{a_{ij}\}_{i, j \in \V}$, such that $a_{ij}>0$, if there is a link from node $j$ to node $i$, or $a_{ij}=0$, otherwise. The symbol $|\S|$ denotes the cardinality of set $\S$, i.e., the number of elements of $\S$.  The adjacency matrix $\mA$ of a graph is the collection of all the weights $a_{ij}, i, j = 1, \ldots, N$.  The degree of node $i$ is $k_i:=\sum_{j=1}^{N}a_{ij}$. The degree matrix is a diagonal matrix having the node degrees on its diagonal: $\mathbf{K} = \diag \{ k_1, k_2, ... , k_N \}$.  
The combinatorial Laplacian matrix is defined as $\mathbf{L} = \mathbf{K}-\mathbf{A}$.
In the literature, it is also common to use the normalized graph Laplacian matrix $\mathbf{\mathcal{L}}=\mathbf{K}^{-1/2}\mathbf{L}\mathbf{K}^{-1/2}$. A signal $\bx$ over a graph $\G$ is defined as a mapping from the vertex set to the set of complex numbers, i.e. $\bx: \V \rightarrow \mathbb{C}$. We denote by $\| \cdot \|$  the $\ell_2$-norm of a signal, i.e. $ \| \bx \|^2 = \sum_{i \in \V} \abs{x_i}^2 $. We recall now the basic background for better clarifying the contributions of our work. 

Let us introduce the eigen-decomposition of the Laplacian matrix
\begin{equation}
\label{eq:L=sum xi ui}
\mathbf{L} = \mathbf{U} \mathbf{\Xi} \mathbf{U}^* = \sum_{i=1}^N \xi_i \bu_i \bu^*_i,
\end{equation}
where $\mathbf{\Xi}$ is a diagonal matrix with non-negative real eigenvalues $\{ \xi_i \}$ on its diagonal and $\{\bu_i \}$, $i=1, \ldots, N$, are the real-valued orthonormal eigenvectors; the symbol $(\cdot)^*$ denotes conjugate transpose. The Graph Fourier Transform $\mathbf{\hat{\bx}}$ of a signal $\bx$ defined over an undirected graph has been defined in \cite{pesenson2008sampling},  \cite{pesenson2010sampling}, \cite{shuman2013emerging}, \cite{zhu2012approximating},
as 
\begin{equation}
\label{eq::gft}
\mathbf{\hat{\bx}} = \mathbf{U}^* \bx
\end{equation}
where $\mU$ is the unitary matrix whose columns are the Laplacian eigenvectors.
One of the motivations for projecting the signal $\bx$ onto the subspace spanned by the eigenvectors of $\mL$, as in (\ref{eq::gft}), is that 
these eigenvectors encode some of the graph topological features. For example, they are known for exhibiting spectral clustering properties  \cite{Chung1997}, \cite{von2007tutorial}. Hence, the GFT defined in (\ref{eq::gft}) is useful for emphasizing clustered signal components, i.e. signals that are smooth within a cluster, but are allowed to vary arbitrarily across different clusters. 
In this work, we assume the GFT to be defined as in (\ref{eq::gft}), where $\mathbf{U}$ is only required to be a unitary matrix. In most numerical examples we assume $\mathbf{U}$ to be composed by the eigenvectors of the Laplacian matrix, but all derivations are not restricted to that choice. This means that all theoretical findings are valid for any mapping from primal to dual domain described by a unitary operator $\mU$. 
Given a subset of vertices $\S \subseteq \V$, 
we define a vertex-limiting operator as a diagonal matrix  $\mathbf{D}_{\S}$ such that
\begin{equation}
\label{D}
\mathbf{D}_{\S} = {\rm Diag}\{\buno_{\S}\},
\end{equation}
where $\buno_{\S}$ is the set indicator vector, whose $i$-th entry is equal to one, if  $i \in \S$, and zero otherwise. Similarly, given the unitary matrix $\mU$ used in (\ref{eq::gft}), and a subset of indices $\F \subseteq \V^*$, where $\V^* = \{1, \ldots,N\}$ denotes the set of all frequency indices, we introduce the operator
\begin{equation}
\label{lowpass_operator}
\mB_{\F} = \mathbf{U \Sigma_{\F} U}^*,
\end{equation}
where $\mathbf{\Sigma_{\F}}$ is a diagonal matrix defined as $\mathbf{\Sigma_{\F}}= {\rm Diag}\{\buno_{\F}\}$. The role of $\mB_{\F}$ is to project a vector $\bx$ onto the subspace spanned by the columns of $\mU$ whose indices belong to $\F$. 
It is immediate to check that both matrices $\mathbf{D}_{\S}$ and $\mathbf{B}_{\F}$ are self-adjoint and idempotent, so that they represent orthogonal projectors. In the sequel, 
$\D_{\S}$ denotes the set of all $\S$-vertex-limited signals, i.e. satisfying $\mD_\S\, \bx = \bx$, whereas
 $\B_{\F}$ denotes the set of all $\F$-band-limited signals, i.e. satisfying $\mB_\F\, \bx = \bx$.
In the rest of the paper, whenever there will be no ambiguities in the specification of the sets, we will drop the subscripts referring to the sets, to avoid overcrowded symbols.
Given a set $\S$, we denote its complement set as $\oline{\S}$, such that $\V=\S \cup \oline{\S}$ and $\S \cap \oline{\S}=\emptyset$. Correspondingly, we define the vertex-projector onto $\oline{\S}$ as $\oline{\mathbf{D}}$. Similarly, the projector onto the complement set $\oline{\F}$ is denoted by $\oline{\mathbf{B}}$.
\\

\subsubsection{Uncertainty principle}
A fundamental property of continuous-time signals is the Heisenberg uncertainty principle, stating that there is a basic trade-off between the spread of a signal in time and and the spread of its spectrum in frequency. In particular, a continuous-time signal cannot be perfectly localized in both time and frequency domains (see, e.g., \cite{Folland1997} for a survey on the uncertainty principle). More specifically, given a continuous-time signal $x(t)$ and its Fourier transform $X(f)$, introducing the time spread 
\begin{equation}
\label{Delta_T}
\Delta_t^2=\frac{\int_{-\infty}^{\infty}(t-t_0)^2|x(t)|^2 \diff t}{\int_{-\infty}^{\infty}|x(t)|^2 \diff t}
\end{equation}
with
\begin{equation*}
t_0=\frac{\int_{-\infty}^{\infty} t\, |x(t)|^2 \diff t}{\int_{-\infty}^{\infty}|x(t)|^2 \diff t}
\end{equation*}
and the frequency spread
\begin{equation}
\label{Delta_F}
\Delta_f^2=\frac{\int_{-\infty}^{\infty}(f-f_0)^2|X(f)|^2 \diff f}{\int_{-\infty}^{\infty}|X(f)|^2 \diff f},
\end{equation}
with 
\begin{equation*}
f_0=\frac{\int_{-\infty}^{\infty} f\, |X(f)|^2 \diff f}{\int_{-\infty}^{\infty}|X(f)|^2 \diff f},
\end{equation*}
the uncertainty principle states that
\begin{equation*}
\Delta_t^2 \Delta_f^2 \ge \frac{1}{(4 \pi)^2}.
\end{equation*}
After the introduction of the GFT, an uncertainty principle for signals defined over undirected connected graphs was derived in {\cite{agaskar2013spectral}}. 
In particular, denoting by $d(u, v)$ the {\it geodesic} distance between nodes $u$ and $v$, i.e. the length of the shortest path connecting $u$ and $v$, the spread of a vector $\bx$ in the vertex domain was defined in \cite{agaskar2013spectral} as
\begin{equation}
\label{delta_t_AgaskarLu}
\Delta_g^2:= \min_{u_0 \in {\cal V}} \frac{1}{\|\bx\|^2}\,\bx^*\mathbf{P}^2_{u_0} \bx
\end{equation}
where $\mathbf{P}_{u_0}:={\rm diag}(d(u_0, v_1), d(u_0, v_2), \ldots, d(u_0, v_N))$.
Similarly, the spread in the GFT domain was defined as
\begin{equation}
\label{delta_f_AgaskarLu}
\Delta_s^2:=  \frac{1}{\|\bx\|^2}\, \sum_i \xi_i \abs{\hat{x}_i}^2,
\end{equation}
where $\xi_i$ was defined in (\ref{eq:L=sum xi ui}).
The two definitions of spread in the graph and its dual domain given in (\ref{delta_t_AgaskarLu}) and  (\ref{delta_f_AgaskarLu}) are the graph counterparts of formulas (\ref{Delta_T}) and (\ref{Delta_F}) for continuous-time signals. 
In \cite{agaskar2013spectral}, it was studied the tradeoff between the signal spread on the graph and on its
spectral (dual) domain, i.e. between (\ref{delta_t_AgaskarLu}), for a given value of $u_0$, i.e. without performing the minimization operation,  and  (\ref{delta_f_AgaskarLu}).
\subsubsection{Sampling}
One of the basic issues in graph signal processing is sampling, whose goal is to find the conditions for recovering a band-limited (or approximately band-limited) graph signal from a subset of values and to devise suitable sampling and recovery strategies. 
More specifically, a band-limited graph signal can be represented as 
\begin{equation}
\label{x=Us}
\bx=\mU \bs,
\end{equation}
where $\mU$ is an appropriate basis and $\bs$ is sparse. Typically, $\mU$ coincides with the matrix whose columns are the eigenvectors of $\mL$. If we denote by $\S \subseteq \V$ the sampling subset, the sampled signal can be represented as
\begin{equation}
\label{eq::sampling equation}
\bx_{\S}=\mathbf{D}_{\S}\, \bx=\mathbf{D}_{\S} \,\mU \bs,
\end{equation}
where $\mathbf{D}_{\S}$ is defined as in (\ref{D}). The problem of recovering a band-limited signal from its samples is then equivalent to the problem of solving system (\ref{eq::sampling equation}), by exploiting the sparsity of $\bs$. This problem was addressed, for example, in \cite{pesenson2008sampling}, \cite{narang2013signal},  \cite{wang2014local}, and \cite{chen2015discrete}. 
Alternative recovery strategies have been proposed, either iterative \cite{narang2013localized}, \cite{wang2014local}, or not \cite{chen2015discrete}. In \cite{pesenson2008sampling}, \cite{wang2014local}, frame-based recovery algorithms have also been proposed.

A key important remark is that the sampling strategy, i.e. the identification of the sampling set $\S$, plays a key role in the performance of the recovery algorithms, as it affects the conditioning of system (\ref{eq::sampling equation}). It is then particularly important to devise strategies to optimize the selection of the sampling set. This problem is conceptually similar to the problem known in the literature as {\it experimental design}, see, e.g., \cite{Steinberg-experimental_design1984, avron2013faster, Ranieri-Cheibra-Vetterli-2014}. Sampling strategies for graph signals were  proposed in \cite{anis2014towards} and, more recently, in \cite{chen2015discrete}.

\subsection{Contributions}
The main contribution of this paper is to present a holistic framework that unifies uncertainty principle and sampling, building on the identification of the class of graph signals that are maximally concentrated over the graph and dual domain. The specific contributions are listed below.
\subsubsection{Uncertainty principle}
The definitions of spread in the graph and dual domain given in (\ref{delta_t_AgaskarLu}) and (\ref{delta_f_AgaskarLu}), as suggested in \cite{agaskar2013spectral}, are reminiscent of the formulas (\ref{Delta_T}) and (\ref{Delta_F}) valid for continuous-time signals. They are both based on second order moments of the signal distribution over the graph domain and on its dual. However, when dealing with graph signals, there is an important distinction to be pointed out with respect to time signals: While time (or frequency) is a metric space, with a well defined notion of distance, the graph domain is not a metric space. The vertices of a graph may represent, for example, molecules and the signal may be the concentration of a molecule in a given mixture. In cases like this, it is not obvious how to define a distance between vertices. Since in (\ref{delta_t_AgaskarLu}) the definition of distance enters directly in the computation of the spread,  it turns out that the uncertainty principle comes to depend on the specific definition of distance over the graph. The definition of distance given in \cite{agaskar2013spectral} makes perfect sense, but as pointed out by the authors themselves, it is not the only possible choice.   When dealing with graphs, other definitions of distance have been proposed in the literature, including the resistance distance \cite{KleinRandic93} and the diffusion distance \cite{CoifmanMaggioni06}. An open question arises, for example, in the presence of multiple shortest paths having the same distance between two vertices. In such a case, using the definition (\ref{delta_t_AgaskarLu}), the presence of multiple paths having the same distance does not affect the computation of the spread. However, the presence of multiple paths might indicate an easier way for the information to flow through the network. In fact, using the definition of resistance distance suggested in \cite{KleinRandic93}, the distance between two nodes comes to depend on the number of shortest paths with the same distance connecting them. To avoid all the shortcomings associated with the definition of distance over a graph, in this paper we use an alternative definition of spread and derive an uncertainty principle that does not require any additional definition of distance. More specifically, we take inspiration from the seminal works of Slepian, Landau and Pollack {\cite{Slepian:1961:PSW}}, {\cite{landau1961prolate}}, on prolate spheroidal wave-functions. In those works, the effective duration $T$ of a continuous-time signal centered around a time instant $t_0$ was defined as the value  such that the percentage of energy falling in the interval $[t_0-T/2, t_0+T/2]$ assumes a specified value $\alpha^2$, i.e.
\begin{equation*}
\frac{\int_{t_0-T/2}^{t_0+T/2}  |x(t)|^2 \diff t}{\int_{-\infty}^{\infty}|x(t)|^2 \diff t}=\alpha^2.
\end{equation*}
Similarly, the effective bandwidth $W$ is the value such that 
\begin{equation*}
\frac{\int_{f_0-W/2}^{f_0+W/2}  |X(f)|^2 \diff f}{\int_{-\infty}^{\infty}|X(f)|^2 \diff f}=\beta^2.
\end{equation*}
We  transpose these formulas into the graph domain as follows. Given a vertex set $\S$ and a frequency set $\F$, using (\ref{D}) and (\ref{lowpass_operator}), the vectors $\mD \bx$ and $\mB \bx$ denote, respectively, the projection of $\bx$ onto the vertex set $\S$ and onto the frequency set $\F$. Then, we denote by $\alpha^2$ and $\beta^2$ the percentage of energy falling within the sets $\S$ and $\F$, respectively, as
\begin{equation}
\label{alpha-beta}
\frac{\|\mD \bx\|_2^2}{\| \bx\|_2^2}=\alpha^2;\,\, \frac{\|\mB \bx\|_2^2}{\| \bx\|_2^2}=\beta^2.
\end{equation}
In this paper, we find the region of all admissible pairs $(\alpha, \beta)$, by generalizing \cite{landau1961prolate} to the discrete case. More specifically, we express the boundaries of the admissible region in closed form and illustrate which are the signals that attain all the points of the admissible region.
It is worth noticing that, in (\ref{alpha-beta}), the graph topology is captured by the matrix $\mU$, present in the definition of the GFT in (\ref{eq::gft}), which appears inside the operator $\mB$. The theory presented in this paper is valid for any {\it unitary} mapping from some discrete space to its dual. 
\subsubsection{Sampling}
Building on the construction of a basis of maximally concentrated signals in the graph/dual domain, we express the conditions for recovering a band-limited signal from a subset of its values in terms of the properties of this basis. These conditions are equivalent to the conditions derived in \cite{pesenson2008sampling}, \cite{narang2013signal}, \cite{chen2015discrete}, and \cite{wang2014local}. The novelty here is that our formulation shows a direct link between sampling theory and uncertainty principle. It is shown that the unique recovery of any signal from $\B$, requires that there should be no nontrivial signal from $\B$ that is perfectly localized on $\oline{\S}$, i.e. one needs $\B_\F \cap \D_{\oline{\S}}$ to be empty. There may be various choices of $\S$ satisfying this requirement, but each choice may significantly affect the stability of the recovery algorithm, so that  selecting the sampling set $\S$ is a crucial step. Building on this idea, we propose several signal recovery algorithms and sampling strategies aimed to find an optimal sampling set $\S$. In addition, we propose a frame-based reconstruction method that fits perfectly into the given sampling framework, as it relies on the properties of the projectors $\mB$ and $\mD$.

Finally, we compare our algorithms with the methods proposed in \cite{chen2015discrete}, \cite{anis2014towards} and with the benchmark resulting from the solution of a combinatorial problem (only for small size networks, where the combinatorial search is still manageable). The comparison is carried out over a class of random graphs, namely the scale-free graphs, which are known for modeling many real world situations, see, e.g.  \cite{albert2002statistical}, \cite{Newman}, and our techniques exhibit advantages in terms of Mean Square Error (MSE) and show performance very close to the optimal theoretical bound. We also show an example of selection of the sampling set for a real network, namely the  IEEE 118 Bus Test Case, representing a portion of the American Electric Power System. 
\subsubsection{Signal recovery in case of strong impulsive noise}
Motivated by a sensor network scenario where some sensors may be damaged, 
%where the observation is highly corrupted over a subset of nodes or we may have a set of damaged sensors 
we show under what conditions the recovery of a band-limited signal can be unaffected by some sort of impulsive noise affecting a subset of nodes, using $\ell_1$-norm minimization. Interestingly, we show that also this problem is inherently associated to the localization properties of projectors onto the graph and its dual domain.
The rest of the paper is organized as follows. In Section  \ref{Localization} we derive the localization properties of graph signals, illustrating as a particular case the conditions enabling perfect localization in {\it both} vertex and frequency domains. Building on these tools, in Section 
\ref{Uncertainty principle} we derive an uncertainty principle for graph signals and, in Section \ref{sec::Sampling},
we derive the necessary and sufficient conditions for recovering band-limited graph signal from its samples and propose alternative 
recovery algorithms. In Section \ref{sec::Reconstruction from noisy observations} we analyze the effect of observation noise on signal recovery and, finally, in Section \ref{Sampling strategies} we propose and compare several sampling strategies.

\section{Localization properties}
\label{Localization} 
Scope of this section is to derive the class of signals that are maximally concentrated 
over given subsets $\S$ and $\F$ in vertex and frequency domains. 
%\subsection{Perfect localization}
We say that a vector $\bx$ is perfectly localized over the subset $\S \subseteq \V$ if
\begin{equation}
\label{Dx=x}
\mD\, \bx=\bx,
\end{equation}
with $\mD$ defined as in (\ref{D}). 
Similarly, a vector $\bx$ is perfectly localized over the frequency set $\F \subseteq \V^*$ if
\begin{equation}
\label{Bx=x}
\mB\, \bx=\bx,
\end{equation}
with $\mB$ given in (\ref{lowpass_operator}).
Differently from continuous-time signals, a graph signal can be perfectly localized in {\it both} vertex and frequency domains. 
This is stated in the following theorem.
\begin{theorem} 
\label{theorem_unit_eigenvalue}
There exists a non trivial vector $\bx$, perfectly localized over both vertex set $\S$ and frequency set $\F$ (i.e. $\bx \in \B_{\F} \cap \D_{\S}$) if and only if the operator $\mB \mD \mB$ (or $\mD \mB \mD$) has an eigenvalue equal to one; in such a case, $\bx$ is an eigenvector associated to the unit eigenvalue.
\end{theorem}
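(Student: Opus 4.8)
The plan is to use that $\mB$ and $\mD$ are orthogonal projectors, which was already noted above, so that $\mB\mD\mB$ is self-adjoint and positive semidefinite with all eigenvalues in $[0,1]$: for any $\bx$ we have $\bx^*\mB\mD\mB\,\bx = \|\mD\mB\,\bx\|^2 \ge 0$, and since $\|\mD\|\le 1$ and $\|\mB\|\le 1$ this is bounded above by $\|\bx\|^2$. Hence an eigenvalue equal to one is the maximal possible one. The elementary fact I would invoke twice is that for an orthogonal projector $\mathbf{P}$ the Pythagorean identity $\|\bx\|^2 = \|\mathbf{P}\bx\|^2 + \|(\mI-\mathbf{P})\bx\|^2$ shows that $\|\mathbf{P}\bx\| = \|\bx\|$ holds if and only if $\mathbf{P}\bx = \bx$.

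For the ``only if'' direction, suppose $\bx \neq \b0$ satisfies $\mB\bx = \bx$ and $\mD\bx = \bx$, i.e. $\bx \in \B_{\F} \cap \D_{\S}$. Then $\mB\mD\mB\,\bx = \mB\mD\,\bx = \mB\,\bx = \bx$, so $\bx$ is an eigenvector of $\mB\mD\mB$ with eigenvalue one; interchanging the roles of $\mB$ and $\mD$ gives the analogous statement for $\mD\mB\mD$.

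For the ``if'' direction, suppose $\mB\mD\mB\,\bx = \bx$ with $\|\bx\| = 1$. Taking the inner product with $\bx$ gives $\|\mD\mB\,\bx\|^2 = 1$, and then the chain $1 = \|\mD\mB\,\bx\| \le \|\mB\,\bx\| \le \|\bx\| = 1$ forces equality in both steps. Applying the Pythagorean fact to $\mD$ at the vector $\mB\,\bx$ and to $\mB$ at $\bx$ yields $\mD(\mB\,\bx) = \mB\,\bx$ and $\mB\,\bx = \bx$; combining these, $\bx = \mB\,\bx$ and $\bx = \mD\,\bx$, so $\bx \in \B_{\F}\cap\D_{\S}$ and is non-trivial. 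The case of $\mD\mB\mD$ is handled symmetrically.

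I would close with the remark that the two spectral conditions in the statement actually coincide: setting $\mathbf{M} = \mD\mB$ we have $\mB\mD\mB = \mathbf{M}^*\mathbf{M}$ and $\mD\mB\mD = \mathbf{M}\mathbf{M}^*$, and these share the same nonzero eigenvalues (with multiplicities), so one has a unit eigenvalue precisely when the other does. I do not expect a genuine obstacle here; the only delicate point is the chain-of-inequalities argument, where one must verify that equality is forced \emph{separately} in each of the two norm bounds — which is exactly what the Pythagorean identity for orthogonal projectors supplies.
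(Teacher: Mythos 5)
Your proof is correct and follows essentially the same route as the paper: the forward direction is the identical computation $\mB\mD\mB\,\bx=\mB\mD\bx=\mB\bx=\bx$, and the converse extracts $\mB\bx=\bx$ and $\mD\bx=\bx$ from the unit eigenvalue using the projector structure (the paper does this by left-multiplying by $\mB$ and then invoking Rayleigh--Ritz, while you force equality in the norm chain via the Pythagorean identity --- an equivalent, and if anything slightly more careful, way of handling the equality case). Your closing observation that $\mB\mD\mB$ and $\mD\mB\mD$ share nonzero eigenvalues cleanly justifies the parenthetical alternative in the statement, which the paper leaves to symmetry.
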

\begin{proof}
Let us start proving that, if a vector $\bx$ is perfectly localized in both vertex and frequency domains, then it must be an eigenvector of $\mB \mD \mB$ associated to a unit eigenvalue. Indeed, by repeated applications of   (\ref{Dx=x}) and (\ref{Bx=x}), it follows
\begin{equation}
\label{BDBx = BDx}
\mB \mD \mB \bx = \mB \mD \bx = \mB \bx = \bx.
\end{equation}
This proves the first part. Now, let us prove that, if $\bx$ is an eigenvector of $\mB \mD \mB$ associated to a unit eigenvalue, then $\bx$ must satisfy  (\ref{Dx=x}) and (\ref{Bx=x}). Indeed, starting from
\begin{equation}
\label{BDBx=x}
\mB \mD \mB \bx = \bx
\end{equation}
and multiplying from the left side by $\mB$, taking into account that $\mB^2=\mB$, we get
\begin{equation}
\label{BDBx=Bx}
\mB \mD \mB \bx = \mB \bx
\end{equation}
Equating (\ref{BDBx=x}) to (\ref{BDBx=Bx}), we get
\begin{equation}
\label{eq::bandlim_x}
\mB \bx= \bx,
\end{equation}
which implies that $\bx$ is perfectly localized in the frequency domain. Now, using (\ref{eq::bandlim_x}) and the Rayleigh-Ritz theorem, we can also write
\begin{equation}
1=\max_{\bx} \frac{\bx^* \mB \mD \mB \bx}{\bx^* \bx}=\max_{\bx} \frac{\bx^* \mD \bx}{\bx^* \bx}.
\end{equation}
This shows that $\bx$ satisfies also (\ref{Dx=x}), i.e., $\bx$ is also perfectly localized in the vertex domain.
\end{proof}
Equivalently, the perfect localization properties can be expressed in terms of the operators $\mB \mD$ and $\mD \mB$. First of all, we prove the following lemma.
\begin{lemma}
\label{theorem::sing_val_bd_db}
The operators $\mB \mD$ and $\mD \mB$ have the same singular values, i.e. $\sigma_i(\mB \mD) = \sigma_i(\mD \mB), \ i = 1, \ldots, N$.
\end{lemma}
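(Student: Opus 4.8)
The plan is to reduce the statement to the elementary fact that a matrix and its conjugate transpose have the same singular values. The first step is to recall from the discussion preceding the lemma that both $\mD$ and $\mB$ are self-adjoint orthogonal projectors; consequently
$(\mB\mD)^* = \mD^*\mB^* = \mD\mB$,
so that $\mD\mB$ is precisely the conjugate transpose of $\mB\mD$.

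Next I would invoke the singular value decomposition $\mB\mD = \mathbf{V}_1\,\mathbf{\Sigma}\,\mathbf{V}_2^*$, with $\mathbf{V}_1,\mathbf{V}_2$ unitary $N\times N$ matrices and $\mathbf{\Sigma}$ the diagonal matrix of (ordered) singular values $\sigma_1(\mB\mD)\ge\cdots\ge\sigma_N(\mB\mD)$. Taking conjugate transposes yields $\mD\mB = (\mB\mD)^* = \mathbf{V}_2\,\mathbf{\Sigma}\,\mathbf{V}_1^*$, which is itself a valid singular value decomposition of $\mD\mB$: the same diagonal matrix $\mathbf{\Sigma}$, only with the two unitary factors interchanged. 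Since the singular values of a matrix are uniquely determined, this gives $\sigma_i(\mB\mD)=\sigma_i(\mD\mB)$ for all $i=1,\ldots,N$, the zero singular values being matched with their multiplicities because $\mB\mD$ and $\mD\mB$ have the same size.

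An alternative, more ``spectral'' route — which also prepares the link used later with Theorem~\ref{theorem_unit_eigenvalue} — is to note that the squared singular values of $\mB\mD$ are the eigenvalues of $(\mB\mD)^*(\mB\mD)=\mD\mB\mD$, while those of $\mD\mB$ are the eigenvalues of $(\mD\mB)^*(\mD\mB)=\mB\mD\mB$; writing $\mD\mB\mD=(\mB\mD)^*(\mB\mD)$ and $\mB\mD\mB=(\mB\mD)(\mB\mD)^*$ and using that $\mathbf{M}^*\mathbf{M}$ and $\mathbf{M}\mathbf{M}^*$ have the same spectrum for any square $\mathbf{M}$ (equal nonzero eigenvalues with multiplicities, hence equal rank, hence equal multiplicity of the zero eigenvalue), one concludes that $\mD\mB\mD$ and $\mB\mD\mB$ are isospectral, and the claim follows by taking nonnegative square roots.

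I do not anticipate any real obstacle: the argument is essentially a one-liner once self-adjointness of $\mB$ and $\mD$ is used. The only point deserving a line of care is ensuring that \emph{all} $N$ singular values coincide — not merely the nonzero ones — which is handled automatically by working with the full SVD of the square matrix $\mB\mD$ (equivalently, by the rank/nullity remark above) rather than with the nonzero eigenvalues of the products alone.
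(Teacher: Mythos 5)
Your proposal is correct and follows essentially the same route as the paper: both use the self-adjointness of $\mB$ and $\mD$ to write $(\mB\mD)^*=\mD\mB$ and then invoke the fact that a matrix and its conjugate transpose share the same singular values. Your explicit SVD justification and the alternative spectral argument are just more detailed elaborations of the paper's one-line proof.
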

\begin{proof}
Since both matrices  $\mB$ and $\mD$ are Hermitian, $(\mB \mD)^* = \mD \mB$. But the singular values of a matrix coincide with the singular values of its Hermitian conjugate. 
\end{proof}
Combining Lemma \ref{theorem::sing_val_bd_db} and (\ref{BDBx = BDx}), perfect localization onto the sets $\S$ and $\F$ can be achieved if and only if
\begin{equation}
\label{|BD|=1=|DB|}
\|\mB \mD\|_2 = \|\mD \mB\|_2 = 1.
\end{equation}
As mentioned in Theorem \ref{theorem_unit_eigenvalue}, the vectors perfectly localized in both vertex and frequency domains must belong to the intersection set $\B \cap \D$, which is non-empty when the sum of dimensions of $\B$ and $\D$ is greater than the dimension of the ambient space of dimension $N$.
Hence, a sufficient condition for the existence of perfectly localized vectors in both vertex and frequency domains is
\begin{equation}
\label{Perfect_localization}
|\S|+|\F| > N.
\end{equation}
Conversely, if $|\S|+|\F| \leq N$, there could still exist perfectly localized vectors, when condition (\ref{|BD|=1=|DB|}) is satisfied.

Typically, given two generic domains $\S$ and $\F$, we may have signals that are not perfectly concentrated in both domains. In such a case, it is worth finding the class of signals with limited support in one domain and maximally concentrated on the dual one. For example, we may search for the orthonormal set of perfectly band-limited signals, i.e. satisfying $\mB \bx = \bx$, which are maximally concentrated in a vertex domain $\S$. 
The set of such vectors $\{\bpsi_i\}$ is constructed as the solution of the following iterative optimization problem, for $i=1, \ldots, N$:
\begin{equation}
\label{slep_func:max_problem}
\begin{aligned}
\bpsi_i &&= \  & \underset{\bpsi_i}{\arg \max} \ \| \mathbf{D}  \bpsi_i \|_2 \\
&&& \text{s.t. }  \| \bpsi_i\|_2 = 1,\\ 
&&& \mathbf{B} \bpsi_i = \bpsi_i,\\
&&&  \langle \bpsi_i, \bpsi_j \rangle = 0, \ \ j=1, \ldots,  i-1, \ \ {\rm if}\,\, i>1.
\end{aligned}
\end{equation}
In particular, $\bpsi_1$ is the band-limited signal with the highest energy concentration on $\S$;  $\bpsi_2$ is the band-limited signal, orthogonal to $\bpsi_1$, which is maximally concentrated on $\S$, and so on. The vectors $\{ \bpsi_i \}$ are the counterpart of the prolate spheroidal wave functions introduced by Slepian and Pollack for continuous-time signals \cite{Slepian:1961:PSW}. The solution of the above optimization problem is given by the following theorem.
\begin{theorem}
\label{theorem::max_concentrated_vectors}
The set of orthonormal $\F$-band-limited vectors $\{ \bpsi_i \}_{i=1, \ldots, K}$, with $K := \rank \mB$,  maximally concentrated over a vertex set $\S$, is given by the eigenvectors of the $\mB \mD \mB$ operator, i.e.
\begin{equation}
\label{slep_func:bdb_eigendecomposition}
\mathbf{BDB} \bpsi_i = \lambda_i \bpsi_i,
\end{equation}
with $\lambda_1\ge \lambda_2 \ge \ldots \ge \lambda_K$.
Furthermore, these vectors are orthogonal over the set $\S$, i.e.
\begin{equation}
\label{eq:orthogonality_on_sampling_set}
\langle \bpsi_i, \mathbf{D} \bpsi_j \rangle = \lambda_j \delta_{ij},
\end{equation}
where $\delta_{ij}$ is the Kronecker symbol.
\end{theorem}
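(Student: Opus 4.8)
The plan is to reduce the constrained maximization in (\ref{slep_func:max_problem}) to a Rayleigh--Ritz eigenvalue problem for the Hermitian operator $\mB\mD\mB$. First I would rewrite the objective: since $\mD$ is an orthogonal projector, $\|\mD\bpsi\|_2^2 = \bpsi^*\mD^*\mD\bpsi = \bpsi^*\mD\bpsi$, and on the feasible set of (\ref{slep_func:max_problem}) we may substitute $\bpsi=\mB\bpsi$, which gives $\|\mD\bpsi\|_2^2 = \bpsi^*\mB\mD\mB\bpsi$. Hence (\ref{slep_func:max_problem}) is equivalent to maximizing the Rayleigh quotient $\bpsi^*\mB\mD\mB\bpsi/\bpsi^*\bpsi$ over the subspace $\B=\im\mB$, successively restricted to the orthogonal complement of the vectors already chosen; the maximum is attained since this is a continuous function on a compact set.

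Next I would observe that $\mB\mD\mB$ is Hermitian and positive semidefinite, that $\B=\im\mB$ is an invariant subspace for it (indeed its entire range lies in $\B$), and that the restriction $\mB\mD\mB|_{\B}$ is a Hermitian operator on the $K$-dimensional space $\B$, with $K=\rank\mB$. Therefore it admits an orthonormal eigenbasis $\{\bpsi_i\}_{i=1}^K$ of $\B$ with real nonnegative eigenvalues, which we order as $\lambda_1\ge\cdots\ge\lambda_K\ge 0$. By construction every such $\bpsi_i$ lies in $\B$, i.e.\ is $\F$-band-limited, so the orthonormality and band-limitedness constraints of (\ref{slep_func:max_problem}) are automatically met.

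Then I would invoke the Courant--Fischer (Rayleigh--Ritz) characterization in its deflation form: $\lambda_1=\max\{\bpsi^*\mB\mD\mB\bpsi:\bpsi\in\B,\ \|\bpsi\|_2=1\}$, attained at $\bpsi_1$, and for $i>1$, $\lambda_i=\max\{\bpsi^*\mB\mD\mB\bpsi:\bpsi\in\B,\ \|\bpsi\|_2=1,\ \bpsi\perp\bpsi_1,\dots,\bpsi_{i-1}\}$, attained at $\bpsi_i$. By induction on $i$ this is exactly the iterative optimization (\ref{slep_func:max_problem}), so its solutions are the eigenvectors of $\mB\mD\mB$ in order of decreasing eigenvalue, which establishes (\ref{slep_func:bdb_eigendecomposition}). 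One caveat to state explicitly: in the presence of repeated eigenvalues the maximizer at a given step is not unique, and any orthonormal eigenbasis realizing the same ordered list $\lambda_1\ge\cdots\ge\lambda_K$ is an equally valid solution.

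Finally, (\ref{eq:orthogonality_on_sampling_set}) follows by a one-line computation: using $\bpsi_i=\mB\bpsi_i$, $\bpsi_j=\mB\bpsi_j$, (\ref{slep_func:bdb_eigendecomposition}) and the orthonormality of $\{\bpsi_i\}$, one gets $\langle\bpsi_i,\mD\bpsi_j\rangle=\bpsi_i^*\mB\mD\mB\bpsi_j=\lambda_j\,\bpsi_i^*\bpsi_j=\lambda_j\,\delta_{ij}$. I expect the only genuinely delicate point to be the bookkeeping around eigenvalue multiplicities and the resulting non-uniqueness of the solution set of (\ref{slep_func:max_problem}); the reduction of the objective to $\mB\mD\mB$ and the Rayleigh--Ritz/induction argument are otherwise routine.
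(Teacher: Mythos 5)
Your proposal is correct and follows essentially the same route as the paper: substitute the band-limiting constraint into the objective to reduce (\ref{slep_func:max_problem}) to a Rayleigh quotient for $(\mD\mB)^*\mD\mB=\mB\mD\mB$, apply Rayleigh--Ritz/Courant--Fischer iteratively, and verify (\ref{eq:orthogonality_on_sampling_set}) by the one-line computation $\langle\bpsi_i,\mD\bpsi_j\rangle=\langle\bpsi_i,\mB\mD\mB\bpsi_j\rangle=\lambda_j\delta_{ij}$. Your added care in restricting the maximization to the invariant subspace $\B$ and in flagging non-uniqueness under repeated eigenvalues is a slight tightening of the paper's argument, not a different one.
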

\begin{proof}
Substituting the band-limiting constraint within the objective function in (\ref{slep_func:max_problem}), we get
\begin{equation}
\label{slep_func:max_problem_restated}
\begin{aligned}
\bpsi_i &&= \  & \underset{\bpsi_i}{\arg \max} \ \| \mathbf{D} \mathbf{B}  \bpsi_i \|_2 \\
&&& \text{s.t.} \ \ \| \bpsi_i\|_2 = 1, \ \ \langle \bpsi_i, \bpsi_j \rangle = 0, \ \ j\neq i.
\end{aligned}
\end{equation}
Using Rayleigh-Ritz theorem, the solutions of (\ref{slep_func:max_problem_restated})  are the eigenvectors of $\left(\mathbf{DB}\right)^* \mathbf{DB} = \mathbf{BDB}$, i.e. the solutions of (\ref{slep_func:bdb_eigendecomposition}). This proves the first part of the theorem. The second part is proven by noting that, using $\mB \bpsi_i=\bpsi_i$ and $\mB^*=\mB$, we obtain $\langle \bpsi_i, \mB \mD \mB \bpsi_j \rangle = \langle \bpsi_i, \mathbf{D} \bpsi_j \rangle = \lambda_j \delta_{ij}$.
\end{proof}
The above theorem provides a set of perfectly $\F$-band-limited vectors that are maximally concentrated over a vertex domain. The same procedure can of course be applied to identify the class of orthonormal vectors perfectly localized in the graph domain and maximally concentrated in the frequency domain, simply exchanging the role of $\mB$ and $\mD$, and thus referring to the eigenvectors of $\mD \mB \mD$. 

\section{Uncertainty principle}
\label{Uncertainty principle}
Quite recently, the uncertainty principle was extended to signals on graphs in {\cite{agaskar2013spectral}} by following an approach based on the transposition of the definitions of time and frequency spreads given by ({\ref{Delta_T}}) and ({\ref{Delta_F}}) to graph signals, as indicated in (\ref{delta_t_AgaskarLu}) and (\ref{delta_f_AgaskarLu}). However, as mentioned in the Introduction, the computation of spreads based on second order moments implies a definition of distance over a graph. Although the definition of distance used in  {\cite{agaskar2013spectral}}, based on the shortest path between two vertices, is perfectly reasonable, there are alternative definitions of distance over a graph, such as the resistance distance \cite{KleinRandic93} or the diffusion distance \cite{CoifmanMaggioni06}. To remove any potential ambiguity associated to the definition of distance over a graph, taking inspiration by the seminal works of  Slepian, Landau and Pollack {\cite{Slepian:1961:PSW}}, {\cite{landau1961prolate}}, in this paper we resort to a definition of spread in the graph and frequency domain that does not imply any definition of distance. 
More specifically, given a pair of vertex set $\S$ and frequency set $\F$, denoting by $\alpha^2$ and $\beta^2$ the percentage of energy falling within the sets $\S$ and $\F$, respectively, as defined in (\ref{alpha-beta}), our goal is to establish the trade-off between $\alpha$ and $\beta$ and
find out the signals able to attain all admissible pairs. The resulting uncertainty principle is stated in the following theorem. \\

\begin{theorem}
\label{theorem::Uncertainty principle}
There exists a vector $\bx$ such that $\|\bx\|_2 = 1$, $\| \mD \bx \|_2 = \alpha$, $\| \mB \bx \|_2 = \beta $ if and only if $\left( \alpha, \beta \right) \in \Gamma$, where 
\begin{align}
\label{eq::uncertainty_region_Gamma}
\Gamma &= \left\{ \left( \alpha , \beta \right) : \right. \nonumber \\
& \cos^{-1} \alpha + \cos^{-1} \beta \geq \cos^{-1} \sigma_{max} \left( \mB \mD \right), \nonumber \\
& \cos^{-1} \sqrt{1 - \alpha^2} + \cos^{-1} \beta \geq \cos^{-1} \sigma_{max} \left( \mB \mDc \right), \\
& \cos^{-1} \alpha + \cos^{-1} \sqrt{ 1 - \beta^2} \geq \cos^{-1} \sigma_{max} \left( \mBc \mD \right), \nonumber \\
& \left. \cos^{-1} \sqrt{ 1- \alpha^2 } + \cos^{-1} \sqrt{1 - \beta^2} \geq \cos^{-1} \sigma_{max} \left( \mBc \mDc \right) \right\}. \nonumber 
\end{align}
\end{theorem}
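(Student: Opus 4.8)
The plan is to recast the problem in terms of angles between subspaces. Writing $\alpha = \|\mD\bx\|_2$ and $\beta = \|\mB\bx\|_2$ for a unit vector $\bx$, the quantities $\cos^{-1}\alpha$, $\cos^{-1}\beta$ are the angles that $\bx$ makes with the ranges of $\mD$ and $\mB$ respectively; similarly $\cos^{-1}\sqrt{1-\alpha^2}$ and $\cos^{-1}\sqrt{1-\beta^2}$ are the angles $\bx$ makes with $\im\mDc$ and $\im\mBc$, since $\|\mDc\bx\|_2^2 = 1 - \|\mD\bx\|_2^2$ and likewise for $\mB$. The key geometric fact is that $\cos^{-1}\sigma_{\max}(\mB\mD)$ is precisely the minimal angle between the subspaces $\im\mB = \B_\F$ and $\im\mD = \D_\S$: indeed $\sigma_{\max}(\mB\mD) = \max\{\langle \mB\bu, \mD\bv\rangle : \|\bu\|=\|\bv\|=1\}/(\ldots)$ recovers the standard cosine of the principal angle. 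So the four inequalities in $\Gamma$ say exactly: the angle from $\bx$ to $\B$ plus the angle from $\bx$ to $\D$ is at least the angle between $\B$ and $\D$; and the three analogous statements with $\B,\D$ replaced by their orthogonal complements in the four combinations.

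First I would establish the \textbf{necessity} direction. For any unit $\bx$, let $\theta_\B$ be the angle between $\bx$ and $\B$ and $\theta_\D$ the angle between $\bx$ and $\D$. Projecting $\bx$ onto $\B$ and onto $\D$ and using the triangle inequality for the angular (geodesic) metric on the unit sphere — or equivalently the fact that the principal angle between $\B$ and $\D$ cannot exceed the sum of the angle from any fixed $\bx$ to each — yields $\theta_\B + \theta_\D \ge \angle(\B,\D) = \cos^{-1}\sigma_{\max}(\mB\mD)$, which is the first inequality. The other three follow by running the same argument with $\D$ replaced by $\oline{\D} = \im\mDc$ (note $\angle(\bx,\oline{\D})$ has cosine $\sqrt{1-\alpha^2}$), with $\B$ replaced by $\oline\B$, and with both replaced. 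This step is essentially a clean application of the triangle inequality once the angle dictionary is set up.

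The main obstacle, and the part requiring real work, is \textbf{sufficiency}: given $(\alpha,\beta)\in\Gamma$, exhibit a unit vector $\bx$ with $\|\mD\bx\|=\alpha$, $\|\mB\bx\|=\beta$. Here I would use the Slepian-type basis from Theorem~\ref{theorem::max_concentrated_vectors}. Work in a two-dimensional (or low-dimensional) invariant subspace spanned by a $\mB\mD\mB$-eigenvector $\bpsi$ with eigenvalue $\lambda$ (so $\|\mD\bpsi\| = \sqrt\lambda$, and in particular the extreme eigenvalue gives $\sqrt{\lambda_{\max}} = \sigma_{\max}(\mB\mD)$) together with a companion direction built from $\mDc$. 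Within such a subspace the pair $(\|\mD\bx\|,\|\mB\bx\|)$ traces out, as $\bx$ ranges over unit vectors, a one-parameter family whose boundary curve is exactly the first constraint curve $\cos^{-1}\alpha + \cos^{-1}\beta = \cos^{-1}\sigma_{\max}(\mB\mD)$; varying which eigenvector/eigenvalue one uses, and interpolating, fills the interior. The four boundary curves of $\Gamma$ correspond to the four operators $\mB\mD$, $\mB\mDc$, $\mBc\mD$, $\mBc\mDc$, whose maximal singular values are linked (e.g. via $\mB + \mBc = \mI$, $\mD + \mDc = \mI$, and the CS decomposition of the pair of projectors $\mB,\mD$) so that the region they cut out is nonempty and connected. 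The cleanest route is to invoke the \textbf{CS decomposition}: simultaneously block-diagonalize $\mB$ and $\mD$ into $2\times2$ blocks $\begin{pmatrix}\cos^2\theta_k & \cos\theta_k\sin\theta_k\\ \cos\theta_k\sin\theta_k & \sin^2\theta_k\end{pmatrix}$-type pieces plus the four trivial joint eigenspaces ($\B\cap\D$, $\B\cap\oline\D$, $\oline\B\cap\D$, $\oline\B\cap\oline\D$); on each $2\times2$ block the achievable $(\alpha,\beta)$ set is an explicit lens-shaped region bounded by arcs of the stated form, and the total achievable region is the union over blocks, which one checks equals $\Gamma$. I expect the bookkeeping of the CS decomposition and the verification that the union of the per-block regions is exactly $\Gamma$ (rather than something smaller) to be the technical crux.
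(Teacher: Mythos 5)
Your necessity argument is correct and is, at bottom, the same argument as the paper's, only packaged more geometrically: the paper decomposes $\bff$ along $\mD\bff$, $\mB\bff$ and a remainder orthogonal to both, eliminates the coefficients, and lands on $(\beta-\alpha\cos\theta)^2\le(1-\alpha^2)\sin^2\theta$ with $\theta=\theta(\mD\bff,\mB\bff)\ge\cos^{-1}\sigma_{max}(\mB\mD)$, which is exactly the spherical triangle inequality you invoke. Two small things should be made explicit for this half to be airtight: the angle must be defined through $\Re\langle\cdot,\cdot\rangle$ (so the geodesic triangle inequality on the real unit sphere applies in the complex case), and the lower bound $\theta(\mD\bx,\mB\bx)\ge\cos^{-1}\sigma_{max}(\mB\mD)$ is the first-principal-angle statement (Theorem \ref{theorem::angle min} in the paper), which follows from $|\langle u,v\rangle|\le\|\mD\mB\|_2$ for unit $u\in\D$, $v\in\B$. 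The degenerate cases $\mD\bx=\b0$ or $\mB\bx=\b0$ are harmless since $\cos^{-1}0=\pi/2\ge\cos^{-1}\sigma_{max}$.

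The genuine gap is in sufficiency. Inside a single $2\times2$ CS block the unit vectors trace out a one-parameter family, i.e.\ a closed \emph{curve} of attainable pairs $(\alpha,\beta)$, not a two-dimensional region; consequently the union over blocks of the per-block attainable sets is a finite union of curves and isolated points and cannot equal the two-dimensional set $\Gamma$. What is actually true is that for a superposition $\bx=\sum_k c_k\bx_k$ across blocks one has $\|\mD\bx\|_2^2=\sum_k|c_k|^2\|\mD\bx_k\|_2^2$ and likewise for $\mB$, so the attainable set in the coordinates $(\alpha^2,\beta^2)$ is the \emph{convex hull} of the per-block curves together with the points contributed by the joint eigenspaces $\B\cap\D$, $\B\cap\oline{\D}$, etc.\ (when nonempty). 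Proving that this convex hull is exactly $\Gamma$ --- whose description involves only the four extreme singular values $\sigma_{max}(\mB\mD),\ldots,\sigma_{max}(\mBc\mDc)$, whatever the intermediate principal angles happen to be --- is the entire content of the converse, and your sketch replaces it with the incorrect "union over blocks" claim. The paper sidesteps this bookkeeping by exhibiting, for every point of each of the four boundary arcs, an explicit attaining vector $\bff'=p\bpsi_1+q\mD\bpsi_1$ with $p,q$ given in closed form as functions of $\alpha$ and $\sigma_{max}(\mB\mD)$ (and the analogous construction at the other three corners), then filling the interior by mixing singular vectors of the four operators. If you keep the CS route, you must replace the union by the convex-hull statement and actually verify it; otherwise adopt the paper's explicit two-term construction for the boundary and an interpolation argument for the interior.
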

\begin{proof} 
The proof is reported in Appendix A. 
\end{proof}

\begin{figure}
\vspace{-2cm}
\begin{minipage}[b]{1.0\linewidth}
 \centering
 \centerline{\includegraphics{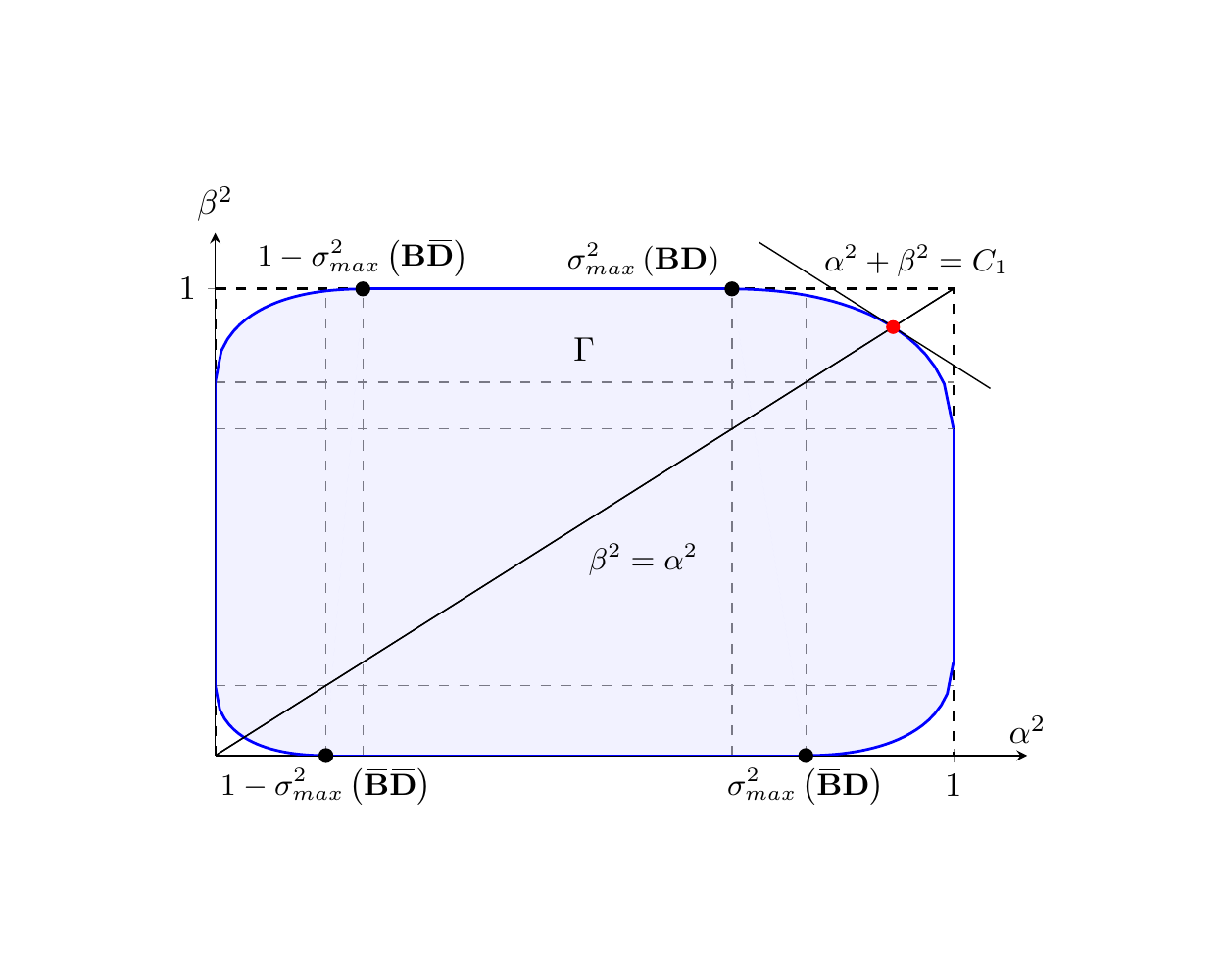}}
\end{minipage}
\vspace*{-2cm}
\caption{\small Admissible region $\Gamma$ of unit norm signals $\bx$ with $\| \mD \bx \|_2=\alpha$ and $\| \mB \bx \|_2=\beta$.}
\vspace*{-0.5cm}
\label{fig:uncertainty}
\end{figure}

An illustrative example of admissible region $\Gamma$ is reported in Fig. \ref{fig:uncertainty}. A few remarks about the border of the region $\Gamma$ are of interest. First of all, if we take the equality signs in the inequalities given in (\ref{eq::uncertainty_region_Gamma}), we get the equations describing the curves appearing at the four corners sketched in Fig. \ref{fig:uncertainty}, namely  upper right, upper left, bottom right and bottom left, respectively. The upper right corner of $\Gamma$, in particular, specifies the pairs $(\alpha, \beta)$ that yield the maximum concentration over both graph and dual domains. This curve has equation
\begin{equation}
\label{eq::beta alpha curve}
\cos^{-1} \alpha + \cos^{-1} \beta = \cos^{-1} \sigma_{max}(\mathbf{B}\mathbf{D}).
\end{equation}
Solving (\ref{eq::beta alpha curve}) with respect to $\beta$, and setting $\sigma^2_{max} := \sigma^2_{max}(\mathbf{B}\mathbf{D})$, we get
\begin{equation}
\label{beta vs alpha}
\beta = \alpha\, \sigma_{max} + \sqrt{(1-\alpha^2)(1-\sigma_{max}^2)}.
\end{equation}
Typically, for any given subset of nodes $\S$, as the cardinality of $\F$ increases, this upper curve gets closer and closer to the upper right corner. The curve collapses onto a point, namely the upper right corner, when the sets $\S$ and $\F$ give rise to projectors $\mD$ and $\mB$ that satisfy the perfect localization conditions in (\ref{|BD|=1=|DB|}). 
In general, any of the four curves at the corners of region $\Gamma$ in Fig. \ref{fig:uncertainty} may collapse onto the corresponding corner, whenever the conditions for perfect localization of the corresponding operator hold true.

In particular, if we are interested in the allocation of energy within the sets $\S$ and $\F$ that maximizes, for example, the sum of the (relative) energies $\alpha^2+\beta^2$ falling in the vertex and frequency domains, the result is given by the intersection of the upper right curve, i.e. (\ref{beta vs alpha}), with the line $\alpha^2+\beta^2={\rm const}$. Given the symmetry of the curve (\ref{eq::beta alpha curve}), the result is achieved by setting $\alpha=\beta$, which yields
\begin{equation}
\alpha^2=\frac 1 2 (1+\sigma_{max}).
\end{equation}
Using the derivations reported in Appendix A, the corresponding function $\bff'$ may be written in closed form as
\begin{equation}
\label{eq::opt_vectro_closed_form}
\bff' = \frac{\bpsi_1 - \mD \bpsi_1}{\sqrt{2\left(1 + \sigma_{max} \right)}} + \sqrt{\frac{1+\sigma_{max}}{2\sigma^2_{max}}} \mD \bpsi_1,
\end{equation}
where $\bpsi_1$ is the eigenvector of $\mB \mD \mB$ corresponding to $\sigma_{max}^2$. 
More generally, we can find all the vectors whose vertex and spectral energy concentrations lie on the border of 
the uncertainty region $\Gamma$ and construct the corresponding sets of orthonormal vectors by considering the following optimization problem
\begin{equation}
\label{eq::max_sum_energy}
\begin{aligned}
\bff_i &&= \  & \underset{\bff_i: \ \| \bff_i\|_2 = 1}{\arg \max} \ \gamma \| \mB  \bff_i \|^2_2 + (1-\gamma) \| \mD \bff_i \|^2_2  \\
&&& \text{s.t.} \ \  \langle \bff_i, \bff_j \rangle = 0, \ \ j\neq i,
\end{aligned}
\end{equation}
where the parameter $\gamma$, with $0< \gamma < 1$, controls the relative energy concentration in the vertex and frequency domains. The solution of this problem is given by the eigenvectors of the matrix $\gamma \mB + (1-\gamma)\mD$. In particular, it is interesting to notice, as detailed in Appendix B, that the first $K$ eigenvectors of this matrix, with $K={\rm rank}(\mB \mD)$, are related to the eigenvectors $\bpsi_i$ associated to the $K$ largest eigenvalues  of $\mB \mD \mB$ by the following relation 
\begin{equation}
\label{eq::extreme_energy_function general}
\bff_i = p_i \bpsi_i + q_i \mD \bpsi_i,
\end{equation}
where 
\begin{equation}
\label{eq::p_def general}
p_i = \sqrt{ \frac{1-\alpha_i^2}{1 - \sigma^2_i }},
\end{equation}
\begin{equation}
\label{eq::q_def general}
q_i = \frac{\alpha}{\sigma_i } - \sqrt{\frac{1-\alpha_i^2}{1 - \sigma_i^2}}
\end{equation}
with $\sigma_i := \sigma_i \left( \mB \mD \right)$, and
\begin{align}
\label{eq::alpha f1 general}
\alpha_i = \sqrt{\frac{1}{2} \left( \frac{2\gamma \left( \sigma^2_i -1 \right) + 1}{\sqrt{(1-2\gamma)^2-4\gamma (\gamma-1)\sigma^2_i}} +1 \right)}.
\end{align}
 
A numerical example is useful to grasp the advantages of tolerating some energy spill-over in representing a graph signal. The example is built as follows. We consider a random geometric graph composed of $100$ vertices, where a set of nodes is deployed randomly within a finite area and there is an edge between two nodes if their Euclidean distance is less than a given coverage radius $r_0$. To avoid problems with points close to the boundary of the deployment region, which would have statistics different from the internal nodes, we simulated a toroidal surface, so that all points are statistically equivalent in terms of graph properties, like degree, clustering, etc. Then, we picked a vertex $i_0$ at random and identify the set $\S$ as the ensemble of  nodes falling within a distance $R_0$ from $i_0$. Then we let $R_0$ to increase and, for each value of $R_0$, we evaluate the cardinality of $\S$ and we build $\F$ as the set of indices $\{1, 2, \ldots, k\}$ enumerating the first $k$ eigenvectors of the Laplacian matrix $\mL$, where $k$ is the minimum number such that the (relative) spill-over energy $1-\alpha^2$ $=1-\sigma^2_{max}$
is less than a prescribed value $\varepsilon^2$. In Fig. \ref{fig::F_vs_S} we plot $|\F|=k$ as a function of $|\S|$, for different values of  $\varepsilon^2$. The dashed line represents the case $\varepsilon^2=0$: This is the curve of equation $N=|\S|+|\F|$. The interesting result is that, as we allow for some spill-over energy, we can get a substantial reduction of the  ``bandwidth'' $|\F|$ necessary to contain a signal defined on a vertex set $\S$.  

\section{Sampling}
\label{sec::Sampling}
Given a signal $\bx \in \B$ defined on the vertices of a graph, let us denote by $\bxs \in \D$ the vector equal to $\bx$ on the subset $\S \subseteq \V$ and zero outside:
\begin{equation}
\label{r=Ds}
\bxs := \mathbf{D} \bx.
\end{equation}
The necessary and sufficient condition for perfect recovery of $\bx$ from $\bxs$ is stated in the following theorem.
\begin{theorem}
\label{theorem::sampling theorem}
Given a sampled signal as in (\ref{r=Ds}), it is possible to recover $\bx \in \B$ from its samples $\bxs$,  for {\it any} $\bx \in \B$, 
if  and only if
\begin{equation}
\label{|DcB|<1}
\| \mB \mDc \|_2 < 1,
\end{equation}
i.e. if the matrix $\mB \mDc \mB$ does not have any eigenvector that is perfectly localized on $\oline{\S}$ and band-limited on $\F$.
\end{theorem}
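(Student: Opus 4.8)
The plan is to reduce ``perfect recovery of every $\bx\in\B$ from $\bxs=\mD\bx$'' to an injectivity statement for the sampling operator, and then to identify that injectivity with condition~(\ref{|DcB|<1}) by applying Theorem~\ref{theorem_unit_eigenvalue} and Lemma~\ref{theorem::sing_val_bd_db} to the \emph{complementary} vertex set $\oline{\S}$. First I would observe that $\bx$ is determined by $\bxs=\mD\bx$ for every $\bx\in\B$ precisely when the linear map $\bx\mapsto\mD\bx$ is injective on $\B$, i.e. when $\mD\bz=\b0$ together with $\bz\in\B$ forces $\bz=\b0$. Since $\mD+\mDc=\mI$, the relation $\mD\bz=\b0$ is the same as $\bz=\mDc\bz$, so injectivity on $\B$ is equivalent to $\B_\F\cap\D_{\oline{\S}}=\{\b0\}$: no nontrivial signal is simultaneously $\F$-band-limited and supported on $\oline{\S}$.

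Next I would invoke Theorem~\ref{theorem_unit_eigenvalue} with $\oline{\S}$ playing the role of $\S$ (hence $\mDc$ in place of $\mD$): a nontrivial vector in $\B_\F\cap\D_{\oline{\S}}$ exists if and only if $\mB\mDc\mB$ has an eigenvalue equal to one. Because $\mB\mDc\mB=(\mDc\mB)^*(\mDc\mB)$ is self-adjoint and positive semidefinite with $\|\mB\mDc\mB\|_2=\sigma_{\max}(\mB\mDc)^2=\|\mB\mDc\|_2^2$, and because $\|\mB\mDc\|_2\le\|\mB\|_2\,\|\mDc\|_2=1$ always (products of orthogonal projectors), the statement ``$\mB\mDc\mB$ has a unit eigenvalue'' is equivalent to $\|\mB\mDc\|_2=1$, whose negation is exactly~(\ref{|DcB|<1}). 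Here Lemma~\ref{theorem::sing_val_bd_db} is what lets one phrase the same obstruction equivalently through $\mDc\mB$ or $\mDc\mB\mDc$, matching the wording of the theorem. Chaining the equivalences gives the claim: $\|\mB\mDc\|_2<1\iff\B_\F\cap\D_{\oline{\S}}=\{\b0\}\iff\mD$ is injective on $\B\iff$ every $\bx\in\B$ is recoverable from $\bxs$.

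Finally I would make the two directions explicit. For necessity: if $\|\mB\mDc\|_2=1$, there is a nonzero $\bx\in\B$ with $\mDc\bx=\bx$, so $\mD\bx=\b0=\mD\,\b0$ and $\bx$ cannot be distinguished from the zero signal. For sufficiency: if $\|\mB\mDc\|_2<1$ and $\bz\in\B$ satisfies $\mD\bz=\b0$, then $\bz=\mB\bz=\mB\mDc\bz$, whence $\|\bz\|_2\le\|\mB\mDc\|_2\,\|\bz\|_2$, which forces $\bz=\b0$; moreover the self-adjoint operator $\mB\mD\mB$ restricted to $\B$ equals the identity on $\B$ minus $(\mB\mDc\mB)|_{\B}$ and so has spectrum in $(0,1]$, hence is invertible on $\B$, and from $\mB\bxs=\mB\mD\mB\bx$ one even obtains the closed-form reconstruction $\bx=\bigl((\mB\mD\mB)|_{\B}\bigr)^{-1}\mB\bxs$.

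The argument is largely bookkeeping once these reductions are set up; the one delicate point is keeping the ``$<1$'' versus ``$=1$'' dichotomy straight across the chain of spectral identities relating $\sigma_{\max}(\mB\mDc)$, the largest eigenvalue of $\mB\mDc\mB$, and the existence of a unit eigenvalue — exploiting that all the operators involved already have norm at most one — and, relatedly, correctly transporting Theorem~\ref{theorem_unit_eigenvalue} and Lemma~\ref{theorem::sing_val_bd_db} from $\S$ to $\oline{\S}$. That is where I would take the most care.
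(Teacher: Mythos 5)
Your proof is correct and follows essentially the same route as the paper's: both directions reduce perfect recovery to the non-existence of a nontrivial signal in $\B_{\F}\cap\D_{\oline{\S}}$, which via Theorem~\ref{theorem_unit_eigenvalue} and Lemma~\ref{theorem::sing_val_bd_db} applied with $\oline{\S}$ in place of $\S$ is equivalent to $\|\mB\mDc\|_2=1$. The only cosmetic difference is in the explicit reconstruction operator for the sufficiency direction: the paper inverts $\mI-\mDc\mB$ (possible when $\|\mDc\mB\|_2<1$), whereas you invert $\mB\mD\mB$ restricted to $\B$; both are valid, and the paper itself records their agreement on $\B$ in (\ref{db_equality}).
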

\begin{proof}
We prove first that condition (\ref{|DcB|<1}) is sufficient for perfect recovery.  Let us denote by  $\mQ$ a matrix enabling the reconstruction of $\bx$ from $\bxs$ as $\mathbf{Q} \bxs$. 
If such a matrix exists, the corresponding reconstruction error is
\begin{equation}
\bx - \mQ \bxs = \bx - \mQ \left( \mI - \mDc \right) \bx=\bx - \mQ \left( \mI - \mDc \mB \right) \bx, 
\end{equation}
where, in the second equality, we exploited the band-limited nature of $\bx$.
This error can be made equal to zero by taking $\mQ=\left( \mI - \mDc \mB \right)^{-1}$. Hence, checking for the existence of $\mQ$ is equivalent to check if $\left( \mI - \mDc \mB \right)$ is invertible. This happens if (\ref{|DcB|<1}) holds true. Conversely, if 
$\|\mathbf{B}\oline{\mathbf{D}}\|_2=1$ and, equivalently,
$\|\oline{\mathbf{D}}\mathbf{B}\|_2=1$, from (\ref{|BD|=1=|DB|}) we know that there exist band-limited signals that are perfectly localized over $\oline{\S}$. This implies that, if we sample one of such signals over the set $\S$, we get only zero values and then it would be impossible to recover $\bx$ from those samples. This proves that condition (\ref{|DcB|<1}) is also necessary. 
\end{proof}
Theorem \ref{theorem::sampling theorem} suggests also a way to recover the original signal from its samples as $\left( \mI - \mDc \mB \right)^{-1}\bxs$. Alternative recovery strategies will be suggested later on. Before considering the recovery algorithms,  we note that, if $\bx \in \B$ then
\begin{equation}
\label{db_equality}
\left( \mI - \mDc \mB \right) \bx = \mD \mB \bx.
\end{equation}
The operator $\mD \mB$ is invertible, for any $\bx \in \B$, if the dimensionality of the image of $\mD \mB$ is equal to the $\rank \mB$, i.e.
\begin{equation}
\label{db_equal_b}
\rank \mD \mB = \rank \mB.
\end{equation}
This condition is then equivalent to the condition of Theorem \ref{theorem::sampling theorem}. In this case the singular vectors of $\mD \mB$ corresponding to non-zero singular values constitute a basis for $\B$. 
In general, both conditions (\ref{|DcB|<1}) and (\ref{db_equal_b}) are equivalent to the sampling theorem conditions derived, for example, in \cite{anis2014towards} or \cite{chen2015discrete}. The interesting remark here is that formulating the sampling conditions as in  (\ref{|DcB|<1}) highlights a strict {\it link between sampling theory and uncertainty principle}. In fact, if we look at the top-left corner of the admissible region in Fig. \ref{fig:uncertainty}, it is clear that if the signal is perfectly band-limited over a subset $\F$, then $\beta^2=1$. To enable signal recovery from a subset of samples $\S$,  we need to avoid the possibility that  $\alpha^2=0$, because this would make signal recovery impossible. From Fig. \ref{fig:uncertainty}, it is clear that this is possible only if $\sigma_{max} \left( \mB \mDc \right)<1$, i.e. if (\ref{|DcB|<1}) holds true, as stated in Theorem \ref{theorem::sampling theorem}. More generally, if we allow for some energy spill-over in the frequency domain, so that we take $\beta=\oline{\beta}<1$, to avoid the condition $\alpha^2=0$, we need to check that
$\sigma_{max}^2 \left( \mB \mDc \right)<1-\oline{\beta}^2$. Having conditions in this form is indeed useful to devise possible sampling strategies, as it suggests to take $\sigma_{max} \left( \mB \mDc \right)$ as a possible objective function to be minimized. This topic will be addressed more closely in Section \ref{Sampling strategies}, when dealing specifically with sampling strategies. 

\begin{figure}[t]
\centering
\vspace{-1cm}
\includegraphics[width=\linewidth,height=\textheight,keepaspectratio]{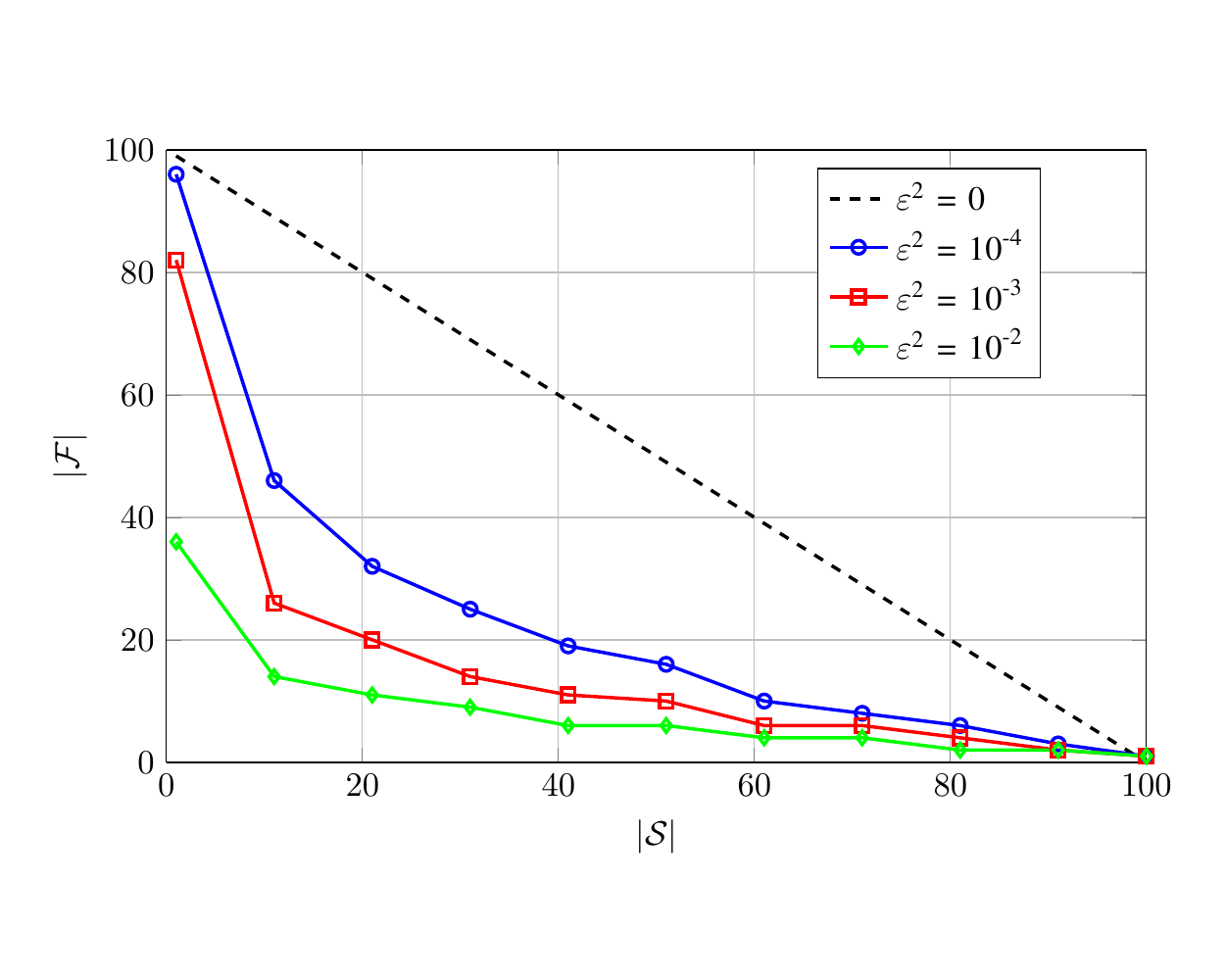}
\vspace{-1.5cm}
\caption{\small Relation between the dimensions of the support over the vertex set $\S$ and the frequency domain $\F$ guaranteeing 
a spill-over energy $\varepsilon^2$.}\label{fig::F_vs_S}
\vspace*{-0.5cm}
\end{figure}

The conceptual link between sampling theory and localization properties in the graph and dual domains is also useful to derive a signal recovery algorithm that builds on the properties of maximally concentrated signals described in Section \ref{Localization}, as established in the following.

\begin{theorem}
If condition (\ref{|DcB|<1}) of the sampling theorem holds true, then any band-limited signal $\bx \in \B$ can be reconstructed from its sampled version $\bx_{\scriptsize \S} \in \D$ by the following formula
\begin{equation}
\label{eq::sampling_theorem_formula}
\bx = \sum_{i=1}^{\scriptsize \abs{\F}} \frac{1}{\sigma_i^2} \langle \bxs, \bpsi_i \rangle \bpsi_i,
\end{equation}
where $\left\{ \bpsi_i \right\}_{i = 1 .. K}$ and $\left\{ \sigma^2_i \right\}_{i = 1..K}$ with $K=|\F|$, are the eigenvectors and eigenvalues of $\mB \mD \mB$.
\end{theorem}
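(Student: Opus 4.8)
The plan is to verify that the right-hand side of (\ref{eq::sampling_theorem_formula}) reconstructs $\bx$ by expanding $\bx$ itself in the Slepian basis $\{\bpsi_i\}$ and then showing the coefficients match. First I would recall from Theorem \ref{theorem::max_concentrated_vectors} that the eigenvectors $\{\bpsi_i\}_{i=1..K}$ of $\mB\mD\mB$ associated with the nonzero eigenvalues $\sigma_i^2>0$ form an orthonormal basis for $\B$; condition (\ref{|DcB|<1}) is exactly what guarantees $\sigma_i^2>0$ for all $i=1,\dots,K$ (equivalently $\rank\mD\mB=\rank\mB$ as in (\ref{db_equal_b})), so dividing by $\sigma_i^2$ is legitimate. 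Hence any $\bx\in\B$ may be written $\bx = \sum_{i=1}^{K}\langle \bx,\bpsi_i\rangle\,\bpsi_i$, and it suffices to show $\langle\bxs,\bpsi_i\rangle = \sigma_i^2\,\langle\bx,\bpsi_i\rangle$ for each $i$.

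The key computation is short: using $\bxs=\mD\bx$ and the self-adjointness of $\mD$,
\begin{equation}
\langle \bxs,\bpsi_i\rangle = \langle \mD\bx,\bpsi_i\rangle = \langle \bx,\mD\bpsi_i\rangle .
\end{equation}
Now since $\bx\in\B$ we have $\bx=\mB\bx$, so $\langle \bx,\mD\bpsi_i\rangle = \langle \mB\bx,\mD\bpsi_i\rangle = \langle \bx,\mB\mD\bpsi_i\rangle$; and using $\mB\bpsi_i=\bpsi_i$ (the $\bpsi_i$ are band-limited) together with the eigenrelation $\mB\mD\mB\bpsi_i=\sigma_i^2\bpsi_i$ from (\ref{slep_func:bdb_eigendecomposition}), we get $\mB\mD\bpsi_i=\mB\mD\mB\bpsi_i=\sigma_i^2\bpsi_i$. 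Therefore $\langle\bxs,\bpsi_i\rangle=\sigma_i^2\langle\bx,\bpsi_i\rangle$, and substituting into the Slepian expansion of $\bx$ yields precisely (\ref{eq::sampling_theorem_formula}). One should also note that the sum may be taken over $i=1,\dots,|\F|=K$ because the terms with $\sigma_i=0$ do not arise under (\ref{|DcB|<1}); equivalently, $\bxs$ has no component that is invisible to the reconstruction.

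The only real subtlety — and what I would flag as the main point to get right rather than a genuine obstacle — is the bookkeeping around which eigenvectors to include: one must invoke Theorem \ref{theorem::max_concentrated_vectors} to know that exactly $K=\rank\mB=|\F|$ of the $\bpsi_i$ span $\B$, and invoke the sampling condition (\ref{|DcB|<1}) (via (\ref{|BD|=1=|DB|}) and the equivalence with (\ref{db_equal_b})) to know that all those $K$ eigenvalues are strictly positive, so that the reciprocals $1/\sigma_i^2$ in (\ref{eq::sampling_theorem_formula}) are well defined. Once that is in place, the identity $\langle\bxs,\bpsi_i\rangle=\sigma_i^2\langle\bx,\bpsi_i\rangle$ does all the work and the proof closes in a couple of lines. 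A remark worth adding is that formula (\ref{eq::sampling_theorem_formula}) can be read as $\bx=(\mD\mB)^{\dagger}\bxs$ restricted to $\B$, i.e. it is the pseudo-inverse reconstruction expressed explicitly in the Slepian basis, which ties it back to the operator-level statement $\bx=(\mI-\mDc\mB)^{-1}\bxs$ of Theorem \ref{theorem::sampling theorem}.
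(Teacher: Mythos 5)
Your proof is correct and follows essentially the same route as the paper's: expand $\bx\in\B$ in the orthonormal eigenbasis $\{\bpsi_i\}_{i=1..K}$ of $\mB\mD\mB$, use condition (\ref{|DcB|<1}) to conclude that all $K=|\F|$ eigenvalues associated with band-limited eigenvectors are strictly positive, and then use self-adjointness of $\mB$ and $\mD$ together with the eigenrelation to identify $\langle\bxs,\bpsi_i\rangle=\sigma_i^2\langle\bx,\bpsi_i\rangle$ (the paper writes the same computation in the reverse direction, substituting $\bpsi_i=\sigma_i^{-2}\mB\mD\mB\bpsi_i$ inside the inner product). Your explicit justification that $\ker(\mB\mD\mB)\cap\B=\{\b0\}$ under the sampling condition is exactly the point the paper also flags, so there is no gap.
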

\begin{proof}
For band-limited projection of any $\bg$, we can write 
\begin{equation}
\mB \bg = \sum_{i = 1}^{K} \langle \mB \bg, \bpsi_i \rangle \bpsi_i.
\end{equation}
Because of (\ref{|DcB|<1}), there is no  band-limited vector in $\B$ perfectly localized on $\oline{\S}$. Hence, all the eigenvectors from $\ker (\mB \mD \mB)$ belong to $\oline{\B}$, so  that $K = |\F|$.
Setting $\bx = \mB \bg$, since $\mB \mD \mB \bpsi_i=\sigma_i^2 \bpsi_i$ with $\sigma_i \neq 0$ for $i\in \F$, we can then write
\begin{equation}
\bx = \sum_{i = 1}^{\scriptsize \abs{\F}} \langle \bx, \frac{1}{\sigma^2_i} \mB \mD \mB \bpsi_i \rangle \bpsi_i =
\sum_{i = 1}^{\scriptsize \abs{\F}} \frac{1}{\sigma^2_i}  \langle \mD \bx, \bpsi_i \rangle \bpsi_i,
\end{equation}
where we have used the property that the operators $\mD$ and $\mB$ are self-adjoint and the eigenvectors $\left\{ \bpsi_i \right \}_{\scriptsize i=1,\dots,\abs{\F}}$ are band-limited. 
\end{proof}

\begin{figure}[t]
\vspace{-1cm}
\centering
\includegraphics[width=\linewidth]{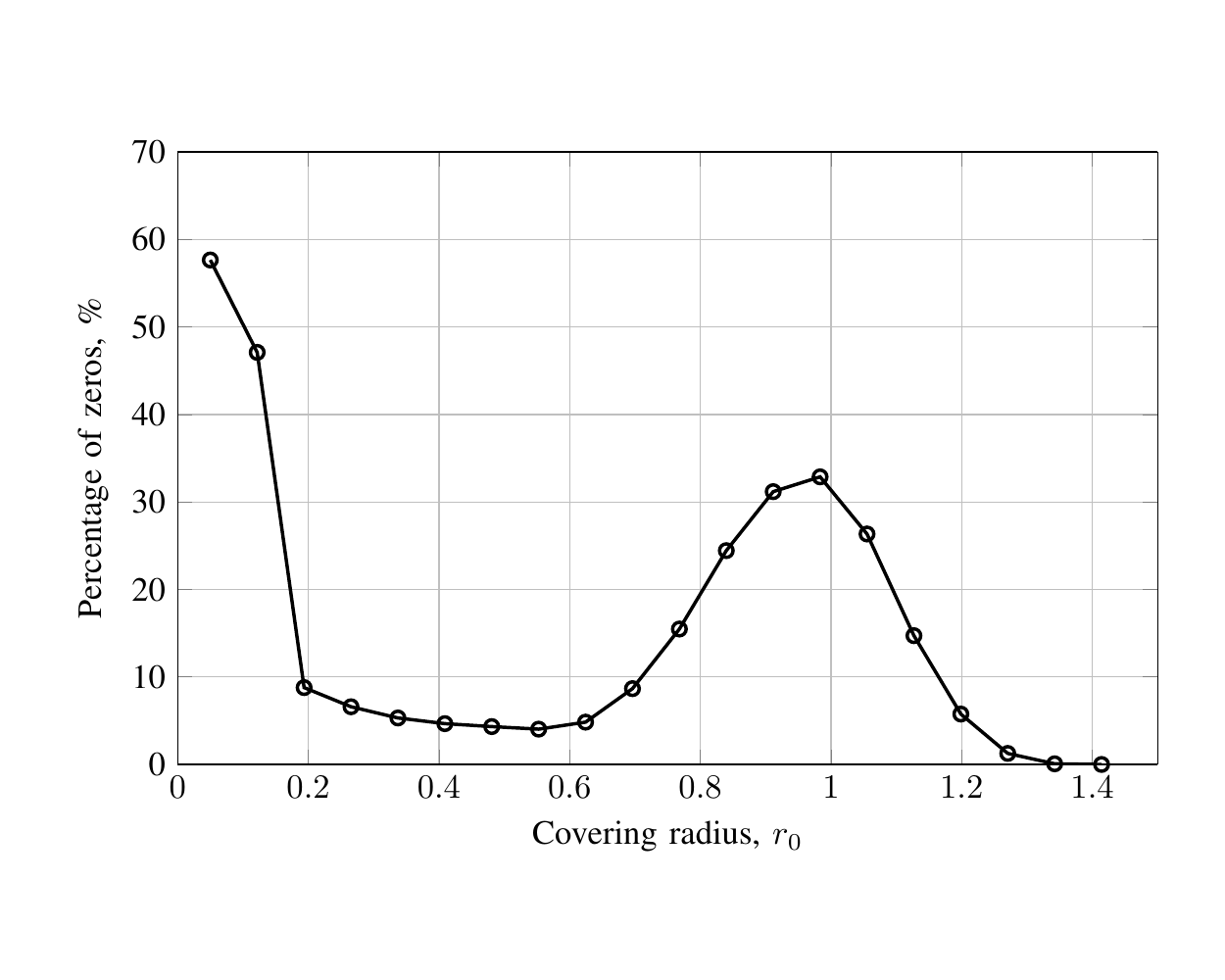}
\vspace{-1.5cm}
\caption{\small Percentage of vanishing entries of the Laplacian eigenvectors of a RGG vs. coverage radius $r_0$.}
\vspace{-0.3cm}
\label{zeros}
\end{figure}
Let us study now the implications of condition (\ref{|DcB|<1}) of Theorem \ref{theorem::sampling theorem} on the sampling strategy. To fulfill (\ref{|DcB|<1}), we need to guarantee that there exist no band-limited signals, i.e. $\mB \bx =\bx$, such that $\mB\mDc\bx=\bx$. 
To make (\ref{|DcB|<1}) hold true, we must then ensure that $\mB \mDc \bx \neq \bx$ or, equivalently, recalling Lemma \ref{theorem::sing_val_bd_db},  $ \mDc \mB \bx \neq \bx$. Since
\begin{equation}
\mB \bx = \bx = \mD\mB \bx+\mDc \mB \bx,
\end{equation}
we need to guarantee that  $\mD\mB \bx \neq\b0$. To this purpose, let us define the $|\S| \times |\F|$ matrix $\mG$ as
\begin{equation}
\mG=
\left(
\begin{array}{llll}
u_{i_1}(j_1) & u_{i_2}(j_1) & \cdots & u_{i_{\tiny|\F|}}(j_1)\nonumber\\
\vdots & \vdots & \vdots & \vdots \nonumber\\
u_{i_1}(j_{\tiny|\S|}) & u_{i_2}(j_{\tiny|\S|}) & \cdots & u_{i_{\tiny|\F|}}(j_{\tiny|\S|})
\end{array}
\right)
\end{equation}
whose $\ell$-th column is the eigenvector of index $i_{\ell}$ of the Laplacian matrix (or any orthonormal set of basis vectors), sampled at the positions indicated by the indices $j_1, \ldots, j_{\tiny|\S|}$. Condition (\ref{|DcB|<1}) is equivalent to require $\mG$ to be full column rank. 

Indeed, the eigenvectors of a graph Laplacian may contain several vanishing elements, so that matrix $\mG$ may easily loose rank.  
As an extreme case, if the graph is not connected, the vertices can be labeled so that the Laplacian (adjacency) matrix can be written as a block diagonal matrix, with a number of blocks equal to the number of connected components. Correspondingly, each eigenvector of $\mL$ can be expressed as a vector having all zero elements, except the entries corresponding to the connected component, which that eigenvector is associated to. This implies that, if there are no samples over the vertices corresponding to the non-null entries of the eigenvectors with index included in $\F$, $\mG$ looses rank. In principle, a signal defined over a disconnected graph can still be reconstructed from its samples, but only provided that the number of samples belonging to each connected component is at least equal to the number of eigenvectors with indices in ${\F}$ associated to that component. More generally, even if the graph is connected, there may easily occur situations where matrix $\mG$ is not rank-deficient, but it is ill-conditioned, depending on graph topology and samples' location. 

A numerical example is useful to grasp the criticality associated to sampling. In Fig. \ref{zeros}, we report the percentage of vanishing ($< 1.e-10$) entries of the Laplacian eigenvectors of a random geometric graph (RGG), composed of $N=100$ nodes uniformly distributed over a unit square, vs. coverage radius $r_0$. 
The results shown in Fig. \ref{zeros} are obtained by averaging over $100$ independent realizations of RGG's.
The behavior of the curve can be explained as follows. The value of $r_0$ that ensures the graph connectivity with high probability is approximately $r_0 \approx \sqrt{\log(N)/N} \approx 0.2$. This means that, for $r_0<0.2$, there are disconnected components and this explains the high number of zeros. For $0.2 < r_0<0.6$ the graph is typically composed of a giant component and the number of vanishing entries is relatively low. Then, for $0.6<r_0<1.2$ the graph is connected with very high probability, but there appear clusters and the eigenvectors of the Laplacian may have several entries close to zeros as a way to evidence the presence of clusters. Finally, for $r_0>1.2$ the graph tends to be fully connected and there are no zero entries anymore. 
We can see from Fig. \ref{zeros} that the percentage of vanishing entries can be significant. 
This implies that the location of samples plays a key role in the performance of the reconstruction algorithm. For this reason, In Section \ref{Sampling strategies} we will suggest and compare a few alternative sampling strategies satisfying different optimization criteria.

{\it Frame-based reconstruction}: The problem of sampling on graphs using frames for the space $\B$ was initially studied by \cite{pesenson2008sampling}, \cite{pesenson2010sampling}, where the conditions for the existence of such frames were derived. Here we approach the problem using the above developed theory of maximally vertex-frequency concentrated signals on graph. First we provide some basic definitions of the frame theory \cite{duffin1952class}.
\begin{definition}
A set of elements $\left\{ \bg_i \right\}_{ i \in \I}$, is a frame for the Hilbert space $\H$, if for all $\bff \in \H$ there exist constants $0 < A \leq B < \infty$ such that
\begin{equation}
\label{frame_def}
A \| \bff \|^2_2 \leq \sum_{i \in \I} | \langle \bff, \bg_i \rangle |^2 \leq B \| \bff \|^2_2.
\end{equation}
\begin{definition}
Given a frame $\left\{ \bg_i \right\}_{ i \in \I}$, the linear operator $\mT: \H \rightarrow \H$ defined as
\begin{equation}
\label{frame_operator}
\mT \bff = \sum_{i \in \I} \langle \bff, \bg_i \rangle \bg_i
\end{equation}
is called the frame operator. 
\end{definition}
Constants $A$ and $B$ are called frame bounds, while the largest $A$ and the smallest $B$ are called the tightest frame bounds.
It is useful to note that condition (\ref{frame_def}) guarantees the frame operator $\mT$ to be bounded and invertible.
\end{definition}
Now, introducing the canonical basis vector $\bdelta_u$, with $u\in {\cal V}$, i.e. having all zero entries except the $u$-th entry equal to $1$, we investigate under what conditions a set of vectors $\left\{ \mB \bdelta_u \right\}_{\scriptsize u \in \S} $ constitutes a frame for $\B$. The frame operator in this case is
\begin{equation}
\label{dirac_frame_expansion}
\mT_\delta \bff = \sum_{\scriptsize u \in \S} \langle \bff, \mB \bdelta_u \rangle \mB \bdelta_u= \sum_{\scriptsize u \in \S} f(u) \mB \bdelta_u.
\end{equation}

First, we observe that the frame operator $\mT_\delta$, as defined in (\ref{dirac_frame_expansion}), may be also expressed as
\begin{equation}
\label{t_bd}
\mT_\delta = \mB \mD \mD \mB = \mB \mD \mB.
\end{equation}
Operator $\mT_\delta$ has a spectral norm $\| \mT_\delta \|_2$ equal to $\sigma^2_{max} \left( \mB \mD \right)$. Hence,  to guarantee that $\left\{ \mB \bdelta_u \right\}_{\scriptsize u \in \S} $ is a frame, it is sufficient to check when $\mT_\delta$ is invertible, for any $\bff \in \B$. The operator $\mB \mD$, on its turn, is invertible for any $\bff \in \B$ if its singular vectors, not belonging to its kernel, constitute a basis for the $\abs{\F}$-dimensional space $\B$, or, formally, if and only if
\begin{equation}
\label{bd_equal_b}
\rank \mB \mD \mB = \rank \mB.
\end{equation}
Taking into account (\ref{db_equal_b}) and Lemma \ref{theorem::sing_val_bd_db}, we conclude that the condition for a frame-based reconstruction based on a canonical-vector frames coincides with the condition of Theorem \ref{theorem::sampling theorem}.

In general, however, the reconstruction based on the canonical-vector frame may be non robust in the presence of observation noise. For this reason, we generalize the sampling frame operator $\mT_{\delta}$ by introducing the operator $\mT_Y$ as
\begin{equation}
\label{eq::Y_def}
\mT_Y \bff = \mB \mY \mD \mB \bff = \sum_{\scriptsize u \in \S} f(u) \by_u,
\end{equation}
where $\mY$ is a bounded matrix whose columns $\by_i$, without loss of generality, can be taken belonging to $\B$, i.e. $\mB \by_i=\by_i$, so that the image of $\mY$ is also $\F$-band-limited. Let us consider now the reconstruction of $\bff \in \B$ from its samples on $\S$, based on $\mT_Y$. This requires  checking under what conditions the operator $\mT_Y$ is bounded and invertible.  Since the columns of $\mY \mD$ corresponding to indices that do not belong to the set $\S$ are null, we can limit our attention to matrices $\mY$ that are invariant to the right-side multiplication by $\mD$, i.e. $\mY \mD = \mY$. Finally, we arrive at the following sampling theorem.
\begin{theorem}
\label{theorem::frame_theorem}
Let $\F \subseteq \V^*$ be the set of frequencies and $\S \subseteq \V$ be the sampling set of vertices and let $\mY : \B_{\F} \rightarrow \mathbb{C}^N$ be an arbitrary bounded operator, then $\left\{ \mB\by_i \right\}_{\scriptsize i \in \S}$ is a frame for $\B_{\F}$ if and only if
\begin{equation}
\label{bad_equal_b}
\rank \mB \mY \mD \mB = \rank \mB.
\end{equation}
\end{theorem}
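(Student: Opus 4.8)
The plan is to unwind the word ``frame'' into a statement about the operator $\mT_Y$ and then close that statement with a one-line rank/dimension count; no deep obstacle is involved, and the only thing that needs care is keeping straight which subspace each operator acts on. As made explicit in the sentence preceding the theorem — and consistently with the observation that, in finite dimensions, condition (\ref{frame_def}) is equivalent to boundedness and invertibility of the frame operator — calling $\left\{ \mB \by_u \right\}_{u \in \S}$ a frame for $\B_{\F}$ means exactly that the generalized frame operator $\mT_Y$ of (\ref{eq::Y_def}) is bounded and invertible when viewed as a map of $\B_{\F}$ into itself; equivalently, that every $\bff \in \B_{\F}$ is stably recoverable from its samples on $\S$ through $\bff = \mT_Y^{-1}\big( \sum_{u \in \S} f(u)\, \by_u \big) = \mT_Y^{-1}(\mT_Y \bff)$.

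First I would dispose of boundedness, which is automatic: since $\mY$ is bounded and $\|\mB\|_2 \le 1$, $\|\mD\|_2 \le 1$, we have $\|\mT_Y\|_2 = \|\mB \mY \mD \mB\|_2 \le \|\mY\|_2 < \infty$, so only invertibility is at issue. As explained just before the statement, we may assume without loss of generality that $\mY \mD = \mY$ and $\mB \mY = \mY$ (the columns of $\mY$ outside $\S$ are irrelevant, and those inside may be taken band-limited), so that $\by_u = \mB \by_u \in \B_{\F}$ for $u \in \S$; then $\mT_Y = \mB \mY \mD \mB$ annihilates $\oline{\B}$ and maps $\B_{\F}$ into $\B_{\F}$, hence restricts to an endomorphism $\mT_Y|_{\B_{\F}}$ of $\B_{\F}$.

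The core step is then a dimension count. Writing $\mathbb{C}^N$ as the orthogonal direct sum $\B_{\F} \oplus \oline{\B}$, and using that $\mB \mY \mD \mB$ kills $\oline{\B}$ while its range is contained in $\B_{\F}$, the rank of the $N \times N$ matrix $\mB \mY \mD \mB$ equals the rank of $\mT_Y|_{\B_{\F}}$. Since $\dim \B_{\F} = \rank \mB$, the endomorphism $\mT_Y|_{\B_{\F}}$ is invertible if and only if it is injective, if and only if it is surjective, if and only if its rank equals $\rank \mB$ — that is, if and only if $\rank \mB \mY \mD \mB = \rank \mB$, which is (\ref{bad_equal_b}). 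This settles both implications at once. As a consistency check, $\mY = \mI$ gives $\mT_Y = \mB \mD \mB = \mT_\delta$ and reduces (\ref{bad_equal_b}) to $\rank \mB \mD \mB = \rank \mB$, i.e. (\ref{bd_equal_b}); combined with Lemma \ref{theorem::sing_val_bd_db} and Theorem \ref{theorem::sampling theorem} this recovers the canonical-vector-frame condition $\|\mB \mDc\|_2 < 1$. The only genuinely delicate point is expository rather than mathematical: ``frame'' must here be read as invertibility of $\mT_Y$ on $\B_{\F}$, a condition strictly stronger than merely asking that $\left\{ \mB \by_u \right\}_{u \in \S}$ span $\B_{\F}$.
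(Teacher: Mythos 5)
Your proposal is correct and takes the same route as the paper, whose entire proof is the one-line assertion that the claim ``follows directly from the invertibility conditions for the operator $\mB \mY \mD \mB$''; you simply fill in those details (boundedness is automatic, and the rank/dimension count shows that $\mT_Y$ restricted to $\B_{\F}$ is invertible if and only if $\rank \mB \mY \mD \mB = \rank \mB$). Your closing caveat is apt and worth keeping: under the literal frame-inequality definition (\ref{frame_def}) the spanning condition $\rank \mB \mY \mD = \rank \mB$ would suffice and is in general strictly weaker, so the theorem must indeed be read, as the paper implicitly does, with ``frame'' meaning invertibility of the reconstruction operator $\mT_Y$ on $\B_{\F}$.
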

\begin{proof}
The proof follows directly from the invertibility conditions for the operator $\mB \mY \mD \mB$. 
\end{proof}
The tightest frame bounds, according to the Rayleigh-Ritz theorem, are defined by the minimum and maximum singular values of $\mB \mY \mD \mB$
\begin{equation}
\label{A_frame_bounds}
\sigma_{min}  \| \bff \|^2_2 \leq \sum_{u \in S} | \langle \bff, \by_u \rangle |^2 \leq \sigma_{max} \| \bff \|^2_2,
\end{equation}
which is valid for every $\bff \in \B$. 
As an example of matrix $\mY$, encompassing the approaches  proposed in \cite{wang2014local} and  \cite{pesenson2015sampling}, we have the following frame operator
\begin{equation}
\label{eq::limited_propagation}
\mT_1 \bff = \mB \mY_1 \mD \mB \bff =  \sum_{u \in \S } f(u) \mB \bdelta_{\scriptsize \N (u)},
\end{equation}
where $\bdelta_{\scriptsize \N (u)}$ is the indicator function of set $\N(u)$, defined as $\delta_{\scriptsize \N(u)} (v) = 
1$, if $ \ v \in \N(u)$, and zero otherwise. 
In this case, the graph signal is supposed to be sampled sparsely in such a way that around each sampled vertex there is a non-empty neighborhood $\N(u)$ of vertices that altogether could cover the whole graph. However, this choice is not necessarily the best one. In Section 
\ref{Sampling strategies} we will provide numerical results showing how the generalized frame-based approach can yield better performance results in the presence of observation noise. 

\section{Reconstruction from noisy observations}
\label{sec::Reconstruction from noisy observations}
Let us consider now the reconstruction of band-limited signals from noisy samples, where
the observation model is
\begin{equation}
\label{r=D(s+n)}
\br = \mD \left( \bs + \bn \right),
\end{equation}
where $\bn $ is a noise vector. Applying (\ref{eq::sampling_theorem_formula}) to $\br$, the reconstructed signal $\tilde{\bs}$ is
\begin{equation}
\label{eq::signal_plus_noise_expansion}
\tilde{\bs} = \sum_{i=1}^{\abs{\F}} \frac{1}{\sigma_i^2} \langle \mD \bs, \bpsi_i \rangle \bpsi_i + \sum_{i=1}^{\abs{\F}} \frac{1}{\sigma_i^2} \langle \mD \bn, \bpsi_i \rangle \bpsi_i.
\end{equation}
Exploiting the orthonormality of $\bpsi_i$, the mean square error is
\begin{align}
\label{eq::expected_value_noise}
MSE &= \mathbb{E}\left\{\| \tilde{\bs}-\bs \|_2^2 \right\} =\mathbb{E}\left\{ \sum_{i=1}^{\abs{\F}} \frac{1}{\sigma_i^4} \abs{\langle \mD \bn, \bpsi_i \rangle }^2 \right\} \nonumber \\
%&= \mathbb{E}\left\{ \sum_{i=1}^{\abs{\F}} \sum_{j=1}^{\abs{\F}} \frac{1}{\sigma_i^2 \sigma_j^2} \langle \mD \bn, \bpsi_i \rangle \langle \mD \bn, 
%\bpsi_j \rangle \langle \bpsi_i,\bpsi_j \rangle \right\} \nonumber \\
&=  \sum_{i=1}^{\abs{\F}} \frac{1}{\sigma_i^4}   \bpsi_i^* \mD \mathbb{E}\left\lbrace \bn \bn^* \right\rbrace \mD \bpsi_i.
\end{align}
In case of identically distributed uncorrelated noise, i.e. $\mathbb{E} \left\lbrace \bn \bn^* \right\rbrace=\beta_n^2 \mI$, using (\ref{eq:orthogonality_on_sampling_set}), we get
%\begin{align}
%&= \sum_{i=1}^{|F|} \frac{1}{\sigma_i^4}  \trace \left(  \mD \bpsi_i \bpsi_i^T \mD \, \mathbb{E} \left\lbrace \bn \bn^T \right\rbrace \right) \nonumber \\
%&= \sum_{i=1}^{|F|} \frac{1}{\sigma_i^4}  \trace \left(  \mD \bpsi_i \bpsi_i^T \mD \mC_n \right) \nonumber \\
%&= \sum_{i=1}^{|F|} \frac{1}{\sigma_i^4} \sum_{k,j \in S} c_{kj} \psi_i(k) \psi_i(j)
%,
%\end{align}
%where $c_{kj} := \left( \bC_n \right)_{kj}$ and the orthonormality of $ \left\{ \bpsi_i \right\}_{i=1..N} $ was used. In the common case of uncorrelated noise samples, with covariance matrix $\mC_n = \beta^2_n \mI$, we obtain
\begin{align}
\label{eq::mse_gaussian_uncorrelated}
MSE_G &= \sum_{i=1}^{\abs{\F}} \frac{\beta^2_n}{\sigma_i^4}  \trace \left(  \mD \bpsi_i \bpsi_i^* \mD  \right) \nonumber \\
&= \sum_{i=1}^{\abs{\F}} \frac{\beta^2_n}{\sigma_i^4}  \trace \left( \bpsi_i^* \mD \bpsi_i \right) = \beta^2_n \sum_{i=1}^{\abs{\F}} \frac{1}{\sigma_i^2}.
\end{align}
Since the non-null singular values of the Moore-Penrose left pseudo-inverse $\left( \mB \mD \right)^+$ are the inverses of singular values of $\mB \mD$, i.e. $\lambda_i \left( \left(\mB \mD \mB \right)^+ \right) = \lambda^{-1}_i \left( \mB \mD \mB \right)$,  (\ref{eq::mse_gaussian_uncorrelated}) can be rewritten as
\begin{equation}
\label{eq::mse_frobenius}
MSE_G = \beta^2_n \, \| \left(\mB \mD \mB \right)^+ \|_F.
\end{equation}
Proceeding exactly in the same way, the  mean square error for the frame-based sampling scheme (\ref{eq::Y_def}) is
\begin{equation}
\label{eq::mse_frobenius_frame}
MSE_F = \beta^2_n \, \| \left(\mB \mY \mD \mB \right)^+ \|_F.
\end{equation}
Based on previous formulas, a possible optimal sampling strategy consists in selecting the vertices that minimize (\ref{eq::mse_frobenius})
or (\ref{eq::mse_frobenius_frame}). This aspect will be analyzed in Section \ref{Sampling strategies}.

\subsection{$\ell_1$-norm reconstruction}
Let us consider now a different observation model, where a band-limited signal $\bs \in \B$ is observed everywhere, but a subset of nodes $\S$ is strongly corrupted by noise, i.e.
\begin{equation}
\label{r=s+Dn}
\br = \bs + \mD \bn,
\end{equation} 
where the noise is arbitrary but bounded, i.e., $\| \bn \|_1 < \infty$.
This model was considered in \cite{donoho1989uncertainty} and it is relevant, for example, in sensor networks, where a subset of sensors can be damaged or highly interfered. The problem in this case is whether it is possible to recover the signal $\bs$ exactly, i.e. irrespective of noise. Even though this is not a sampling problem, the solution is still related to sampling theory. Clearly, if the signal $\bs$ is band-limited and if the indices of the noisy observations are known, the answer is simple: $\bs \in \B$ can be perfectly recovered from the noisy-free observations, i.e. by completely discarding the noisy observations, if the sampling theorem condition (\ref{|DcB|<1}) holds true. But of course, the challenging situation occurs when the location of the noisy observations is not known. In such a case, we may resort to an  $\ell_1$-norm minimization, by formulating the problem as follows
\begin{equation}
\label{eq::l1_minimization}
\tilde{\bs} = \arg \min_{\scriptsize \bs' \in \B} \| \br - \bs' \|_1.
\end{equation}
We will show next under what assumptions it is still possible to recover a band-limited signal {\it perfectly}, even without knowing exactly the position of the corrupted observations.

To start with, the following lemma, which is known as the null-space property \cite{foucart2013mathematical}, provides a necessary and sufficient condition for the convergence of (\ref{eq::l1_minimization}). 
\begin{lemma}
\label{Lemma_perfect_l1_recovery}
Given the observation model (\ref{r=s+Dn}), if for any $\bs \in \B$,
\begin{equation}
\label{eq::half_energy_inside_l1}
\| \mD \bs \|_1 < \| \mDc \bs \|_1,
\end{equation}
then the $\ell_1$-reconstruction algorithm (\ref{eq::l1_minimization}) is able to recover $\bs$ perfectly.
\end{lemma}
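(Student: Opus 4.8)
The plan is to establish this as the standard null-space property for $\ell_1$-recovery, specialized to the case where the ``measurement'' subspace is $\B$ and the corrupted support is $\S$. First I would observe that, writing $\bs' = \bs - \bh$ with $\bh \in \B$ (since both $\bs$ and $\bs'$ lie in $\B$, their difference does too), the objective in (\ref{eq::l1_minimization}) becomes $\|\br - \bs'\|_1 = \|\bs + \mD\bn - \bs + \bh\|_1 = \|\mD\bn + \bh\|_1$. The true signal $\bs$ corresponds to $\bh = \b0$, achieving value $\|\mD\bn\|_1 = \|\mD(\mD\bn)\|_1$ (the noise is supported on $\S$). So it suffices to show that for every nonzero $\bh \in \B$ we have $\|\mD\bn + \bh\|_1 > \|\mD\bn\|_1$ whenever (\ref{eq::half_energy_inside_l1}) holds.

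The key step is the splitting $\|\mD\bn + \bh\|_1 = \|\mD(\mD\bn + \bh)\|_1 + \|\mDc(\mD\bn + \bh)\|_1$, valid because $\mD$ and $\mDc$ have disjoint supports. On the complement set $\oline{\S}$ the noise vanishes, so $\|\mDc(\mD\bn + \bh)\|_1 = \|\mDc \bh\|_1$. On $\S$ we use the reverse triangle inequality: $\|\mD(\mD\bn + \bh)\|_1 \ge \|\mD\bn\|_1 - \|\mD\bh\|_1$. Combining,
\begin{equation}
\|\mD\bn + \bh\|_1 \ge \|\mD\bn\|_1 - \|\mD\bh\|_1 + \|\mDc\bh\|_1.
\end{equation}
Hypothesis (\ref{eq::half_energy_inside_l1}) gives $\|\mDc\bh\|_1 - \|\mD\bh\|_1 > 0$ for every nonzero $\bh \in \B$, so the right-hand side strictly exceeds $\|\mD\bn\|_1$. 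Hence any $\bs' \ne \bs$ in $\B$ yields a strictly larger objective value, and the minimizer is unique and equal to $\bs$.

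The only genuine subtlety — and the step I would be most careful about — is the strictness and the handling of the reverse triangle inequality when $\bh$ is not small: one must check that the inequality $\|\mD\bn + \bh\|_1 > \|\mD\bn\|_1$ is strict for \emph{all} nonzero $\bh \in \B$, not merely for generic ones. This follows because the term $\|\mDc\bh\|_1 - \|\mD\bh\|_1$ is \emph{strictly} positive by hypothesis (it cannot be zero for $\bh \ne \b0$, since (\ref{eq::half_energy_inside_l1}) is stated as a strict inequality for any $\bs \in \B$), so the slack is bounded below by a positive quantity independent of the reverse-triangle-inequality tightness. No appeal to the magnitude of $\bn$ is needed, which is exactly why perfect recovery holds irrespective of the noise level. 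I would also remark that the converse — necessity of (\ref{eq::half_energy_inside_l1}) for recovery against \emph{all} bounded $\bn$ supported on $\S$ — follows by the usual argument: if some nonzero $\bh \in \B$ violated it, choosing $\bn$ on $\S$ aligned with $-\bh$ appropriately would make $\bs - \bh$ a competitor with objective no larger than that of $\bs$, defeating exact recovery.
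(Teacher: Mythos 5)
Your proof is correct and follows essentially the same route as the paper's: the same decomposition $\|\bv\|_1 = \|\mD\bv\|_1 + \|\mDc\bv\|_1$, the same reverse triangle inequality on $\S$, and the same use of hypothesis (\ref{eq::half_energy_inside_l1}) to get strictness; you merely collapse the paper's two-step argument (noise-only case, then general case) into a single pass by substituting $\bh = \bs - \bs' \in \B$ directly. The closing remark on necessity is extra but consistent.
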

\begin{proof} 
To prove this, we show first that for a signal consisting of noise only, i.e. $\br=\mD \bn$, the best band-limited $\ell_1$-norm approximation $\bg$ to this signal is the zero vector. In fact,
\begin{align}
\| \mD \bn - \bg \|_1 &= \| \mD \left( \bn - \bg \right) \|_1 + \| \mDc \bg \|_1 \nonumber \\ 
&\geq \| \mD \bn \|_1 - \| \mD \bg \|_1 + \| \mDc \bg \|_1 \nonumber \\ 
&> \| \mD \bn \|_1.
\end{align}
Now, suppose instead that $\bs \neq \b0$. We can observe that
\begin{equation}
\label{signal_approx_bandlim}
\| \br - \bg \|_1 = \| \bs + \mD \bn - \bg \|_1 = \| \mD \bn + \left( \bs - \bg \right) \|_1,
\end{equation}
i.e. the best band-limited approximation $\bg$ to $\bs$ is $\bs$.
Since we proved before that, under (\ref{eq::half_energy_inside_l1}), the best band-limited approximation of $\mD \bn$ is the null vector, (\ref{signal_approx_bandlim}) is minimized by the vector $\tilde{\bs}=\bs$. 
\end{proof}
From the previous lemma it is hard to say if, for a given $\S$ and $\F$, condition (\ref{eq::half_energy_inside_l1}) holds or not. Next lemma provides such a condition.
\begin{lemma}
\label{theorem::l1_reconstruction_condition_known_S}
Given the observation model (\ref{r=s+Dn}), if
\begin{equation}
\label{eq::condition_l1}
\max_{j \in \F} \sum_i \abs{\left( \mD \mB \right)_{ij}} < \min_{j \in \F} \sum_i \abs{\left( \mDc \mB \right)_{ij}},
\end{equation}
then the $\ell_1$-reconstruction method (\ref{eq::l1_minimization}) recovers any signal $\bs \in \B$ perfectly, i.e. $\tilde{\bs} = \bs$.
\end{lemma}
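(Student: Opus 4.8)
The plan is to deduce the claim from the null-space criterion of Lemma~\ref{Lemma_perfect_l1_recovery}: it suffices to show that hypothesis~(\ref{eq::condition_l1}) implies~(\ref{eq::half_energy_inside_l1}), i.e., that $\|\mD\bs\|_1 < \|\mDc\bs\|_1$ for every nonzero $\bs\in\B$, after which Lemma~\ref{Lemma_perfect_l1_recovery} yields that the $\ell_1$ program~(\ref{eq::l1_minimization}) returns $\tilde{\bs}=\bs$ under the model~(\ref{r=s+Dn}).

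To pass from~(\ref{eq::condition_l1}) to~(\ref{eq::half_energy_inside_l1}), I would exploit band-limitedness to express both sides through the columns of $\mD\mB$ and $\mDc\mB$ indexed by $\F$. Since $\bs=\mB\bs$, one has $\mD\bs=\mD\mB\bs$ and $\mDc\bs=\mDc\mB\bs$; writing $\bs=\sum_{k\in\F}\hat{s}_k\,\bu_k$ and using $\mB\bu_k=\bu_k$ shows that only the $\F$-indexed columns of these two matrices contribute to the coordinates $(\mD\bs)_i$ and $(\mDc\bs)_i$. Bounding $\|\mD\bs\|_1$ coordinatewise by the triangle inequality then produces an estimate of the form $\|\mD\bs\|_1 \le \bigl(\max_{j\in\F}\sum_i|(\mD\mB)_{ij}|\bigr)\,C(\bs)$, where $C(\bs)$ is the $\ell_1$-mass of the GFT coefficients of $\bs$; the goal is to pair this with a matching lower estimate $\|\mDc\bs\|_1 \ge \bigl(\min_{j\in\F}\sum_i|(\mDc\mB)_{ij}|\bigr)\,C(\bs)$. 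Substituting the two estimates into~(\ref{eq::half_energy_inside_l1}) and cancelling $C(\bs)>0$ leaves exactly inequality~(\ref{eq::condition_l1}).

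The delicate --- and, I expect, genuinely hard --- step is the lower bound on $\|\mDc\bs\|_1$: the triangle inequality is by nature an upper-bounding device, so obtaining a clean bound of the form $\bigl(\min_{j\in\F}\sum_i|(\mDc\mB)_{ij}|\bigr)\,C(\bs)$ requires isolating the dominant GFT component of $\bs$ on $\oline{\S}$ and keeping the interference from the remaining components under control; this is precisely where the sharp, per-column form of~(\ref{eq::condition_l1}), rather than a cruder aggregate bound, must be used. Once this lower bound is available, the remaining pieces --- the reduction to Lemma~\ref{Lemma_perfect_l1_recovery}, the expansion of $\bs$ on the band-limited basis, and the coordinatewise triangle-inequality bookkeeping --- are routine.
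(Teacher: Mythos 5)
Your overall strategy coincides with the paper's: reduce to the null-space criterion of Lemma~\ref{Lemma_perfect_l1_recovery}, use $\bs=\mB\bs$ to write $\mD\bs=\mD\mB\bs$ and $\mDc\bs=\mDc\mB\bs$, and then compare an upper bound on $\|\mD\bs\|_1$ with a lower bound on $\|\mDc\bs\|_1$ expressed through column sums. The paper does exactly this by asserting the two extremal identities $\sup_{\bg\in\B,\,\|\bg\|_1=1}\|\mD\mB\bg\|_1=\max_{j\in\F}\sum_i|(\mD\mB)_{ij}|$ and $\inf_{\bg\in\B,\,\|\bg\|_1=1}\|\mDc\mB\bg\|_1=\min_{j\in\F}\sum_i|(\mDc\mB)_{ij}|$ and then chaining them. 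However, your proposal does not actually prove the statement: you explicitly defer the lower bound on $\|\mDc\bs\|_1$, calling it the ``genuinely hard'' step, and offer only a heuristic (``isolating the dominant GFT component \ldots keeping the interference under control'') with no argument. Since that lower bound is precisely where all the content of the lemma lives --- the upper bound is a routine triangle inequality, and a lower bound on an $\ell_1$ norm over a subspace does not follow from the triangle inequality at all --- what you have is a correct identification of the proof skeleton with the load-bearing step missing. That is a genuine gap, not a stylistic omission.

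There is also a secondary mismatch in the bookkeeping you sketch. If you expand $\bs=\sum_{k\in\F}\hat{s}_k\bu_k$ and apply the triangle inequality coordinatewise, the quantity that appears is $\max_{k\in\F}\|\mD\bu_k\|_1=\max_{k\in\F}\sum_i|(\mD\mU)_{ik}|$, i.e.\ a column sum of $\mD\mU$ indexed by \emph{frequency}, multiplied by $\|\hat{\bs}\|_1$. The hypothesis~(\ref{eq::condition_l1}) instead involves column sums of $\mD\mB=\mD\mU\mSigma_{\F}\mU^*$ indexed by $j\in\F$ viewed as \emph{vertex} indices (the columns $\mD\mB\bdelta_j$), and these two families of column sums are not the same object. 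So even the ``routine'' upper bound, as you have set it up, does not land on the left-hand side of~(\ref{eq::condition_l1}) without an additional argument relating the two expansions (equivalently, choosing to expand $\bs=\mB\bs=\sum_j s_j\,\mB\bdelta_j$ over the vertex index $j$ rather than over the GFT basis, and then justifying why only $j\in\F$ matters). To make the proposal into a proof you would need to (i) fix the choice of expansion so that the column sums you obtain are those of $\mD\mB$ and $\mDc\mB$ appearing in~(\ref{eq::condition_l1}), and (ii) supply an actual derivation of the lower bound $\inf_{\bg\in\B,\,\|\bg\|_1=1}\|\mDc\mB\bg\|_1\geq\min_{j\in\F}\sum_i|(\mDc\mB)_{ij}|$, which is the identity the paper invokes.
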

\begin{proof}
Since
\begin{equation}
\sup_{\substack{\bg \in \B \\ \| \bg \|_1 = 1}} \| \mD \mB \bg \|_1 = \max_{j \in \F} \sum_i \abs{\left( \mD \mB \right)_{ij}}
\end{equation}
and
\begin{equation}
\inf_{\substack{\bg \in \B \\ \| \bg \|_1 = 1}} \| \mDc \mB \bg \|_1 = \min_{j \in \F} \sum_i \abs{\left( \mDc \mB \right)_{ij}},
\end{equation}
if (\ref{eq::condition_l1}) holds true, then 
\begin{equation}
\sup_{\bg \in \B} \frac{ \| \mD \mB \bg \|_1}{\| \bg \|_1} < \inf_{\bg \in \B} \frac{ \| \mDc \mB \bg \|_1}{\| \bg \|_1}.
\end{equation}
As a consequence, for every $\bs \in \B$, (\ref{eq::condition_l1}) implies  (\ref{eq::half_energy_inside_l1}) and then, by Lemma \ref{Lemma_perfect_l1_recovery}, it guarantees perfect recovery.
\end{proof}
Besides establishing perfect recovery conditions, Lemma \ref{theorem::l1_reconstruction_condition_known_S} provides hints on how to select the vertices to be discarded still enabling perfect reconstruction of a band-limited signal through the $\ell_1$-norm reconstruction.

An example of $\ell_1$ reconstruction based on (\ref{eq::l1_minimization}) is useful to grasp some interesting features. We consider a graph composed of 100 nodes connected by a scale-free topology \cite{albert2002statistical}. The signal is assumed to be band-limited, with a spectral content limited to the first $\abs{\F}$ eigenvectors of the Laplacian matrix. In Fig. \ref{Logan}, we report the behavior of the MSE associated to the $\ell_1$-norm estimate in  (\ref{eq::l1_minimization}), versus the number of noisy samples, considering different values of bandwidth $\abs{\F}$. As we can notice from Fig. \ref{Logan}, for any value of $\abs{\F}$, there exists a threshold value such that, if the number of noisy samples is lower than the threshold, the reconstruction of the signal is error free. As expected, a smaller signal bandwidth allows perfect reconstruction with a larger number of noisy samples.

\begin{figure}[t]
\centering
\vspace{-1cm}
\includegraphics[width=\linewidth]{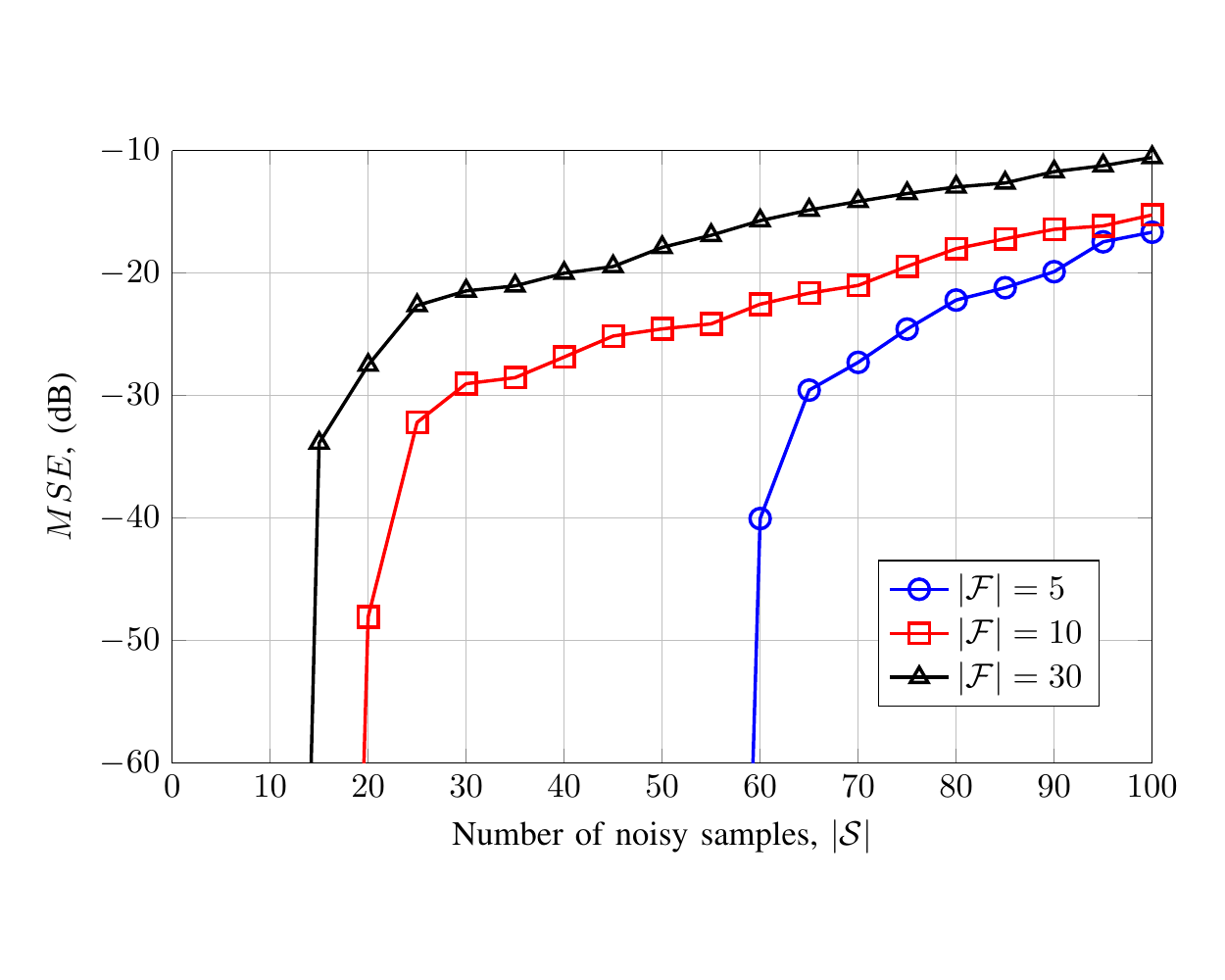}
\vspace{-1.5cm}
\caption{\small Behavior of Mean Squared Error versus number of noisy samples, for different signal bandwidths.}\label{Logan}
\vspace*{-0.5cm}
\end{figure}

We provide next some theoretical bounds on the cardinality of $\S$ and $\F$ enabling $\ell_1$-norm recovery. To this purpose, we start proving the following lemma.
\begin{lemma}
\label{theorem::sup_bound}
It holds true that
\begin{equation}
\label{sup_bound}
\sup_{ \bff \in \B} \frac{ \| \mD \bff \|_1 }{ \| \bff \|_1 } \leq \mu^2  \abs{\S} \abs{\F},
\end{equation}
where $\mu$ is defined as
\begin{equation}
\mu := \max_{ \substack{j \in \F \\ i \in \V}} \abs{u_j (i)}.
\end{equation}
\end{lemma}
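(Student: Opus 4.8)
The plan is to bound $\|\mD\bff\|_1$ directly in terms of the GFT coefficients of $\bff$, using the band-limiting assumption. Since $\bff \in \B$, we may write $\bff = \mU\bSigma_\F\hat{\bff}$, i.e. $f(i) = \sum_{j\in\F} u_j(i)\,\hat f_j$ for every vertex $i$. Restricting to the vertex set $\S$ and summing absolute values,
\begin{equation*}
\|\mD\bff\|_1 = \sum_{i\in\S}\Bigl|\sum_{j\in\F} u_j(i)\hat f_j\Bigr| \le \sum_{i\in\S}\sum_{j\in\F}|u_j(i)|\,|\hat f_j| \le \mu\,|\S|\sum_{j\in\F}|\hat f_j| = \mu\,|\S|\,\|\hat{\bff}\|_1.
\end{equation*}
Here the first inequality is the triangle inequality and the second uses $|u_j(i)|\le\mu$ for all $i\in\V$ and $j\in\F$.

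Next I would lower-bound $\|\bff\|_1$ by $\|\hat{\bff}\|_1$ up to a factor involving $\mu$ and $|\F|$. The natural route is to express the GFT coefficients back in terms of $\bff$: for $j\in\F$, $\hat f_j = \langle\bff,\bu_j\rangle = \sum_{i\in\V} \overline{u_j(i)}\,f(i)$, whence $|\hat f_j|\le \mu\|\bff\|_1$, and therefore $\|\hat{\bff}\|_1 = \sum_{j\in\F}|\hat f_j| \le \mu\,|\F|\,\|\bff\|_1$. Combining the two displays,
\begin{equation*}
\|\mD\bff\|_1 \le \mu\,|\S|\,\|\hat{\bff}\|_1 \le \mu\,|\S|\cdot\mu\,|\F|\,\|\bff\|_1 = \mu^2\,|\S|\,|\F|\,\|\bff\|_1,
\end{equation*}
and dividing by $\|\bff\|_1$ and taking the supremum over nonzero $\bff\in\B$ yields (\ref{sup_bound}).

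I do not expect a serious obstacle here; the argument is a two-sided $\ell_1$/$\ell_\infty$ estimate on the (rectangular) synthesis and analysis operators restricted to $\F$ and $\S$. The only point requiring a little care is making sure the bound $|\hat f_j|\le\mu\|\bff\|_1$ is applied over the full vertex set $\V$ (consistent with the definition of $\mu$ as a max over $i\in\V$), not just over $\S$, so that no hidden dependence on $\oline\S$ is lost; and that in the first estimate one indeed only needs $|u_j(i)|\le\mu$ for $i\in\S$, which is covered since $\S\subseteq\V$. One could alternatively phrase both steps as operator-norm bounds $\|\mD\mU\bSigma_\F\|_{\ell_1\to\ell_1}\le\mu|\S|$ and $\|\bSigma_\F\mU^*\|_{\ell_1\to\ell_1}\le\mu|\F|$ via the standard ``max column $\ell_1$-norm'' formula, but the elementwise computation above is shortest.
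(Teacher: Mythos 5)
Your proof is correct and is essentially the paper's argument: both expand $\bff\in\B$ over the $|\F|$ basis vectors, apply the triangle inequality, and use $|u_j(i)|\le\mu$ twice to pick up the factor $\mu^2|\S||\F|$. The only (cosmetic) difference is the intermediate quantity — you factor through $\|\hat{\bff}\|_1$ with bounds $\|\mD\bff\|_1\le\mu|\S|\,\|\hat{\bff}\|_1$ and $\|\hat{\bff}\|_1\le\mu|\F|\,\|\bff\|_1$, whereas the paper factors through $\|\bff\|_\infty$ via $\|\mD\bff\|_1\le|\S|\,\|\bff\|_\infty$ and the reproducing-kernel bound $\|\bff\|_\infty\le\mu^2|\F|\,\|\bff\|_1$.
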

\begin{proof}
Let us consider the expansion formula for $\bff \in \B$
\begin{align}
f(k) = \sum_{j \in \F} u_j(k) \sum_{i \in \V} f(i) u^*_j(i) = 
\sum_{i \in \V} f(i) \sum_{j \in \F} u_j(k) u^*_j(i)
\end{align}
which yields
\begin{equation}
\abs{f(k)} \leq \|\bff\|_{\infty} \leq \sum_{i \in \V} \abs{f(i)} \sum_{j \in \F} \mu^2 = \mu^2 \abs{\F} \| \bff \|_1,
\end{equation}
or
\begin{equation}
\label{l1_norm_bound}
\| \bff \|_1 \geq \frac{\|\bff\|_{\infty}}{\mu^2 \abs{\F}}.
\end{equation}
By combining 
\begin{equation}
\| \mD \bff \|_1 \leq \| \bff \|_{\infty} \abs{\S}
\end{equation}
with (\ref{l1_norm_bound}), we come to
\begin{equation}
\frac{ \| \mD \bff \|_1 }{ \| \bff \|_1} \leq \mu^2 \abs{\S} \abs{\F}.
\end{equation}
\end{proof}

\begin{theorem}[$\ell_1$-uncertainty]
\label{theorem::uncertainty principle l1}
Let $\bff$, $\| \bff \|_1 = 1$, be a signal $\alpha_1$-concentrated to the set of vertices $\S$, i.e. $\| \mD \bff \|_1 \geq \alpha_1$, and $\beta_1$-band-limited to the set of frequencies $\F$, i.e. $\| \mB \bff \|_1 \geq \beta_1$, then
\begin{equation}
\label{eq::uncert_princ_l1}
\left| \S \right| \left| \F \right| \geq \frac{ \left( \alpha_1 + \beta_1 - 1 \right) } { \mu^2	  \left(2-\beta_1 \right) }.
\end{equation}
\end{theorem}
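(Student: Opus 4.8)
The plan is to reduce everything to Lemma~\ref{theorem::sup_bound} by feeding it the band-limited projection of $\bff$. Put $\bg := \mB\bff$; then $\bg\in\B$ since $\mB$ is idempotent, and it suffices to bound the ratio $\|\mD\bg\|_1/\|\bg\|_1$ from below and invoke the Lemma. Write $t := \|\mBc\bff\|_1 = \|\bff-\bg\|_1$ for the $\ell_1$ mass of $\bff$ lying outside the band $\F$. The first ingredient I need is $t \le 1-\beta_1$: this is the $\ell_1$ counterpart of the orthogonal energy split $\|\mB\bff\|_2^2+\|\mBc\bff\|_2^2=\|\bff\|_2^2$, and it should be obtained from the $\beta_1$-band-limiting assumption together with $\|\bff\|_1=1$ (note that the reverse triangle inequality by itself only yields $\|\mBc\bff\|_1\ge 1-\|\mB\bff\|_1$, so this step is where the band-limiting hypothesis has to be used in its strong form, i.e. as control of the spill-over $\|\mBc\bff\|_1$).

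With $t\le 1-\beta_1$ in hand, the remaining estimates are routine. Splitting $\mD\bff=\mD\mB\bff+\mD\mBc\bff$ and using $\|\mD\mBc\bff\|_1\le\|\mBc\bff\|_1$ gives the lower bound $\|\mD\bg\|_1=\|\mD\mB\bff\|_1\ge\|\mD\bff\|_1-\|\mBc\bff\|_1\ge\alpha_1-t$; and from $\bg=\bff-\mBc\bff$ one gets the upper bound $\|\bg\|_1\le\|\bff\|_1+\|\mBc\bff\|_1=1+t$.

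Applying Lemma~\ref{theorem::sup_bound} to $\bg\in\B$ then gives
\[
\mu^2\,|\S|\,|\F|\ \ge\ \frac{\|\mD\bg\|_1}{\|\bg\|_1}\ \ge\ \frac{\alpha_1-t}{1+t},
\]
and since $t\mapsto(\alpha_1-t)/(1+t)$ is decreasing on $[0,\infty)$, substituting the worst case $t=1-\beta_1$ lower-bounds this by $(\alpha_1+\beta_1-1)/(2-\beta_1)$. Rearranging yields the claimed inequality $|\S|\,|\F|\ge(\alpha_1+\beta_1-1)/(\mu^2(2-\beta_1))$. The only real obstacle is the very first step: establishing that $\beta_1$-band-limitedness forces $\|\mBc\bff\|_1\le 1-\beta_1$. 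Everything after that is triangle-inequality bookkeeping, one application of the already-proved Lemma~\ref{theorem::sup_bound}, and a single-variable monotonicity check.
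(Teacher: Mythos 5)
Your argument is, in structure, exactly the paper's: lower-bound the ratio $\|\mD\bg\|_1/\|\bg\|_1$ for a band-limited approximant $\bg$ of $\bff$, feed that into Lemma~\ref{theorem::sup_bound}, and observe the monotonicity in the spill-over $t$. The one step you flag as an obstacle --- deriving $\|\mBc\bff\|_1\le 1-\beta_1$ from $\|\mB\bff\|_1\ge\beta_1$ --- is a genuine gap, and you are right that it cannot be closed from the hypothesis as literally stated: $\mB$ is an orthogonal projector for the $\ell_2$ inner product but is not an $\ell_1$-contraction, so $\|\mB\bff\|_1$ and $\|\mBc\bff\|_1$ do not split $\|\bff\|_1$ additively, and one can easily have $\|\mB\bff\|_1\ge\beta_1$ together with $\|\mBc\bff\|_1>1-\beta_1$ (e.g.\ in two dimensions with $\B$ spanned by $(1,1)/\sqrt2$ and $\bff=(1,0)$, both $\|\mB\bff\|_1$ and $\|\mBc\bff\|_1$ equal $1$). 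The paper sidesteps this by silently reading the hypothesis in the Donoho--Stark sense: ``$\beta_1$-band-limited'' is taken to mean, \emph{by definition}, that there exists some $\bg\in\B$ with $\|\bg-\bff\|_1\le 1-\beta_1$. The proof then works with that $\bg$ --- not with $\mB\bff$ --- and runs precisely the two triangle inequalities you wrote, $\|\mD\bg\|_1\ge\|\mD\bff\|_1-(1-\beta_1)$ and $\|\bg\|_1\le\|\bff\|_1+(1-\beta_1)$, before invoking Lemma~\ref{theorem::sup_bound}.

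So the verdict depends on which hypothesis one adopts. Under the reading the paper actually uses, your proof is complete and essentially identical to theirs once you let $\bg$ be the guaranteed approximant rather than $\mB\bff$ specifically: nothing in your chain of estimates uses $\bg=\mB\bff$ beyond the two facts $\bg\in\B$ and $\|\bg-\bff\|_1\le t$, so the substitution is cosmetic. Under the literal reading ``$\|\mB\bff\|_1\ge\beta_1$'', the obstacle you identified is real and the theorem does not follow from the argument given --- yours or the paper's. Your remark that this is where the band-limiting hypothesis must be used ``in its strong form'' is exactly the right diagnosis of the mismatch between the theorem's statement and its proof.
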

\begin{proof}
If $\| \mB \bff \|_1 \geq \beta_1$, then by definition there exists a $\bg \in \B$ such that $\| \bg - \bff \|_1 \leq 1 - \beta_1$ and, for this $\bg$, we can write
\begin{equation}
\| \mD \bg \|_1 \geq \| \mD \bff \|_1 - \| \mD \left( \bg - \bff \right) \|_1\ge \| \mD \bff \|_1-1+\beta_1
\end{equation}
and
\begin{equation}
\| \bg \|_1 \leq \| \bff \|_1 + 1 - \beta_1. 
\end{equation}
Therefore
\begin{equation}
\frac{ \| \mD \bg \|_1 }{ \| \bg \|_1} \geq \frac{\| \mD \bff \|_1 - 1 + \beta_1}{ \| \bff \|_1 + 1 - \beta_1} \geq \frac{\alpha_1 + \beta_1 - 1}{2 - \beta_1}.
\end{equation}
Combining this result with the results of Lemma \ref{theorem::sup_bound}, we finally get (\ref{eq::uncert_princ_l1}). 
\end{proof}
It is worth noting that an $\ell_2$-uncertainty principle analogous to Theorem  \ref{theorem::uncertainty principle l1} may also be easily derived. Finally, we provide the condition for perfect reconstruction using (\ref{eq::l1_minimization})  when $\S$ is not known.
\begin{theorem}
\label{theorem::l1_reconstruction_unknownS}
Defining
\begin{equation}
\mu := \max_{\substack{j \in \F \\ i \in \V}} \abs{u_j (i)},
\end{equation}
if, for some {\it unknown} $\S$, we have
\begin{equation}
\label{eq::unknown_S_condition}
\abs{\S}  < \frac{1}{2 \mu^2 \abs{\F}}, 
\end{equation}
then the $\ell_1$-norm reconstruction method (\ref{eq::l1_minimization}) recovers $\bs \in \B$ perfectly, i.e. $\tilde{\bs} = \bs$, for any arbitrary noise $\bn$ present on at most $\abs{\S}$ vertices.
\end{theorem}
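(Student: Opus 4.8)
The plan is to obtain Theorem \ref{theorem::l1_reconstruction_unknownS} as an immediate corollary of the two preceding lemmas: the null-space property of Lemma \ref{Lemma_perfect_l1_recovery} together with the coherence bound of Lemma \ref{theorem::sup_bound}. First I would record a trivial but useful reformulation of the hypothesis of Lemma \ref{Lemma_perfect_l1_recovery}: since $\mD$ and $\mDc$ are complementary $0/1$ diagonal projectors, every vector $\bv$ satisfies $\|\bv\|_1 = \|\mD\bv\|_1 + \|\mDc\bv\|_1$, so condition (\ref{eq::half_energy_inside_l1}), namely $\|\mD\bs\|_1 < \|\mDc\bs\|_1$, is equivalent to the one-sided bound $\|\mD\bs\|_1 < \tfrac12\|\bs\|_1$ holding for all nonzero $\bs \in \B$.

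Next I would bound the ratio $\|\mD\bs\|_1/\|\bs\|_1$ using Lemma \ref{theorem::sup_bound}. The essential observation — and the only place where the word ``unknown'' plays a role — is that the estimate (\ref{sup_bound}) holds for an \emph{arbitrary} vertex set and depends on that set only through its cardinality, while the constant $\mu$ does not depend on the vertex set at all. Let $T \subseteq \V$ denote the (a priori unknown) support of the corrupting term in the observation model (\ref{r=s+Dn}), with $|T| \le |\S|$. Applying Lemma \ref{theorem::sup_bound} with $T$ in place of $\S$ gives, for every $\bff \in \B$,
\begin{equation*}
\frac{\|\mathbf{D}_T \bff\|_1}{\|\bff\|_1} \;\le\; \mu^2\,|T|\,|\F| \;\le\; \mu^2\,|\S|\,|\F|.
\end{equation*}
Under the standing hypothesis (\ref{eq::unknown_S_condition}) we have $\mu^2 |\S|\,|\F| < \tfrac12$, hence $\|\mathbf{D}_T \bff\|_1 < \tfrac12\|\bff\|_1$ for all $\bff \in \B$; by the reformulation above, condition (\ref{eq::half_energy_inside_l1}) is satisfied with the vertex set $T$ substituted for $\S$.

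Finally I would invoke Lemma \ref{Lemma_perfect_l1_recovery} with the corruption set $T$: the program (\ref{eq::l1_minimization}) is defined without any reference to $T$ (or to $\S$), and since (\ref{eq::half_energy_inside_l1}) holds for $T$, its minimizer equals $\bs$ exactly, irrespective of the magnitude of the noise supported on $T$. This yields $\tilde{\bs} = \bs$ and completes the argument. I do not expect any genuine analytic obstacle here; the single point requiring care is the bookkeeping already emphasized — that the reconstruction (\ref{eq::l1_minimization}) never uses $\S$, and that the bound of Lemma \ref{theorem::sup_bound} is uniform over all corruption patterns of a given size, so a bound on the \emph{number} of corrupted vertices alone suffices to guarantee perfect recovery.
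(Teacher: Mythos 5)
Your proof is correct and follows essentially the same route as the paper: verify the null-space property of Lemma \ref{Lemma_perfect_l1_recovery} by combining the identity $\|\bs\|_1 = \|\mD\bs\|_1 + \|\mDc\bs\|_1$ with the uniform bound of Lemma \ref{theorem::sup_bound}. Your explicit bookkeeping that this bound depends on the corruption set only through its cardinality --- which is precisely what makes the ``unknown $\S$'' claim work --- is left implicit in the paper's two-line argument, but it is the same proof.
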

\begin{proof}
For a band-limited signal $\bs \in \B$ satisfying  (\ref{eq::half_energy_inside_l1}), we can also write
\begin{equation}
\label{Dg/g<1/2}
\frac{\| \mD \bs \|_1}{\| \bs \|_1} < \frac{1}{2}.
\end{equation}
On the other hand, from Lemma \ref{theorem::sup_bound} we know that the supremum of the previous ratio among all $\bs \in \B$ is upper bounded by $\mu^2 \abs{\S} \abs{\F} $. Hence, by Lemma \ref{Lemma_perfect_l1_recovery},  all band-limited signals satisfying (\ref{eq::unknown_S_condition}) satisfy also condition (\ref{Dg/g<1/2}) or, equivalently  (\ref{eq::half_energy_inside_l1}), for perfect $\ell_1$-norm recovery. 
\end{proof}

\section{Sampling strategies}
\label{Sampling strategies}
When sampling graph signals, besides choosing the right number of samples, whenever possible it is also fundamental to have a strategy indicating {\it where} to sample, as the samples' location plays a key role in the performance of reconstruction algorithms. Building on the
analysis of signal reconstruction algorithms in the presence of noise carried out in Section \ref{sec::Reconstruction from noisy observations}, a possible strategy is to select the samples' location in order to minimize the MSE. From (\ref{eq::mse_frobenius}), taking into account that
\begin{equation}
\lambda_i \left( \mB \mD \mB \right) = \sigma^2_i \left( \mB \mD \right) = \sigma^2_i \left( \mSigma \mU^* \mD \right),
\end{equation}
the problem is equivalent to selecting the right columns of the matrix $ \mSigma \mU^*$ in order to minimize the Frobenius norm of the pseudo-inverse $\left( \mSigma \mU^* \mD \right)^+$. This problem is combinatorial  and NP-hard. The problem of selecting the columns from a matrix so as to minimize the Frobenius norm of its pseudo-inverse was specifically studied for example in \cite{avron2013faster}, so that we can take advantage of those methods for our purposes. In the sequel, we provide a few alternative strategies for selecting the samples' locations.\\

\label{Numerical results}
\subsubsection{Greedy Selection - Minimization of Frobenius norm of $\left( \mSigma \mU^* \mD \right)^+$}

This strategy aims at minimizing the MSE in (\ref{eq::mse_gaussian_uncorrelated}). The method selects the columns of the matrix $\mSigma \mU^*$ so that the Frobenius norm of the pseudo-inverse of the resulting matrix is minimized. In case of uncorrelated noise, this is equivalent to minimizing $\sum_{i=1}^{\abs{\F}} 1/\sigma_i^2$. We propose a greedy approach to tackle this selection problem. The resulting sampling strategy is summarized in Algorithm 1. Note that $\S$ is the sampling set, indicating which columns to select, $\tilde{\mU}$ denotes the matrix composed by the rows of $\mU^*$ corresponding to $\F$, and the symbol $\tilde{\mU}_{\A}$ denotes the matrix formed with the columns of $\tilde{\mU}$ belonging to set $\A$.
\begin{algorithm}[h]
$\textit{Input Data}:$ $\tilde{\mU}$, rows of $\mU^*$ corresponding to $\F$;

\hspace{1.1cm} $\qquad M$, the number of samples.

$\textit{Output Data}:$ $\S$, the sampling set. \smallskip

$\textit{Function}:$ \hspace{.23cm} initialize $\S\equiv \emptyset$

\hspace{2 cm} while $|\S|<M$, set $K=\min(|\S|, |\F|)$

\hspace{2.3cm} $\displaystyle s=\arg \min_j \;\;\sum_{i=1}^{K} \frac{1}{\sigma_i^2(\tilde{\mU}_{\S\cup\{j\}})}$;

\hspace{2.3cm} $\S \leftarrow \S \cup \{s\}$;

\hspace{2cm} end

\protect\caption{\label{alg:Greedy1}\textbf{: Greedy selection based on minimum Frobenius norm of $\left( \mSigma \mU^* \mD \right)^+$}}
\end{algorithm}

\subsubsection{Maximization of the Frobenius norm of $\mSigma \mU^* \mD$}

The second strategy aims at selecting the columns of the matrix $\tilde{\mU}$ in order to maximize its Frobenius norm. Even if this strategy is not directly related to the optimization of the MSE in (\ref{eq::mse_gaussian_uncorrelated}), it leads to a very easy implementation that shows good performance in practice, as we will see in the sequel. In particular, since we have
\begin{equation}
\max_\S \;\|\tilde{\mU}\mD\|_F^2\;=\; \max_\S \;\sum_{i \in \S} \|(\tilde{\mU})_i\|^2_2,
\end{equation}
the optimal selection strategy simply consists in selecting the $M$ columns from $\tilde{\mU}$ with largest $\ell_2$-norm. \smallskip

\subsubsection{Greedy Selection - Maximization of the volume of the parallelepiped formed with the columns of $\tilde{\mU}$}

In this case, the strategy aims at selecting the set $\mathcal{S}$ of columns of the matrix $\tilde{\mU}$ that maximize the (squared) volume of the parallelepiped built with the selected columns of $\tilde{\mU}$ in $\mathcal{S}$. This volume can be computed as the determinant of the matrix $\tilde{\mU}_\mathcal{S}^* \tilde{\mU}_\mathcal{S}$, i.e. $|\tilde{\mU}_\mathcal{S}^* \tilde{\mU}_\mathcal{S}|=\prod_{i=1}^{\abs{\S}}\lambda_i(\tilde{\mU}_\mathcal{S}^* \tilde{\mU}_\mathcal{S})$, where $\lambda_i(\tilde{\mU}_\mathcal{S}^* \tilde{\mU}_\mathcal{S})$ denote the eigenvalues of $\tilde{\mU}_\mathcal{S}^* \tilde{\mU}_\mathcal{S}$, as far as $\abs{\S}\le\abs{\F}$. If $\abs{\S}$ exceeds $\abs{\F}$, we take the product of the largest $\abs{\F}$ eigenvalues. The rationale underlying this approach is not only to choose the columns with largest norm, but also the vectors more orthogonal to each other.
Also in this case, we propose a greedy approach, as described in Algorithm 2. The algorithm is similar, in principle, to the so called DETMAX algorithm mentioned in \cite{Steinberg-experimental_design1984}, but is much simpler to implement because DETMAX, at each iteration, adds and deletes points until a convergence criterion is satisfied. Our algorithm, instead, starts including the column with the largest norm in $\tilde{\mU}$, and then it adds, iteratively, the column that gives the new highest value of $|\tilde{\mU}_\mathcal{S}^* \tilde{\mU}_\mathcal{S}|$. The number of steps is then fixed and equal to the number of samples. Nevertheless, it looks suitable for graph signals because it exhibits very good performance, as shown later on.

\begin{algorithm}[h]
$\textit{Input Data}:$ $\tilde{\mU}$, rows of $\mU^*$ corresponding to $\F$;

\hspace{1.1cm} $\qquad M$, the number of samples.

$\textit{Output Data}:$ $\S$, the sampling set. \smallskip

$\textit{Function}:$ \hspace{.23cm} initialize $\S\equiv \emptyset$

\hspace{2 cm} while $|\S|<M$, set $K=\min(\abs{\S}, \abs{\F})$

\hspace{2.3cm} $\displaystyle s=\arg \max_j \;\; \prod_{i=1}^{K}\lambda_i(\tilde{\mU}_\mathcal{S}^* \tilde{\mU}_\mathcal{S})$;

\hspace{2.3cm} $\S \leftarrow \S \cup \{s\}$;

\hspace{2cm} end

\protect\caption{\label{alg:Greedy2}\textbf{: Greedy selection based on maximum parallelepiped volume}}
\end{algorithm}

\begin{figure*}[htp]
\centering
\vspace{0cm}
\centering
\begin{subfigure}[t]{0.49\linewidth}
\centering
\includegraphics[width=\columnwidth,keepaspectratio]{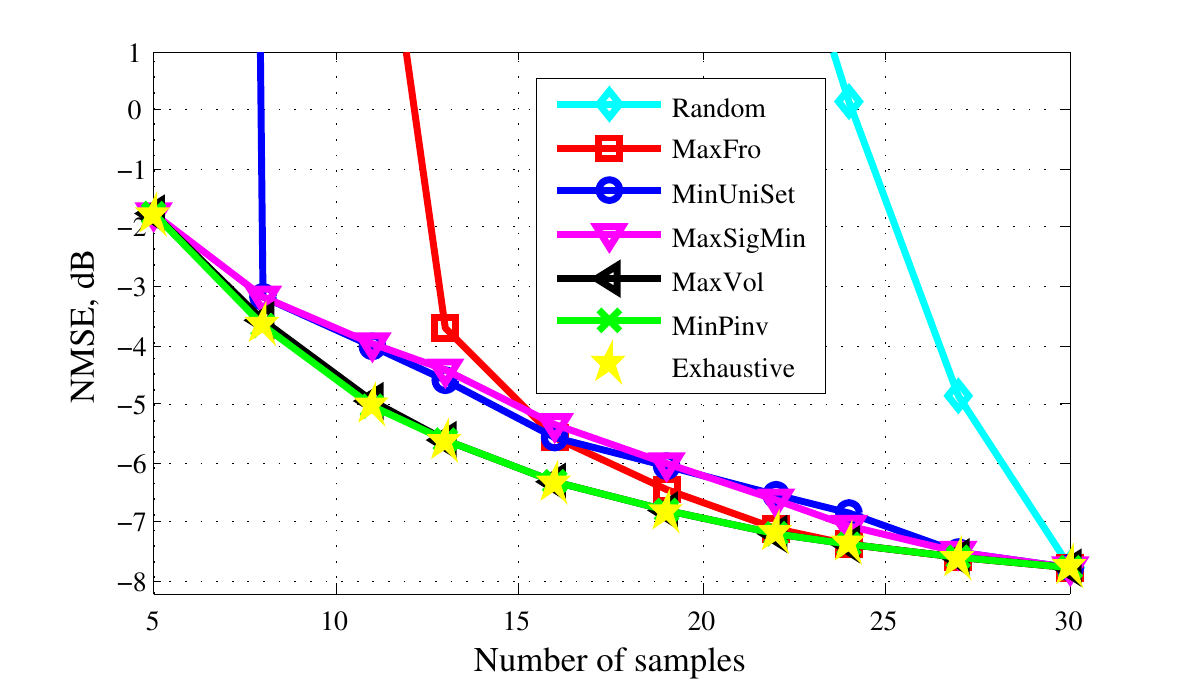}
\vspace{-0.6cm}
\caption{}
\end{subfigure}
\begin{subfigure}[t]{0.49\linewidth}
\includegraphics[width=\columnwidth,keepaspectratio]{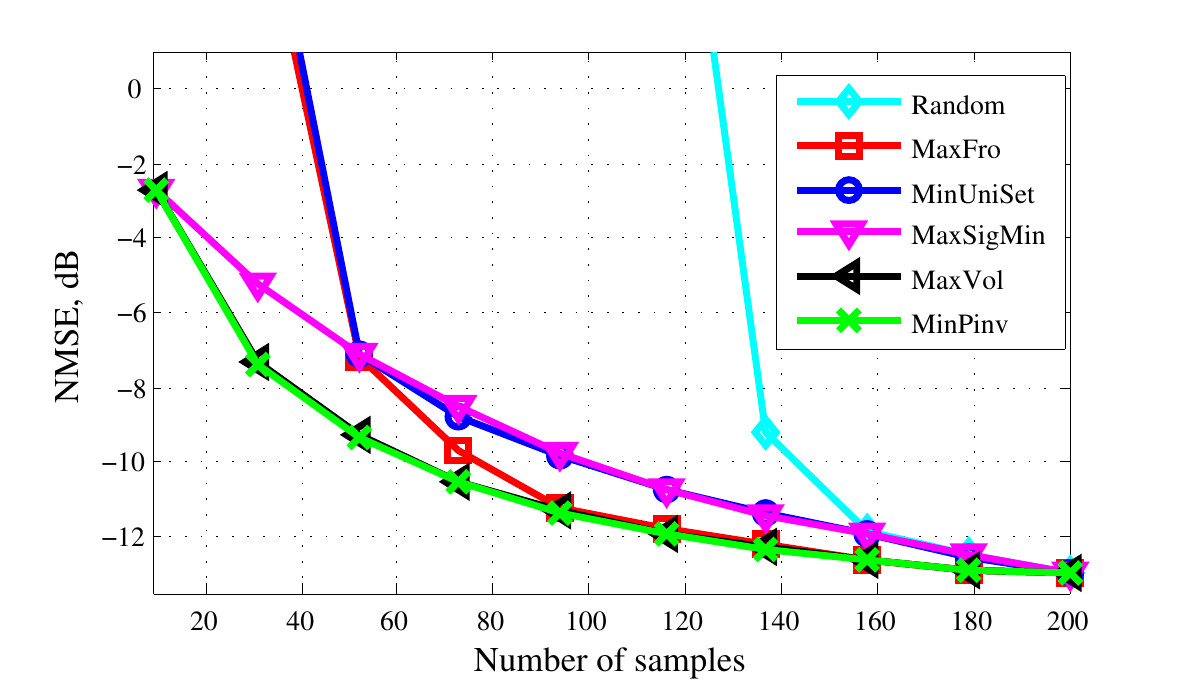}
\vspace{-0.6cm}
\caption{}
\end{subfigure}
\vspace{-0.1cm}
\caption{\small Normalized Mean Squared Error vs. number of samples for different sampling strategies and scale-free topology: \\ (a)  $N=30$, $\abs{\F}=5$; \ \ (b) $N=200$, $\abs{\F} = 10$.}
\label{fig::Sampling_SF}
\vspace{-0.4cm}
\end{figure*}

\subsubsection*{Comparison of sampling strategies}
We compare now the performance obtained with the proposed sampling strategies, with random sampling  and with two strategies proposed in the literature: 1) the method proposed in  \cite{chen2015discrete}, aimed at maximizing the minimum singular value of $\mSigma \mU^* \mD$; and 2) the approach proposed in \cite{anis2014towards}, searching for the smallest sampling set enabling the unique recovery of a band-limited signal. We test the results using a scale-free (SF) random graph model\footnote{We also tested all methods on random geometric graphs and the results were qualitatively similar.}, as this model encompasses many real world networks, see, e.g., \cite{albert2002statistical}.  Fig. \ref{fig::Sampling_SF} reports the normalized MSE (NMSE), defined as the mean square error per node, divided by the noise variance, under two configurations: (a)  $N=30$, $\abs{\F}=5$; and (b) $N=200$, $\abs{\F} = 10$.
In the case $N=30$, we report also the benchmark obtained with the exhaustive search, whereas for $N=200$ this choice is computationally too expensive. The additive noise in (\ref{r=D(s+n)}) is assumed to be an uncorrelated, zero mean Gaussian random vector with unit variance.  The results shown in the figures have been obtained by averaging over $100$ independent realizations of graph topologies. We compare six different sampling strategies, namely: (i) the random strategy, which picks nodes randomly; (ii) the greedy selection method of Algorithm 1, minimizing the Frobenius norm of $\left( \mSigma \mU^* \mD \right)^+$ (MinPinv); (iii) the Max Frobenius norm (MaxFro) strategy; (iv) the greedy selection method of Algorithm 2, maximizing the volume of the parallelepiped formed with the columns of $\mSigma \mU^* \mD$  (MaxVol); (v) the greedy algorithm  (MaxSigMin) maximizing the minimum singular value of $\mSigma \mU^* \mD$, recently proposed in \cite{chen2015discrete}; and (vi) the greedy algorithm searching for the smallest sampling set enabling unique recovery (MinUniSet), proposed in \cite{anis2014towards}. It is worth to point out that  the applicability of MinUniSet is limited to the case where the graph signal is lowpass, i.e. its GFT has a support limited on the lowest indices. Hence, for the sake of making the comparison possible, we considered a lowpass signal. However, MaxVol, MinPinv and MaxSigMin are applicable to signals whose frequency support $\F$ is {\it any} subset of $\V^*$. Furthermore, in the implementation of MinUniSet it is necessary to specify an external parameter, namely the order $k$ of the cut-off frequency (please, see  \cite{anis2014towards} for details), which affects the performance of the method. In our test, we chose a value $k=10$, as this value seemed to provide the best performance in the average. 

From Fig. \ref{fig::Sampling_SF} we observe that, as expected, the normalized mean squared error decreases as the number of samples increases. As a general remark, we can notice how random sampling performs quite poorly. This shows that, when sampling a graph signal, what matters is not only the number of samples, but also (and most important) {\it where} the samples are taken. 
Furthermore, we can notice how the proposed MaxVol and MinPinv strategies outperform all other strategies and approach very closely the optimal benchmark. The recently proposed MaxSigMin approach performs very close to the proposed MaxVol and MinPinv strategies when the number of samples is equal to its minimum value, i.e. $\abs{\S}=\abs{\F}$, but MaxVol and MinPinv outperform MaxSigMin, when the number of samples assume intermediate values between $\abs{\F}$ and $N$. Furthermore, in such a case, comparing Figs. \ref{fig::Sampling_SF}  (a) and (b), we can see how the gain increases as the number of nodes increases.  

As an example of sampling set, in Fig. \ref{IEEE118} we report an application to a real network: the IEEE 118 Bus Test Case, representing a portion of the American Electric Power System (in the Midwestern US) as of December 1962. This test graph is composed of $118$ nodes. As illustrated in \cite{pasqualetti2014controllability}, the dynamics of the power generators give rise to smooth graph signals, so that the band-limited assumption is justified, although in approximate sense. In our example, we consider a lowpass signal with  $\abs{\F} = 6$ and we take a number of samples equal to $6$. In Fig. \ref{IEEE118} we report the network structure, where the color of each node encodes the entries of the eigenvector of $\mL$ associated to the second smallest eigenvalue (these entries highlight clusters in the network, as shown in \cite{von2007tutorial}). The green squares correspond to the samples selected using either MaxVol or MinPinv strategy, which provide the same result in this case. It is interesting to notice how each method assigns two samples per cluster and puts the samples, within each cluster, quite far apart from each other. This is just an example, but it suggests an interesting conceptual link with graph independent sets, which is worth of further investigations.

Finally, we illustrate how to improve robustness to noise by using the frame-based reconstruction method. In (\ref{eq::limited_propagation}), we provided a possible choice of frame operator to be used for sampling. In the following, we show how the mean square error $MSE_F$ in (\ref{eq::mse_frobenius_frame}) behaves for different choices of graph covering sets $\N(v)$ used in (\ref{eq::limited_propagation}). 
For this example, we consider a (thorus) random geometric graph having $100$ nodes with connectivity radius $r_0=0.1883$. We consider two sampling strategies, namely: (i) the random strategy; (ii) the MaxVol strategy illustrated in Algorithm 2. Around each sample, taken at vertex $v$, the local set $\N(v)$ is composed of the nodes falling inside a ball of radius $r_1$ centered on $v$. The local sets associated to each sample can intersect each other and their union does not necessarily cover the whole graph. In Fig. \ref{fig::frames_radii2}, we show the normalized MSE as a function of $r_1$ normalized to $r_0$. We can see from Fig. \ref{fig::frames_radii2} that there exists an optimal size of covering local-sets which   minimizes the mean square error. An intuitive explanation of the behavior shown in Fig. \ref{fig::frames_radii2} is that, for small values of $r_1$, as $r_1$ increases, the local sets around each sample help reducing the MSE. However, as $r_1$ exceeds a certain threshold, the covering sets significantly overlap with each other, giving rise to a frame with more dependent vectors, in which case the MSE starts increasing again.
Furthermore, we can see how, increasing the number of samples, for a given bandwidth, the normalized MSE decreases. Finally, we can notice how the MaxVol strategy outperforms the random sampling, especially for low number of samples.
\begin{figure}[t]
\centering
\vspace{-0.2cm}
\includegraphics[width=0.8\columnwidth]{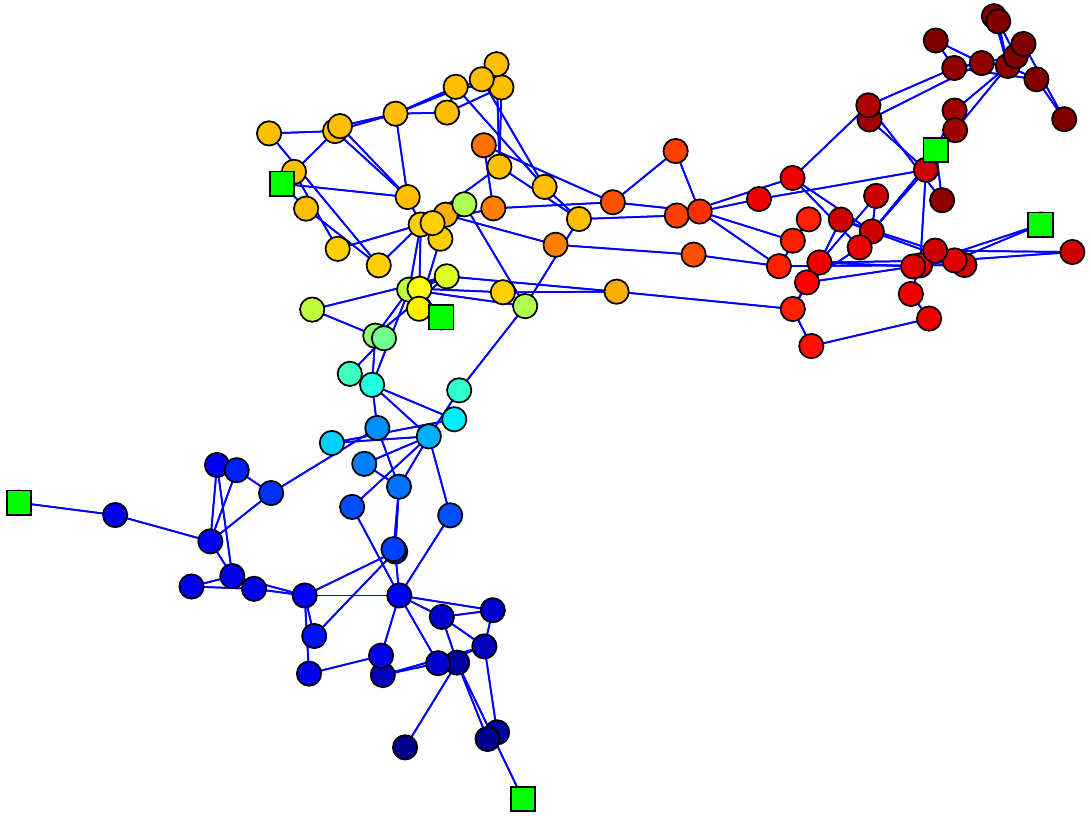}
\vspace{-0cm}
\caption{\small IEEE 118 Bus Test Case: Example of selected sampling set.}\label{IEEE118}
\vspace{-0.3cm}
\end{figure}

\section{Conclusion}
In this paper we have presented a framework for the analysis of graph signals that, starting from the localization properties over the graph and its dual domain, yields an uncertainty principle and establishes a useful conceptual link between uncertainty principle and sampling. The approach is applicable to any unitary transformation from a discrete domain to the transformed one. Besides its conceptual interest, the relation between uncertainty principle and sampling theory provides suggestions on how to identify sampling strategies and recovery algorithms robust against additive observation noise.  Interesting further developments include the extension to hypergraphs, the robustness analysis in the case of non perfectly band-limited signals and the identification of further robust recovery algorithms, including the design of optimal frame bases.

\appendices
\section{Proof of Theorem \ref{theorem::Uncertainty principle}}
Before proceeding to the proof, we introduce some useful notation and provide several results that will be used for proving Theorem \ref{theorem::Uncertainty principle}. The proof basically follows the same procedure of \cite{landau1961prolate}, where it was initially stated for continuous-time signals. 

Using the usual definition of the scalar product $\langle \ba, \bb \rangle = \ba^*  \bb $, we can define the angle between two vectors $\theta (\ba, \bb)$ as
\begin{equation}
\label{eq::angle_definition}
\theta (\ba, \bb) = \cos^{-1} \frac{\Re \langle \ba, \bb \rangle}{\| \ba \|_2 \| \bb \|_2}.
\end{equation}
By Schwartz inequality $\langle \ba , \bb \rangle \leq \| \ba \|_2 \| \bb \|_2 $ and the fact that $\abs{\Re \langle \ba , \bb \rangle} \leq \abs{ \langle \ba , \bb \rangle}$ it is clear that
\begin{equation*}
-1 \leq \frac{\Re \langle \ba , \bb \rangle}{\| \ba \|_2 \| \bb \|_2} \leq 1
\end{equation*}
and $\theta(\ba, \bb) = 0$ only if $\bb = const \cdot \ba$, i.e. when two vectors are colinear. Now, let us consider two vectors $\bff \in \B$ and $\bg \in \D$. For the beginning let us consider a fixed function $\bff \in \B$ and an arbitrary $\bg \in \D$. In this case the following lemma gives us an achievable lower bound of $\theta \left(\bff, \bg \right)$.
\begin{figure}[t!]
\vspace{-1cm}
\centering
\includegraphics[width=\columnwidth]{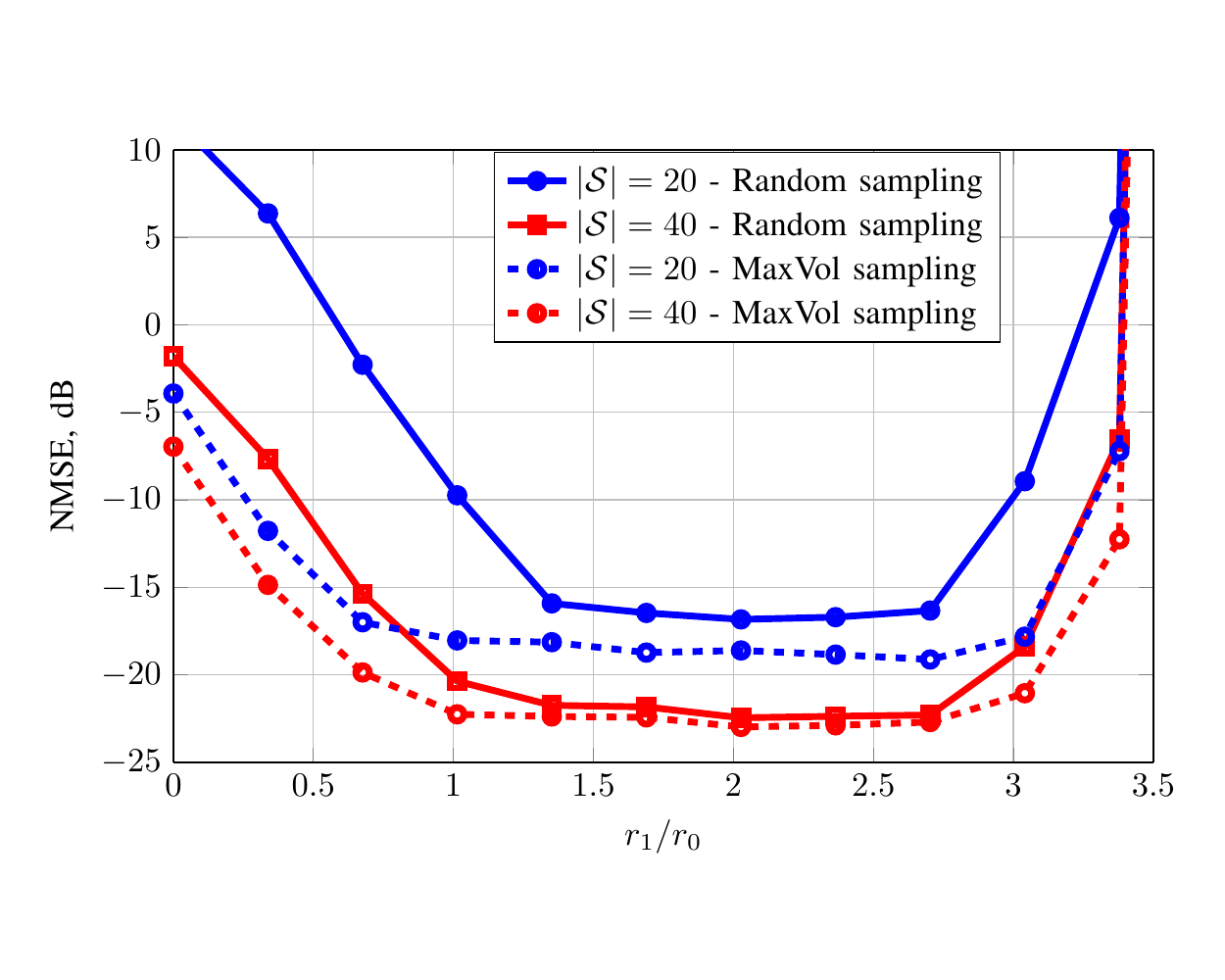}
\vspace{-1.5cm}
\caption{\small Normalized mean square error vs. the ratio $r_1/r_0$.}\label{fig::frames_radii2}
\vspace{-0.2cm}
\end{figure}
\begin{lemma}
\label{lemma::min_angle}
For a given vector $\bff$ there exists
\begin{equation}
\inf_{\bg \in \D } \theta \left( \bff, \bg \right) = \cos^{-1} \frac{\| \mD \bff \|_2}{\| \bff \|_2},
\end{equation}
which is achieved by $\bg = k \mD \bff$ for any $k > 0$.
\begin{proof}
For any $\bg \in \D$ it holds
\begin{equation*}
\Re \langle \bff, \bg \rangle \leq \abs{\langle \bff, \bg \rangle} = \abs{\langle \mD \bff, \bg \rangle}
\end{equation*}
and
\begin{equation*}
\abs{\langle \mD \bff, \bg \rangle} \leq \| \mD \bff \|_2 \cdot \| \bg \|_2.
\end{equation*}
So we can write
\begin{equation*}
\frac{ \Re \langle \bff, \bg \rangle }{\| \bff \|_2 \cdot \| \bg \|_2} \leq \frac{\| \mD \bff \|_2} { \| \bff \|_2} = \frac{ \Re \langle \bff, \mD \bff \rangle }{\| \mD \bff \|_2 \cdot \| \bff \|_2}
\end{equation*}
Taking into account that $\cos \theta$ decreases monotonically in $\left[ 0, \pi \right]$, it follows that for any $\bg \in \D$
\begin{equation*}
\theta \left( \bff, \bg \right) \geq \theta \left( \bff, \mD \bff \right),
\end{equation*}
with equality when $\bg$ and $\mD \bff$ are proportional.
\end{proof}
\end{lemma}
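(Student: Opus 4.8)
The plan is to turn this angle minimization into a Cauchy--Schwarz estimate. Since $\theta(\ba,\bb)$ is $\cos^{-1}$ of a quantity lying in $[-1,1]$ and $\cos^{-1}$ is strictly decreasing on that interval, minimizing $\theta(\bff,\bg)$ over $\bg\in\D$ is the same as maximizing the ratio $\Re\langle\bff,\bg\rangle/(\|\bff\|_2\|\bg\|_2)$ over nonzero $\bg\in\D$ (I tacitly assume $\bff\neq\b0$, so that $\theta(\bff,\cdot)$ is defined); I would work with this ratio and convert back to angles only at the end. For the upper bound I would use that $\mD$ is a self-adjoint idempotent: for $\bg\in\D$ we have $\bg=\mD\bg$, hence $\langle\bff,\bg\rangle=\langle\bff,\mD\bg\rangle=\langle\mD\bff,\bg\rangle$. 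Applying $\Re z\le|z|$ and then the Schwarz inequality gives $\Re\langle\bff,\bg\rangle\le\|\mD\bff\|_2\,\|\bg\|_2$, so the ratio is at most $\|\mD\bff\|_2/\|\bff\|_2$ for every admissible $\bg$.

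For attainment I would simply test $\bg=k\mD\bff$ with $k>0$: it lies in $\D$ because $\mD(\mD\bff)=\mD\bff$, and $\langle\bff,k\mD\bff\rangle=k\langle\mD\bff,\mD\bff\rangle=k\|\mD\bff\|_2^2$ is real and nonnegative, so the ratio equals exactly $\|\mD\bff\|_2/\|\bff\|_2$. Combining the two bounds, the supremum of the ratio over nonzero $\bg\in\D$ equals $\|\mD\bff\|_2/\|\bff\|_2$ and is attained precisely along the ray $\{k\mD\bff:k>0\}$; pushing this through the decreasing map $\cos^{-1}$ gives $\inf_{\bg\in\D}\theta(\bff,\bg)=\cos^{-1}(\|\mD\bff\|_2/\|\bff\|_2)$, attained at the same vectors. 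The argument never actually uses $\bff\in\B$, although that is the case of interest.

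There is no deep obstacle here; the care points are all bookkeeping. First, because $\theta$ is defined through $\Re\langle\cdot,\cdot\rangle$ rather than $|\langle\cdot,\cdot\rangle|$, one must check that the proposed optimizer makes the inner product real and nonnegative --- otherwise the Schwarz bound would be tight for the modulus but not for its real part --- which is exactly why $k$ is taken positive and why the identity $\langle\mD\bff,\mD\bff\rangle=\|\mD\bff\|_2^2\ge0$ is invoked. Second, the degenerate case $\mD\bff=\b0$ should be treated separately: there every nonzero $\bg\in\D$ gives $\theta(\bff,\bg)=\pi/2=\cos^{-1}(0)$, so the formula still holds, now with the infimum attained at every $\bg\in\D$ rather than only on a ray. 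Tracking the equality case in the Schwarz step ($\bg$ proportional to $\mD\bff$) additionally shows the ray is the \emph{only} minimizer when $\mD\bff\neq\b0$.
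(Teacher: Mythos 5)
Your proof is correct and follows essentially the same route as the paper's: reduce to maximizing $\Re\langle\bff,\bg\rangle/(\|\bff\|_2\|\bg\|_2)$, use $\bg=\mD\bg$ together with self-adjointness of $\mD$ and the Schwarz inequality for the upper bound, and verify attainment at $\bg=k\mD\bff$. Your extra remarks on the degenerate case $\mD\bff=\b0$ and on the uniqueness of the minimizing ray are sensible refinements the paper leaves implicit.
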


If the quantity 
\begin{equation}
\theta_{min} = \inf_{\substack{\bff \in \B \\ \bg \in \D}} \theta \left( \bff, \bg \right)
\end{equation}
is assumed by some specific $\bff \in \B$ and $\bg \in \D$ then we will say that $\B$ and $\D$ form the minimum angle $\theta_{min}$, which is called the first principal angle \cite{bjorck1973numerical}, and is given by the following theorem.
\begin{theorem}
\label{theorem::angle min}
The minimum angle $\theta_{min} $ between $\B$ and $\D$ exists and equals to 
\begin{equation}
\theta_{min} = \cos^{-1} \sigma_{max} \left( \mB \mD \right),
\end{equation}
and is achieved by $\bff = \bpsi_1$ and $ \bg = \mD \bpsi_1$, where $\bpsi_1$ is an eigenvector of $\mB \mD \mB$ corresponding to the eigenvalue $\sigma^2_{max} \left( \mB \mD \right)$.
\begin{proof}
Using the result of Lemma \ref{lemma::min_angle} we can write
\begin{equation*}
\inf_{ \substack{\bff \in \B \\ \bg \in \D}} \theta \left( \bff, \bg \right) = \inf_{ \bff \in \B } \cos^{-1} \frac{\| \mD \bff \|_2}{\| \bff \|_2} = \inf_{ \bff \in \B } \cos^{-1} \frac{\abs{ \langle \bff , \mD \bff \rangle}}{\| \bff \|_2},
\end{equation*}
where infimum on the left side is achieved if the infimum on the right side is achieved. Since $\cos \theta$ decreases monotonically in $\left[ 0, \pi \right]$, we can apply the result of Theorem \ref{theorem::max_concentrated_vectors}, from which it follows that infumum is achieved by the eigenvector $\bpsi_1$ of $\mB \mD \mB$ corresponding to the maximum eigenvalue $\sigma^2_{max} \left( \mB \mD \right)$. Therefore we conclude that
\begin{equation*}
\theta_{min} = \inf_{ \bff \in \B } \cos^{-1} \frac{\abs{ \langle \bff , \mD \bff \rangle}}{\| \bff \|_2} = \cos^{-1} \sigma_{max} \left( \mB \mD \right).
\end{equation*}
\end{proof}
\end{theorem}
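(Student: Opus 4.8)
The plan is to build directly on Lemma~\ref{lemma::min_angle}, which already disposes of the inner infimum over $\bg\in\D$. First I would split the double infimum and substitute that lemma:
\[
\theta_{min}=\inf_{\bff\in\B}\;\inf_{\bg\in\D}\theta(\bff,\bg)=\inf_{\bff\in\B}\cos^{-1}\frac{\|\mD\bff\|_2}{\|\bff\|_2},
\]
where for each fixed $\bff$ the inner infimum is attained at $\bg=k\,\mD\bff$ with $k>0$, a vector that indeed lies in $\D$. Thus the problem reduces to a single optimization over $\bff\in\B$, and the outer infimum is attained exactly when the ratio $\|\mD\bff\|_2/\|\bff\|_2$ is maximized over $\B$.

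Next, since $\cos^{-1}$ is strictly decreasing on $[0,\pi]$ and its argument here always lies in $[0,1]$, minimizing the angle is equivalent to solving $\max_{\bff\in\B}\|\mD\bff\|_2^2/\|\bff\|_2^2$. For $\bff\in\B$ we have $\mB\bff=\bff$, so using that $\mB$ and $\mD$ are self-adjoint idempotents,
\[
\|\mD\bff\|_2^2=\bff^*\mB\,\mD\mD\,\mB\,\bff=\bff^*\mB\mD\mB\,\bff,
\]
which is precisely the Rayleigh quotient maximized in Theorem~\ref{theorem::max_concentrated_vectors}. By that theorem (equivalently, by Rayleigh--Ritz), the maximizer over $\B$ is the top eigenvector $\bpsi_1$ of $\mB\mD\mB$ and the maximum value is its largest eigenvalue $\lambda_1$. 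Since $\mB\mD\mB=(\mB\mD)(\mB\mD)^*$, the nonzero eigenvalues of $\mB\mD\mB$ are the squared singular values of $\mB\mD$, so $\lambda_1=\sigma_{max}^2(\mB\mD)$; hence $\max_{\bff\in\B}\|\mD\bff\|_2/\|\bff\|_2=\sigma_{max}(\mB\mD)$ and therefore $\theta_{min}=\cos^{-1}\sigma_{max}(\mB\mD)$.

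For the attainment part I would simply plug $\bff=\bpsi_1$ back into Lemma~\ref{lemma::min_angle}: using $\|\bpsi_1\|_2=1$ and $\|\mD\bpsi_1\|_2^2=\lambda_1=\sigma_{max}^2(\mB\mD)$ (the $i=j=1$ case of (\ref{eq:orthogonality_on_sampling_set})), one gets $\inf_{\bg\in\D}\theta(\bpsi_1,\bg)=\cos^{-1}\sigma_{max}(\mB\mD)$ with the infimum reached at $\bg=\mD\bpsi_1$. Thus the pair $(\bpsi_1,\mD\bpsi_1)$ realizes $\theta_{min}$, so the double infimum is in fact a minimum and existence follows; alternatively, existence can be argued from compactness of the unit sphere of the finite-dimensional space $\B$ together with continuity of $\bff\mapsto\|\mD\bff\|_2$.

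I do not anticipate a real obstacle: the whole argument is a short chain of Lemma~\ref{lemma::min_angle}, monotonicity of $\cos^{-1}$, and Theorem~\ref{theorem::max_concentrated_vectors}. The only places needing a line of care are (i) noting that $\langle\bff,\mD\bff\rangle=\|\mD\bff\|_2^2$ is real and nonnegative, so the $\Re(\cdot)$ in the angle definition (\ref{eq::angle_definition}) is innocuous and the reduction to the Rayleigh quotient is exact, and (ii) the identification $\lambda_{max}(\mB\mD\mB)=\sigma_{max}^2(\mB\mD)$ obtained from $\mB\mD\mB=(\mB\mD)(\mB\mD)^*$.
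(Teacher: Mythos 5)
Your proposal is correct and follows essentially the same route as the paper: reduce the double infimum via Lemma \ref{lemma::min_angle}, use monotonicity of $\cos^{-1}$ to turn the problem into maximizing the Rayleigh quotient of $\mB\mD\mB$ over $\B$, and invoke Theorem \ref{theorem::max_concentrated_vectors} to identify the maximizer $\bpsi_1$ and the value $\sigma_{max}^2(\mB\mD)$. The extra details you supply (the identity $\mB\mD\mB=(\mB\mD)(\mB\mD)^*$, the explicit attainment check, and the compactness remark for existence) are all sound and only make explicit what the paper leaves implicit.
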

Notice that, under perfect localization conditions, i.e. Theorem \ref{theorem_unit_eigenvalue}, $\sigma_{max} \left( \mB \mD \right) = 1$ and the minimum angle is $0$, thus implying that there are some vectors which lie in both subspaces $\B$ and $\D$. Next, we derive, without loss of generality, which values of $\beta$ are attainable for every choice of $\alpha$, assuming unit norm vectors $\bff$.

The case $\alpha = 1$ means that all the energy of signal is supported only on $\S$. According to (\ref{slep_func:max_problem}) and Lemma \ref{theorem::sing_val_bd_db} the minimally concentrated on $\F$ vector from $\D$ is the eigenvector of $\mD \mBc \mD$, corresponding to the eigenvalue $\sigma^2_{max}(\mD \mBc)$, while the maximally concentrated on $\F$ vector from $\D$ is the eigenvector of $\mD \mB \mD$, corresponding to the eigenvalue $\sigma^2_{max}(\mD\mB)$. Therefore 
\begin{equation}
\inf_{ \substack{ \bff \in \D \\ \|\bff\|_2 =1}} \beta^2 = 1-\sigma^2_{max}\left( \mD \mBc \right) 
\end{equation}
and 
\begin{equation}
\sup_{ \substack{ \bff \in \D \\ \|\bff\|_2 =1}} \beta^2 = \sigma^2_{max} \left( \mD \mB \right)
\end{equation}
for the case $\alpha = 1$. All the values in between are attainable by the function $\bff = \sum_{i=1}^K a_i \bpsi_i$ with $\sum _{i=1}^K a^2_i = 1$, where $\{\bpsi_i\}_{i=1..K}$ are the eigenvectors of $\mD \mB \mD$ belonging to $\D$ and corresponding to the eigenvalues from the interval $[1-\sigma^2_{max}\left( \mD\mBc \right),  \sigma^2_{max}\left( \mD\mB \right)]$.

Next let us consider the behavior of $\beta$ for $\alpha$ belonging to  $\alpha \in \left( 0, \, 1  \right)$. First, we will show that
\begin{equation}
\label{eq::sum_acos_goal}
\cos^{-1} \alpha + \cos^{-1} \beta \geq \cos^{-1} \sigma_{max} \left( \mB \mD \right).
\end{equation}
We can decompose any vector $\bff$ as 
\begin{equation}
\label{eq::f_equal_BplusD}
\bff = \lambda \mD \bff + \gamma \mB \bff + \bg,
\end{equation}
where $\bg$ is a vector orthogonal to both $\B$ and $\D$ and again we consider a unit norm $\bff$ with $\| \mD \bff \|_2 = \alpha$. Our goal is to find the nearest vector to $\bff$ in the space spanned by $\mD \bff$ and $\mB \bff$.

First, we calculate the inner products of (\ref{eq::f_equal_BplusD}) successively with $\bff, \mD \bff, \mB \bff$ and $\bg$ and arrive to the system of equations
\begin{align}
\left\{
\begin{array}{ll}
1 &= \lambda \alpha^2 + \gamma \beta^2 + \langle \bg, \bff \rangle, \\
\alpha^2 &= \lambda \alpha^2 + \gamma \langle \mB \bff, \mD \bff \rangle, \\
\beta^2 &= \lambda \langle \mD \bff, \mB \bff \rangle + \gamma \beta^2,  \\
\langle \bff, \bg \rangle &= \langle \bg, \bg \rangle. 
\end{array}
\right.
\end{align}
After eliminating $\langle \bg , \bff \rangle, \lambda$ and $\gamma$ from the above system we arrive to
\begin{align}
\label{eq::big_equation}
\beta^2 - 2 \Re \langle \mD \bff, \mB \bff \rangle = - \alpha^2  + \left( 1 - \frac{\abs{\langle \mD \bff , \mB \bff \rangle}^2}{\alpha^2 \beta^2} \right) \nonumber \\
- \| \bg \|_2^2 \left( 1 - \frac{\abs{\langle \mD \bff , \mB \bff \rangle}^2}{\alpha^2 \beta^2} \right).
\end{align}
According to (\ref{eq::angle_definition}) we define 
\begin{equation}
\cos \theta = \Re \frac{\langle \mD \bff, \mB \bff \rangle}{\| \mD \bff \|_2 \| \mB \bff \|_2}.
\end{equation}
Because we measure the angle $\theta$ between $\mD \bff \in \D$ and $\mB \bff \in \B$, according to Theorem \ref{theorem::angle min},
\begin{equation}
\label{eq::theta_min_BD}
\theta \geq \cos^{-1} \sigma_{max} \left( \mB \mD \right).
\end{equation}
Due to the fact that
\begin{equation}
\alpha \beta \cos \theta = \Re \langle \mD \bff, \mB \bff \rangle \leq \abs{ \langle \mD \bff, \mB \bff \rangle } \leq \alpha \beta,
\end{equation}
we can write
\begin{equation}
\label{eq::cos_theta_inequality}
0 \leq 1 - \frac{\abs{\langle \mD \bff, \mB \bff \rangle}^2}{\alpha^2 \beta^2} \leq 1 - \cos^2 \theta.
\end{equation}
In (\ref{eq::big_equation}), after introduction of $\theta$, completion of the square on the left-hand side and use of (\ref{eq::cos_theta_inequality}), we finally arrive to
\begin{equation}
\label{eq::cos_sum}
\left( \beta - \alpha \cos \theta \right)^2 \leq \left( 1 - \alpha^{2} \right) \sin^2 \theta,
\end{equation}
where equality can be achieved if and only if $\bg = \b0$ and $\langle \mD \bff, \mB \bff \rangle$ is real. Next, from (\ref{eq::cos_sum}) we can write
\begin{equation}
\beta \leq \cos \left( \theta - \cos^{-1} \alpha \right),
\end{equation}
from which it follows, using bound (\ref{eq::theta_min_BD}), that
\begin{equation}
\label{eq::acos_final_expr}
\beta \leq \cos \left( \cos^{-1} \sigma_{max} \left( \mB \mD \right)  - \cos^{-1} \alpha \right),
\end{equation}
and we immediately come to (\ref{eq::sum_acos_goal}). 
Equality in (\ref{eq::acos_final_expr}) is achieved by
\begin{equation}
\label{eq::extreme_energy_function appendix}
\bff' = p \bpsi_1 + q \mD \bpsi_1,
\end{equation}
with
\begin{equation}
\label{eq::p_def}
p = \sqrt{ \frac{1-\alpha^2}{1 - \sigma^2_{max} \left( \mB \mD \right)}},
\end{equation}
\begin{equation}
\label{eq::q_def}
q = \frac{\alpha}{\sigma_{max} \left( \mB \mD \right)} - \sqrt{\frac{1-\alpha^2}{1 - \sigma^2_{max} \left( \mB \mD \right)}}
\end{equation}
and where $\bpsi_1$ is an eigenvector of $\mB \mD \mB$ corresponding to the eigenvalue $\sigma^2_{max} \left( \mB \mD \right)$. In (\ref{eq::p_def}) and (\ref{eq::q_def}) it was supposed that $\sigma^2_{max} \left( \mB \mD \right) < 1$, because in the case $\sigma^2_{max} \left( \mB \mD \right)=1$ there exists at least one vector belonging to both $\B$ and $\D$, therefore point with $\alpha = 0$ and $\beta = 1$ belongs to $\Gamma$.

To demonstrate that $\bff'$ stays on the boundary of the uncertainty region $\Gamma$, we first rewrite (\ref{eq::acos_final_expr}) as
\begin{equation}
\label{eq::uncertainty less equal without arccos}
\beta \leq \alpha\, \sigma_{max}\left( \mB \mD \right) + \sqrt{(1-\alpha^2)(1-\sigma_{max}^2 \left( \mB \mD \right))}.
\end{equation}
Vertex and frequency energy concentrations for $\bff'$ are given by
\begin{align}
\alpha_{f'} &= \|\mD \bff' \|_2 = (p+q)\sigma_{max}\left( \mB \mD \right),\\
\beta_{f'} &= \| \mB \bff' \|_2 = p + q \sigma^2_{max}\left( \mB \mD \right).
\end{align}
Substituting $\alpha_{f'}$ and $\beta_{f'}$ in (\ref{eq::uncertainty less equal without arccos}) we immediately obtain equality.

Applying the same steps between (\ref{eq::sum_acos_goal}) and (\ref{eq::acos_final_expr}) to the operators $\mB \mDc$, $\mBc \mD$ and $\mBc \mDc$, we obtain the three remaining inequalities in (\ref{eq::uncertainty_region_Gamma}). For $\beta = 1$ and $\alpha \in \left[ 1 - \sigma^2_{max} \left( \mB \mD \right), \sigma^2_{max} \left( \mB \mD \right) \right] $  the concentrations are achievable by the eigenvectors of $\mB \mD \mB$ which belong to $\B$ and their linear combinations. Continuing by analogy one can show that all the values $\alpha$ and $\beta$ belonging to the border of $\Gamma$ (see Fig. \ref{fig:uncertainty}) are achievable. All the points inside $\Gamma$ are achievable by the functions build up from different combinations of left and right singular vectors of $\mB \mD$, $\mB \mDc$, $\mBc \mD$ and $\mBc \mDc$.

\section{Maximally concentrated dictionary for different concentrations in vertex and frequency}
Let us consider the following optimization problem
\begin{equation}
\label{eq::max_sum_energy_opt_problem}
\begin{aligned}
\bff_i &&= \  & \underset{\bff_i: \ \| \bff_i\|_2 = 1}{\arg \max} \ \gamma \| \mB  \bff_i \|^2_2 + (1-\gamma) \| \mD \bff_i \|^2_2  \\
&&& \text{s.t.} \ \ \langle \bff_i, \bff_j \rangle = 0, \ \ j\neq i,
\end{aligned}
\end{equation}
with parameter $0 < \gamma < 1$ controlling the relative energy concentration in vertex and frequency domains. The solution of (\ref{eq::max_sum_energy_opt_problem}) is given by the eigenvectors of the self-adjoint operator
\begin{equation}
\label{eq::gamma B + (1-gamma)D}
\left(\gamma \mB + (1-\gamma)\mD\right) \bff_i = \omega_i \bff_i,
\end{equation}
according to the Rayleigh-Ritz theorem. Each value of $\gamma$ corresponds to one point on the curve (\ref{beta vs alpha}) in a way that the vector $\bff_1$ maximizing (\ref{eq::max_sum_energy_opt_problem}) has energy concentrations $\left( \alpha_f, \beta_f \right)$ lying on the curve (\ref{beta vs alpha}). Hence, the solution of (\ref{eq::max_sum_energy_opt_problem}) is achieved at the tangent point of the curve (\ref{beta vs alpha}) with the line 
\begin{equation}
\label{eq::tangent line}
(1-\gamma) \alpha^2 + \gamma \beta^2 = \omega_1. 
\end{equation}
Solving the above geometric problem we obtain the pair $\alpha, \beta$ given by
\begin{align}
\label{eq::alpha f1}
\alpha_{f_1} &= \sqrt{\frac{1}{2} \left( \frac{2\gamma \left( \sigma^2_{max} -1 \right) + 1}{\sqrt{(1-2\gamma)^2-4\gamma (\gamma-1)\sigma^2_{max}}} +1 \right)}, \\
\label{eq::beta f1}
\beta_{f_1} &= \alpha_{f_1} \, \sigma_{max} + \sqrt{(1-\alpha_{f_1}^2)(1-\sigma_{max}^2 )},
\end{align}
where $\sigma_{max} := \sigma_{max}\left( \mB \mD \right)$. The eigenvalue $\omega_1$ is provided by (\ref{eq::tangent line}), i.e. $\omega_1 =(1-\gamma) \alpha_{f_1}^2 + \gamma \beta_{f_1}^2$. The first vector of the solution of (\ref{eq::max_sum_energy_opt_problem}), $\bff_1$, may be expressed in terms of $\bpsi_1$ simply by putting $\alpha_{f_1}$ into (\ref{eq::extreme_energy_function appendix}), (\ref{eq::p_def}) and (\ref{eq::q_def}).

\begin{figure}
\vspace{-0.3cm}
\begin{minipage}[b]{1.0\linewidth}
  \centering
 \centerline{\includegraphics{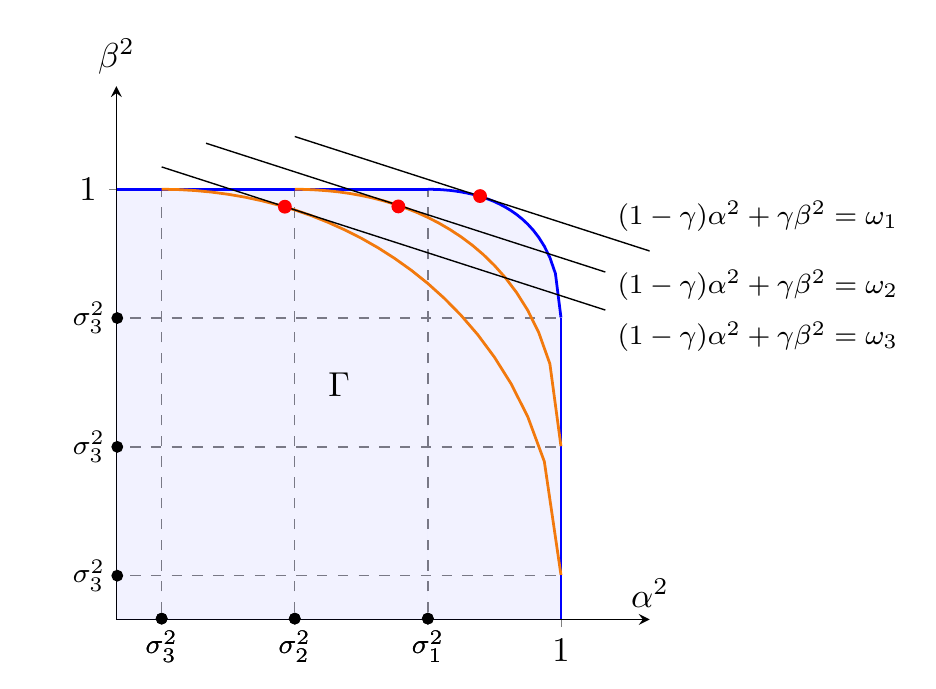}}
\end{minipage}
\vspace*{-0.6cm}
\caption{\small Position of the first three maximally concentrated vectors in the region $\Gamma$ for $\gamma = 0.75$.}
\vspace*{-0.2cm}
\label{fig:max concentrated vectors}
\end{figure}

Moreover, the first $K:=\rank \mB \mD$ orthogonal vectors $\{ \bff_i \}$ giving the solution of (\ref{eq::max_sum_energy_opt_problem}) may be constructed by substituting various $\sigma^2_{i}\left( \mB \mD \right)$ instead of $\sigma^2_{max}\left( \mB \mD \right)$ into (\ref{eq::p_def}), (\ref{eq::q_def}), (\ref{eq::alpha f1}) and then into (\ref{eq::extreme_energy_function appendix}).
To demonstrate this we consider vectors  $\bff_i$ of the form
\begin{equation}
\label{eq::extreme_energy_function_i general appendix}
\bff_i = p_i \bpsi_i + q_i \mD \bpsi_i,
\end{equation}
with
\begin{equation}
\label{eq::p_def general appendix}
p_i = \sqrt{ \frac{1-\alpha_i^2}{1 - \sigma^2_i }},
\end{equation}
\begin{equation}
\label{eq::q_def general appendix}
q_i = \frac{\alpha}{\sigma_i } - \sqrt{\frac{1-\alpha_i^2}{1 - \sigma_i^2}}
\end{equation}
and where
\begin{align}
\label{eq::alpha f1 general appendix}
\alpha_i = \sqrt{\frac{1}{2} \left( \frac{2\gamma \left( \sigma^2_i -1 \right) + 1}{\sqrt{(1-2\gamma)^2-4\gamma (\gamma-1)\sigma^2_i}} +1 \right)}.
\end{align}
For the sake of shortness we used $\sigma_i := \sigma_i \left( \mB \mD \right)$ above. First, it is easy to see that for  different $i,j=1,\ldots,K$ the vectors given by (\ref{eq::extreme_energy_function_i general appendix}) are mutually orthogonal. Secondly we want to demonstrate that vectors of the form (\ref{eq::extreme_energy_function_i general appendix}) are the eigenfunctions of $\left(\gamma \mB + (1-\gamma)\mD\right)$. We show this by direct substitution, i.e. we have
\begin{align}
(\gamma \mB + (1-\gamma)\mD) \bff_i  = & \, (\gamma p_i + \gamma \sigma_i^2 q_i) \bpsi_i \\ 
 &+ (1-\gamma)(p_i + q_i) \mD \bpsi_i. \nonumber
\end{align}
Thus $\bff_i$ is an eigenvector of $(\gamma \mB + (1-\gamma)\mD)$ if and only if the following equality holds true
\begin{equation}
\gamma + \gamma \sigma_i^2 \frac{q_i}{p_i} = (1-\gamma)\left(1 + \frac{p_i}{q_i} \right).
\end{equation}
Substituting $p_i$ and $q_i$ from (\ref{eq::p_def general appendix}) and (\ref{eq::q_def general appendix}) we easily demonstrate that equality holds. Eigenvalues $\omega_i$ are given by
\begin{equation}
\omega_i = (1-\gamma)\left(1 + \frac{p_i}{q_i} \right).
\end{equation}
In Fig. \ref{fig:max concentrated vectors} we provide an illustration showing the vertex and frequency energy concentrations of the first three $\bff_i$ for the case of $\gamma = 0.75, \sigma_1^2 = 0.85, \sigma_2^2 = 0.7$ and $\sigma_3^2 = 0.55$. The corresponding eigenvalues of (\ref{eq::gamma B + (1-gamma)D}) in this case were found to be $\omega_1 = 0.971036, \omega_2 = 0.94017$ and $\omega_3 = 0.906971$.

Using expression (\ref{eq::extreme_energy_function_i general appendix}) we have found the first $K$ vectors maximizing (\ref{eq::max_sum_energy_opt_problem}). The remaining $N-K$ vectors can be expressed in a similar way through the maximally concentrated eigenvectors of the operators $\mBc \mD \mBc$, $\mB \mDc \mB$ and $\mBc \mDc \mBc$.

\ifCLASSOPTIONcaptionsoff
  \newpage
\fi

\balance
\bibliographystyle{IEEEbib}
\bibliography{refs}

\clearpage
\end{document}